\definecolor{codegreen}{rgb}{0,0.6,0}
\definecolor{codegray}{rgb}{0.5,0.5,0.5}
\definecolor{codepurple}{rgb}{0.58,0,0.82}
\definecolor{backcolour}{rgb}{0.95,0.95,0.92}
\lstdefinestyle{mystyle}{
	commentstyle=\color{codegreen},
	keywordstyle=\color{magenta},
	numberstyle=\tiny\color{codegray},
	stringstyle=\color{codepurple},
	basicstyle=\ttfamily\footnotesize,
	breakatwhitespace=false,         
	breaklines=true,                 
	captionpos=b,                    
	keepspaces=true,                 
	numbers=left,                    
	numbersep=5pt,                  
	showspaces=false,                
	showstringspaces=false,
	showtabs=false,                  
	tabsize=2
}
\DeclareMathAlphabet{\mathpzc}{OT1}{pzc}{m}{it}
\newcommand{\sfsymbol}[1]{\textsf{\upshape {#1}}}
\newcommand{\ttsymbol}[1]{\texttt{\upshape {#1}}}
\newcommand{\Varso}{\sfsymbol{Vars}_o}
\newcommand{\Varsu}{\sfsymbol{Vars}_u}
\newcommand{\mydot}{\text{{\Large\textbf{.}}~}}
\newcommand{\morespace}[1]{~{}#1{}~}
\newcommand{\qmorespace}[1]{\quad{}#1{}\quad}
\newcommand{\eeq}{~{}={}~}
\newcommand{\qeq}{\quad{}={}\quad}
\newcommand{\GG}{~{}>{}~}
\newcommand{\ggeq}{~{}\geq{}~}
\newcommand{\iin}{~{}\in{}~}
\newcommand{\nnotin}{~{}\not\in{}~}
\newcommand{\tto}{\morespace{\to}}
\newcommand{\ttriangleright}{\morespace{\triangleright}}
\newcommand{\qtriangleright}{\qmorespace{\triangleright}}
\newcommand{\wpsymbol}{\sfsymbol{wp}}
\renewcommand{\wp}[2]{\wpsymbol \left\llbracket {#1} \right\rrbracket \left( {#2} \right)}
\newcommand{\wpC}[1]{\wpsymbol \, \left\llbracket {#1} \right\rrbracket}
\newcommand{\SKIP}{\ttsymbol{skip}}
\newcommand{\AssignSymbol}{\coloneqq}
\newcommand{\ASSIGN}[2]{\ensuremath{#1 \AssignSymbol #2}}
\newcommand{\CHOOSE}[2]{\ASSIGN{#1}{\texttt{sample}\left(#2\right)}}
\newcommand{\BMODEL}[2]{\ensuremath{\mathit{Pr}_{#1}\left(#2\right)}}
\newcommand{\SMODEL}[2]{\ensuremath{#1\left(#2\right)}}
\newcommand{\COMPOSE}[2]{\ensuremath{{#1}{\,\fatsemi}~ {#2}}}
\newcommand{\IFSYMBOL}{\ensuremath{\textnormal{\texttt{if}}}}
\newcommand{\ELSESYMBOL}{\ensuremath{\textnormal{\texttt{else}}}}
\newcommand{\ITE}[3]{\ensuremath{\IFSYMBOL\,\left(\, {#1} \,\right)\,\left\{\, {#2} \,\right\}\,\ELSESYMBOL\,\left\{\, {#3} \,\right\}}}
\newcommand{\INFER}[3]{\ensuremath{\texttt{infer}\left(\, {#1} \,\right)\left\{\, {#2} \,\right\}\,\ELSESYMBOL\,\left\{\, {#3} \,\right\}}}
\newcommand{\observesymbol}{\textnormal{\texttt{observe}}}
\newcommand{\OBSERVE}[1]{\ensuremath{\observesymbol \,(#1)}}
\newcommand{\OBSERVEASSIGN}[2]{\ensuremath{#1 \coloneqq \observesymbol \,\left(#2\right)}}
\newcommand{\WHILESYMBOL}{\ensuremath{\textnormal{\texttt{while}}}}
\newcommand{\WHILEDO}[2]{\ensuremath{\WHILESYMBOL \, \left(\, {#1} \,\right)\left\{\, {#2} \,\right\}}}
\newcommand{\BOUNDEDWHILE}[3]{\ensuremath{\WHILESYMBOL^{{#1}}\:(#2)\:\{#3\}}}
\newcommand{\PBLIMP}{\sfsymbol{\mbox{pBLIMP}}\xspace}   
\newcommand{\Nats}{\ensuremath{\mathbb{N}}\xspace}
\newcommand{\PosRealsInf}{\mathbb{R}_{\geq 0}^\infty}
\newcommand{\iverson}[1]{\left[ {#1} \right]}
\newcommand{\statesubst}[2]{\left[ {#1} \mapsto {#2}\right]}
\newcommand{\monus}{
	\mathbin{
		\vphantom{+}
		\text{
			\mathsurround=0pt 
			\ooalign{
				\noalign{\kern-.5ex}
				\hidewidth$\smash{\cdot}$\hidewidth\cr 
				\noalign{\kern.5ex}
				$-$\cr 
			}%
		}%
	}%
}
\newcounter{blacktrianglelefteq}
\newcounter{leftsliceeq}
\newcommand{\nneq}{~{}\neq{}~}
\newcommand{\pplus}{~{}+{}~}
\newcommand{\qmid}{\;\;|{}\;\;}
\definecolor{webgreen}{rgb}{0,.5,0}
\colorlet{myblue}{DodgerBlue3}
\newcommand{\blue}[1]{\textcolor{myblue}{#1}}
\newcommand{\orange}[1]{\textcolor{DarkOrange3}{#1}}
\definecolor{mydarkorange}{RGB}{220,80,0}
\colorlet{cemphcolor}{mydarkorange}
\newcounter{computationarrowsone}
\newcounter{computationarrowstwo}
\newcounter{sarrow}
\newcommand{\lfp}{\ensuremath{\textnormal{\sfsymbol{lfp}}~}}
\newlength{\negph@wd}
\DeclareRobustCommand{\negphantom}[1]{%
  \ifmmode
    \mathpalette\negph@math{#1}%
  \else
    \negph@do{#1}%
  \fi
}
\newcommand{\negph@math}[2]{\negph@do{$\m@th#1#2$}}
\newcommand{\negph@do}[1]{%
  \settowidth{\negph@wd}{#1}%
  \hspace*{-\negph@wd}%
}
\newcommand{\lfpX}[1]{\ensuremath{\text{lfp}\ X . #1}}
\newcommand{\diverge}{\ensuremath{{\textnormal{\textsf{diverge}}}}}
\newcommand{\listtraverse}[1]{C}
\newcommand{\extendedVersion}{TODO}
\begin{document}

	\title{Programming and Reasoning in Partially Observable Probabilistic Environments}
	\titlerunning{Probabilistic Belief Programming}
	%
	\author{Tobias Gürtler\inst{1} \and
		Benjamin Lucien Kaminski\inst{1,2}\orcidID{0000-0001-5185-2324}}
	\authorrunning{T. Gürtler and B. Kaminski}
	
	\institute{Saarland University Saarland Informatics Campus, Saarbrücken, Germany\\
		\email{\{guertler, kaminski\}@cs.uni-saarland.de} \and
		University College London, London, UK}
	\maketitle              
	\begin{abstract}
		Probabilistic partial observability is a phenomenon occuring when computer systems are deployed in environments that behave probabilistically and whose exact state cannot be fully observed.
		In this work, we lay the theoretical groundwork for a probabilistic belief programming language -- \PBLIMP{} -- which maintains a probability distribution over the possible environment states, called a belief state.
		\PBLIMP has language features to symbolically model the behavior of and interaction with the partially observable environment and to condition the belief state based on explicit observations.
		In particular, \PBLIMP programs can perform state estimation and base their decisions (i.e. the control flow) on the likelihood that certain conditions hold in the current state.
		Furthermore, \PBLIMP features unbounded loops, which sets it apart from many other probabilistic programming languages.
		For reasoning about  \PBLIMP programs and the situations they model, we present a weakest-precondition-style calculus~($\wpsymbol$) that is capable of reasoning about \emph{unbounded} loops.
		Soundness of our $\wpsymbol$ calculus is proven with respect to an operational semantics.
		We further demonstrate how our $\wpsymbol$ calculus reasons about (unbounded) loops with loop invariants.
		
		\keywords{partial observability \and belief programming \and weakest
			preexpectation \and probabilistic programming}
	\end{abstract}


\section{Introduction}
Probabilistic partial observability is a phenomenon which occurs when computer systems are deployed in an environment that behaves probabilistically and whose exact state cannot be fully observed~\cite{smallwood1973optimal}. 
Partial observability arises naturally when computer systems interact with the real world:
For example, spacecrafts have to estimate their current position and speed based on inaccurate or noisy measurements \cite{burkhart1996adaptive}, a problem which also occurs in autonomous robot navigation~\cite{simmons1995probabilistic} or target tracking~\cite{bilik2010maneuvering}.
Beyond cyber-physical systems, partial observability also affects fields like medicine, where it is unobservable whether a patient is truly infected with a disease or not.
Instead, one has to rely on tests which may randomly give wrong results.

All in all, we are dealing with scenarios with probabilistic partial observability:
Not only is the true state of the system influenced by randomness, but moreover can the true state not be fully observed.
A common approach to partial observability in probabilistic domains is to model such systems by a \emph{partially observable Markov decision process} (POMDP)~\cite{cassandra1998survey}, but these approaches require the state space of the system to be modeled explicitly.

Probabilistic programming languages (PPLs) are well suited to model partial observability \cite{agentmodels} compactly, as they can usually reason about conditional probabilities. 
PPLs such as Anglican \cite{tolpin2016design}, Dice \cite{holtzen2020scaling}, Psi \cite{gehr2020lpsi}, WebPPL \cite{dippl} or Stan \cite{carpenter2017stan} -- amongst many others -- offer a symbolic way to model partially observable environments, and they offer a wide variety of inference algorithms to reason about the state of the unobservable environment.
However, modern PPLs largely lack tools to soundly reason about the behavior of unbounded loops, and the exact analysis of probabilistic programs with unbounded loops remains an open problem \cite{junges2024scalable}.
To address this, we build upon the (non-probabilistic) work of Atkinson \& Carbin~\cite{atkinson2020programming} and present the probabilistic belief programming language \PBLIMP which allows us to model controllers for partially observable environments.
Additionally, we present a novel predicate transformer style $\wpsymbol$-calculus, which can symbolically reason about \PBLIMP programs featuring unbounded loops.

\noindent
\PBLIMP can be categorized as a model PPL with both nested inference and unbounded loops for discrete probabilistic environments.
At runtime, the program dynamically keeps track of a probabilistic belief state, i.e. a \emph{probability distribution} over true states. 
In order to model partially observable probabilistic environments, \PBLIMP offers several dedicated language features.

In particular, \PBLIMP features sampling from discrete probability distributions  to model the behavior of the unobservable environment.
\PBLIMP also includes a statement to make sound inferences about the state of the environment, and to use the result of those inferences to guide the control flow of the belief program.
Finally, under partial observability, it is still possible to obtain partial information about the true state of the environment, and so \PBLIMP includes the ability to explicitly model partial observations of the unobservable environment.
Making such observations has two effects:
For one, the execution randomly branches into a successor belief state for each possible observation, and secondly each successor belief state is conditioned with the corresponding observation.
This branching behavior differentiates \PBLIMP from other PPLs, as probabilistic programs usually reason about a single, fixed observation.
In contrast, \PBLIMP allows developers to verify the behavior of a program across all possible sequences of observations which the program may encounter.

To reason about \PBLIMP programs, we develop our main contribution:
A partial-observability-tailored weakest preexpectation ($\wpsymbol$) calculus \`{a} la McIver and Morgan \cite{DBLP:series/mcs/McIverM05}, allowing us to reason about the expected value of some quantity at termination. 
This weakest preexpectation calculus offers a novel way to symbolically reason about probabilistic programs with nested inference, and furthermore enables reasoning using \textit{loop invariants}.
To our knowledge our weakest preexpectation calculus is the first to handle nested inference.

\subsubsection{Contributions.}
Our work lays the theoretical groundwork to lift belief programming
\cite{atkinson2020programming} to probabilistic settings, and enables further research on the symbolic analysis of partial observability. 
We summarize our contributions as follows:
\begin{itemize}
	\item \textit{Probabilistic Belief Programming:} We present \PBLIMP, a model probabilistic programming language featuring nested inference and unbounded loops.
	
	\item \textit{Weakest Preexpectation Calculus:} We present a predicate-transformer-style calculus ($\wpsymbol$) to reason about the correctness of \PBLIMP programs. 
	
	\item \textit{Effective $\wpsymbol$ Calculations:} We introduce an expressive grammar for the properties that the above-mentioned $\wpsymbol$ calculus uses.
	Under certain reasonable assumptions, this grammar allows for an effective calculation of $\wpsymbol$.
\end{itemize}
	

\section{An Overview of \PBLIMP}\label{ExampleSection}
Consider the (ongoing) treatment of a patient suffering from the fictional disease \emph{exemplitis}.
Due to prior data, we know that the patient is sick with a probability of $90\%$, and we further know that the medicine we prescribe has a $25\%$ chance of curing the disease. 
Our goal is to prescribe medicine until the patient is cured.

Unfortunately, the true status of having exemplitis is unobservable.
Instead, we have to rely on a test which returns correct results with $95\%$ probability.
This imperfect test naturally introduces probabilistic partial observability to our problem, as the test only reveals partial information about the disease. 
Our goal can hence only be to prescribe medicine until we are \enquote{reasonable certain} that the patient is cured, i.e.\ we continue treatment until the chance for the patient to be sick is below some parameter~\texttt{q}.
We assume that this is the only criterion to discharge the patient, and thus they could remain in medical care arbitrarily long.
\Cref{CaseStudyLoopy} models the scenario above as a \PBLIMP program.%

\begin{wrapfigure}[15]{l}{0.6\textwidth}%
	\vspace*{-1.36\intextsep}%
	\begin{lstlisting}
d = sample(0.9|1> + 0.1|0>);
inCare = true;
while(inCare){
	infer(p(d = 1) > q) {
		d = sample(0.75|d> + 0.25|0>);
		t = sample(0.95|d> + 0.05|1-d>);
		observe t;
	} else {
		inCare = false;
	}
}
\end{lstlisting}
	\caption{Probabilistic belief program with a loop modeling the treatment of a patient until they are likely cured from a disease.
	}
	\label{CaseStudyLoopy}
\end{wrapfigure}%
\paragraph{Sampling.}
The true state of having exemplitis is modeled by the variable \texttt{d}.
Line~1 uses a sample statement to model that the patient initially actually has exemplitis with a probability of $90\%$.
The value of \texttt{d} is not directly accessible to the program. 
Instead, the \PBLIMP program updates the current belief state to a probability distribution where the variable assignment $\{\texttt{d} \mapsto 1\}$ has probability $\frac{9}{10}$ and \mbox{$\{\texttt{d} \mapsto 0\}$} has probability $\frac{1}{10}$.
Notably, the program is still \emph{deterministically} in a \emph{single} belief state.
We then set a boolean flag \texttt{inCare} to represent that the patient is in medical care, and enter an unbounded loop which models their treatment.

\paragraph{Inference.}
In each loop iteration we decide whether the patient should continue treatment or be discharged.
This is modeled with an infer statement, which can be seen as a generalized if statement: 
Infer statements use the current belief state to determine the probability that a condition holds, and branch based on that value.
In \Cref{CaseStudyLoopy}, the infer statement in Line 4 infers the likelihood that \texttt{d = 1} holds, i.e. the probability that the patient is still sick.
If the inferred probability is higher than the parameter~\texttt{q}, then the patient continues treatment in Lines 5-7, otherwise we terminate the loop by setting \texttt{inCare} to \texttt{false} in Line~9 .

\paragraph{Observation.}
Next, Lines 5-7 model the treatment of the patient.
In Line 5, the patient is given medicine which has a $25\%$ chance of curing exemplitis and setting \texttt{d} to \texttt{0}.
Afterwards, Line 6 and 7 model an imperfect test:
Line~6 describes the test's behavior, namely it has a $95\%$ chance to return the true status of having exemplitis, and a $5\%$ chance of a wrong result.
By nature of the sample statement, the true value of neither \texttt{d} nor \texttt{t} is observable.
In Line 7, we now \emph{explicitly observe} the test result \texttt{t}, which has two effects:
First, the execution \emph{probabilistically branches} into two possible belief states, one per possible outcome of the test.
Second, in both belief states the value of \texttt{t} is now \emph{determined} and in each belief state the underlying probability that the patient is sick is automatically \emph{conditioned} on the observed test outcome.

This branching behavior is non-standard.
In most PPLs, observe statements would only condition the program behavior with a single test result, i.e. the developer has to choose if there will be a positive or a negative test.
In contrast, \PBLIMP branches and considers all possible test results and their likelihood to occur.
Hence, we can evaluate the effectiveness of our treatment strategy across all possible test results, and even all possible sequences of test results. 

\paragraph{Verification}
We can now ask an important question about the program in \Cref{CaseStudyLoopy}:
Across all possible sequences of test results, how likely are we to discharge a sick patient?
Put differently, what is the probability that the program terminates with \verb|d = 1|?
Many tools have been developed to answer such queries, but our example includes several features which complicate inference:

For one, the infer statement forces us to perform \textit{nested inference}, meaning we need to perform inference about programs which may themselves perform inferences. 
In practice, nested queries naturally arise from programs which perform probabilistic inference as a sub-routine, for example to track moving objects \cite{cheng2025inference,randone2024inference}, or to perform state estimation \cite{thrun2002probabilistic}.
Reasoning about programs with such nested queries in turn requires nested inference.
While tools for nested inference have been developed \cite{dippl,gehr2016psi,tolpin2016design}, it remains non-standard; e.g.\ Dice \cite{holtzen2020scaling} or SOGA \cite{randone2024inference} do not support nested queries. 
Additionally, our example included the \emph{parameterized} threshold \texttt{q}, requiring to perform \emph{symbolic} inference and the result of our analysis needs to be parameterized in \texttt{q}.
This leads us to PPLs like PSI \cite{gehr2020lpsi}, which implements exact symbolic inference methods and can answer our query. 
However, PSI does not support unbounded loops.
In fact, many, if not most PPLs do not support unbounded loops and instead opt to approximating them via bounded loops. 
This, however, yields no guarantees on the quality of the approximation \cite{torres2024iteration}.

To address this gap, we present a weakest preexpectation calculus $\wpsymbol$  \`{a} la McIver and Morgan \cite{DBLP:series/mcs/McIverM05}, which can perform exact symbolic inference on programs with loops.
The $\wpsymbol$-calculus enables reasoning about the expected value of some quantity at termination.
Furthermore, programs with unbounded loops can be analyzed using \emph{loop invariants}, which provide sound upper bounds on the behavior of \PBLIMP programs.
For the program from \Cref{CaseStudyLoopy}, we can formally verify that the probability to discharge a sick patient is always less than \texttt{q}.


\section{\PBLIMP{} --  Syntax and Semantics}

\subsection{Syntax of \PBLIMP}

\begin{figure}[t]
		\begin{align*}
			E_o &\qmorespace{\Coloneqq}\, x_o\in \mathit{\Varso}\,|\, E_o\oplus E_o\,|\, c \in \mathbb{N} & \oplus \in \{+, -, \cdot, /\}\\
			E_u &\qmorespace{\Coloneqq}\, x_u\in \mathit{\Varsu}\,|\, E_u\oplus E_u \,|\, E_o\\[.5em] 
			P_o &\qmorespace{\Coloneqq}  E_o \sim E_o  \,|\, P_o \land P_o \,|\, P_o \lor P_o \,|\, !P_o \,|\, b \in \{1, 0\} & \sim\, \in \{<,\leq, =, \neq, \geq, >\}\\
			P_u &\qmorespace{\Coloneqq} E_u\sim E_u \,|\, P_u \land P_u \,|\, P_u \lor P_u \,|\, !P_u \,|\, P_o 
		\end{align*}
		\begin{align*}
			C \quad{}\Coloneqq{}\quad &\, \SKIP \qmid \COMPOSE{C}{C}  \qmid \ASSIGN{x_o}{E_o} \qmid \ITE{P_o}{C}{C} \qmid \WHILEDO{P_o}{C}\\
			&\qmid \OBSERVEASSIGN{y_o}{x_u} \qmid \INFER{p(P_u) \in i}{C}{C}\\
			& \qmid \CHOOSE{x_u}{f}~, \quad \textnormal{where }f\colon \Sigma \rightarrow \mathcal{D}(\mathbb{N})
		\end{align*}
	\caption{Formal definition of the model language \PBLIMP.}\label{syntax_PBLIMP}
\end{figure}

\Cref{syntax_PBLIMP} describes the syntax of \PBLIMP.
Notably, we consistently delineate between \textit{observable} and \textit{unobservable} constructs, as seen e.g.\ with variables: 
Variables are partitioned into observable variables $x_o \in \Varso$ and unobservable variables $x_u \in \Varsu$. 
Observable variables are like variables in most programming languages: their values are freely accessible.
Values of unobservable variables are not freely accessible, and not all program statements may use them.

\subsubsection{Expressions and Propositions.} 

Similarly to variables, we use $E_o$ to refer to observable expressions and $E_u$ to refer to unobservable expressions. 
Observable expressions are either a numeric constant $c\in \mathbb{N}$, an \emph{observable} variable~$x_o$, or are recursively constructed using the standard binary operators $+,-,\cdot,/$ where subtraction truncates at~0. 
Notably, \emph{no un}observable variable may occur in an observable expression.
Unobservable expressions are constructed in the same way, but they \emph{may} contain \emph{un}observable variables $x_u$.

Propositions are Boolean expressions comparing (un)observable expressions.
Analogous to expressions, we distinguish observable propositions $P_o$ and unobservable propositions $P_u$, see \Cref{syntax_PBLIMP}.

\subsubsection{\PBLIMP Statements.}

The syntax of \PBLIMP statements is defined in \Cref{syntax_PBLIMP}, and we use~$C$ as a metavariable to range over \PBLIMP statements. 
\PBLIMP is based on the imperative model programming language \sfsymbol{IMP} \cite{winskel1993formal}, and hence contains all standard imperative programming concepts such as skipping, assignments, loops, etc. 
Notably, all statements which \PBLIMP inherits from \sfsymbol{IMP} are restricted to observable expressions or propositions. 
This syntactically guarantees that all standard statements 
behave in a completely standard way (apart from being restricted to positive numbers).
In addition to \sfsymbol{IMP}, \PBLIMP offers three additional, non-standard, statements over which we go in the following.

\paragraph{Sampling.} 

A \emph{sampling} statement $\CHOOSE{x_u}{f}$ assigns a random value to an \emph{un}observable variable $x_u$ using a function $f$ of type $\Sigma \rightarrow \mathcal{D}(\mathbb{N})$ which maps a variable assignment $\sigma \in \Sigma$ to a discrete \emph{probability distribution} over possible values in~$\mathbb{N}$. 
Here, $f(\sigma)(c)$ describes the probability to sample the value $c \in \mathbb{N}$, assuming that the current state aligns with the variable assignment $\sigma$.
We use the notation $f(\sigma, c) = f(\sigma)(c)$ for readability.

\paragraph{Observing.} 

$\OBSERVEASSIGN{y_o}{x_u}$ allows to \emph{explicitly observe (or measure) the value of an unobservable variable $x_u$}, thus deliberately lifting the veil of partial observability from the value of $x_u$.
For technical reasons, $x_u$ remains an \emph{un}observable variable even after the observe statement, as otherwise $x_u$ would be typed ambiguously after executing \verb|if(P) {observe x} else {skip}|. 
To anyway have access to the observed value of $x_u$, that value is assigned to an \emph{observable} variable $y_o$.
We use \OBSERVE{x_u} as syntactic sugar for \OBSERVEASSIGN{y_o}{x_u}, where $y_o$ is never used in the program (see e.g.\ Line 7 in \Cref{CaseStudyLoopy}).

\paragraph{Inferring.} 
The infer statement $\INFER{p(P_u) \in i}{C_1}{C_2}$ can be understood as a generalized if-statement. 
Rather than being restricted to observable propositions as branching conditions, infer statements can branch on an unobservable proposition $P_u$. 
The infer statement guides the control flow by determining the probability that $P_u$ holds, and executes either $C_1$ or $C_2$, depending on whether or not that probability falls into the interval $i$ or not.

\begin{remark}[Restrictions on Unobserved Variables]
	We would like to justify the restrictions on unobservable variables with an example:
	Anna and Bob are playing a game. 
	Anna tosses a coin so that Bob cannot see the result, and Bob has to guess the outcome. 
	If Bob guesses correctly, he wins the game, otherwise Anna wins. 
	If Anna is not lying and the coin is fair, then every possible guessing strategy of Bob should have a win-rate of $50\%$. 
	Assume now that we allow unobservable propositions as conditions of if-statements. 
	Then Bob could model his strategy and the outcome of the game with the following program:
\begin{lstlisting}
coin = sample(0.5|Heads> + 0.5|Tails>);
if (coin = Heads) {guess = Heads;} else {guess = Tails;}
actual = observe coin;
\end{lstlisting}%
				Here, \texttt{0.5|Heads> + 0.5|Tails>} is a distribution which assigns the value \texttt{Heads} to \texttt{coin} with a probability of $\frac{1}{2}$, and similarly for \texttt{Tails}.
				Bob makes his guess in Line~2 using an if-statement, which uses \texttt{coin} in its condition to ignore the partial observability which was introduced by the sample statement.
				Afterwards, we observe the actual outcome of the coin, and we find that the program will always terminate in a state where \texttt{actual} and \texttt{guess} are equal, thus giving Bob an impossible win-rate of $100\%$. 
				\hfill$\triangleleft$
			\end{remark}

			\subsection{Operational Semantics of \PBLIMP}

			Before defining operational semantics of \PBLIMP, we first need to define the notions of variable assignments and belief states. 
			
			\paragraph{Variable Assignments.}
			
			A variable assignment $\sigma \in \Sigma$ assigns a value $c \in \mathbb{N}$ to every (observable and unobservable) variable $x$, written as $\sigma(x) = c$. 
			We denote by $\sigma  (E)$ the evaluation of an expression $E$ under $\sigma$ (to a number in $\Nats$), and by~$\sigma  (P)$ the evaluation of a proposition $P$ under $\sigma$ (to a Boolean in $\{0,\, 1\}$). 
			
			\paragraph{Belief States.}
			
			A belief state $\beta \in \mathcal{D}(\Sigma)$ is a probability distribution over variable assignments~\mbox{$\sigma \in \Sigma$} and we denote by $\beta(\sigma)$ the probability of $\sigma$ under $\beta$. 
			Furthermore, we will use bra-ket notation $\beta = p_1{| \sigma_1 \rangle} + {\ldots} +  p_n {| \sigma_n \rangle}$ to denote a belief state with $\beta(\sigma_i) = p_i$. 
			Belief states capture the inherent uncertainty in \PBLIMP programs which have no direct access to the values of the unobservable variables, i.e.\  $\beta(\sigma)$ represents the likelihood that $\sigma$ is the true variable assignment.
			
			We say that $\sigma$ is \emph{possible} within $\beta$ if $\beta(\sigma) > 0$. 
			We call  $\beta$ \emph{consistent} if all possible variable assignments within $\beta$ agree on the values of the \emph{observable} variables. 
			For the remainder of this paper, we will only consider consistent belief states.
			Notably, all operations on belief states we present will preserve consistency.
			We can \emph{evaluate a proposition $P$ under a belief state $\beta$} as follows:%
			\begin{align*}
				\BMODEL{\beta}{P} \morespace{\coloneqq} \sum_{\sigma\in \Sigma}  \beta (\sigma) \cdot \SMODEL{\sigma}{P}	\tag{$\star$}\label{eval-page}
			\end{align*}%
			$\BMODEL{\beta}{P}$ yields a value in $[0,1]$ and can be understood as the likelihood that $P$ holds in the true state if $\beta$ is our current belief state. 
			If $\beta$ is consistent and $P$ is an observable proposition, then either $\BMODEL{\beta}{P} = 0$ or $\BMODEL{\beta}{P} = 1$ holds.

			\paragraph{Configurations.}
			
			A program configuration is a quadruple $(C,\beta, \sigma, p)$, where:
			\begin{itemize}
				\item 
				$C$ is either a \PBLIMP statement that remains to be executed or the special symbol $\top$ that indicates termination.
				\item 
				$\beta$ is the current belief state.
				\item 
				$\sigma$ is the current (true) variable assignment, which captures the true state of both the observable and unobservable  variables. 
				\item 
				$p$ stores the probability of the execution path up to this configuration.
			\end{itemize}
			Here, $C$ and $\beta$ capture the state of the program, as well as its current knowledge about the true state of the (unobservable) environment.
			However, the program directly interacts with the environment during observations, and hence $\sigma$ is included in the program configuration to model the unobservable environment.
			We now provide a small-step operational semantics that steps through the program by applying the transition rules in \Cref{Transitions}. 
			Let us go over these rules.
			\begin{figure}[t]
				\begin{center}
					\begin{adjustbox}{max width=\textwidth}
						{\renewcommand{\arraystretch}{3.16}
							\begin{tabular}{c c}

								\inference{
									\beta\statesubst{x_o}{E_o}(\sigma') \coloneqq \sum\limits_{
										\rho\statesubst{x_o}{E_o} \eeq \sigma'} \beta(\rho)}
								{ (\ASSIGN{x_o}{E_o},\, \beta,\, \sigma,\, p) \qtriangleright (\top,\,\beta\statesubst{x_o}{E_o},\, \sigma\statesubst{x_o}{E_o},\, p) }
								
								&
								
								\inference{}{  (\SKIP,\,\beta,\, \sigma,\, p) \qtriangleright  (\top,\, \beta,\, \sigma,\, p) }
								\\

								\multicolumn{2}{c}{
									\inference{f(\sigma,\, c) \eeq q \qquad\qquad\qquad q \GG 0}
									{ (\CHOOSE{x_u}{f},\,\beta,\, \sigma,\, p) \qtriangleright (\top,\,\beta\statesubst{x_u}{f},\, \sigma\statesubst{x_u}{c},\, p \cdot q) }
								}\\
								
								\multicolumn{2}{c}{\inference{\sigma(x_u) \eeq c}{  (\OBSERVEASSIGN{y_o}{x_u},\, \beta,\, \sigma,\, p) \qtriangleright  (\top,\, \beta|_{x_u=c}\statesubst{y_o}{c},\, \sigma\statesubst{y_o}{c},\, p) }}\\


								\multicolumn{2}{c}{
									\inference{\BMODEL{\beta}{P_o} \eeq 1}
									{ (\ITE{P_o}{C_1}{C_2},\, \beta,\, \sigma,\, p) \qtriangleright (C_1,\,\beta,\, \sigma,\, p) }
								}\\
								\multicolumn{2}{c}{
									\inference{\BMODEL{\beta}{P_o} \eeq 0}
									{ (\ITE{P_o}{C_1}{C_2},\, \beta,\, \sigma,\, p) \qtriangleright (C_2,\,\beta,\, \sigma,\, p) }
								}\\
								
								\multicolumn{2}{c}{\inference{}
									{ (\WHILEDO{P_o}{C},\,\beta,\, \sigma,\, p) \qtriangleright (\ITE{P_o}{\COMPOSE{C}{\WHILEDO{P_o}{C}}}{\SKIP},\, \beta,\, \sigma,\, p) }} \\
								
								\inference{\BMODEL{\beta}{P_u}\iin i}
								{ (\INFER{P_u \in i}{C_1}{C_2},\, \beta,\, \sigma,\, p) \qtriangleright (C_1,\,\beta,\, \sigma,\, p) }&
								
								\inference{(C_1,\, \beta,\, \sigma,\, 1) \ttriangleright (C_1',\, \beta',\, \sigma',\, q) \qquad C_1' \nneq \top}
								{ (\COMPOSE{C_1}{C_2},\, \beta,\, \sigma,\, p) \qtriangleright (\COMPOSE{C_1'}{C_2},\, \beta',\, \sigma',\, p \cdot q) } 
								
								\\
								
								\inference{\BMODEL{\beta}{P_u} \nnotin i}
								{ (\INFER{P_u \in i}{C_1}{C_2},\, \beta,\, \sigma,\, p) \qtriangleright (C_2,\,\beta,\, \sigma,\, p) } 
								&
								
								\inference{(C_1,\, \beta,\, \sigma,\, 1) \ttriangleright (\top,\, \beta',\, \sigma',\, q)}
								{ (\COMPOSE{C_1}{C_2},\, \beta,\, \sigma,\, p) \qtriangleright (C_2,\, \beta',\, \sigma',\, p\cdot q) }
								
						\end{tabular}}
					\end{adjustbox}
				\end{center}
				
				\caption{Operational transition rules for configurations of the form $(C,\,\beta,\, \sigma,\, p)$. 
				$\beta\statesubst{x_u}{f}$ and $\beta|_{x_u=c}$ are defined in \ref{dagger-page} and \ref{ddagger-page} respectively.
}
				\label{Transitions}
			\end{figure}%

			\paragraph{Assignments.} 
			
			$\ASSIGN{x}{E_o}$ deterministically modifies the value of the observable variable $x_o$ to the observable expression $E_o$. 
			We thus need to update the value of $x_o$ in the true variable assignment~$\sigma$ and in all possible variable assignments within $\beta$. 
			The updated variable assignment is given by:
			\begin{align*}
				\sigma\statesubst{x}{E}(y)\morespace{\coloneqq}\begin{cases} 
					\sigma(y) &,\,x\neq y\\
					\sigma(E) &,\,x = y
				\end{cases}
			\end{align*} 
			$\beta\statesubst{x_o}{E_o}$ is obtained from $\beta$ by applying $\statesubst{x_o}{E_o}$ to all possible variable assignments within~$\beta$ and possibly summing the probability masses should two or more formerly different variable assignments coincide after applying~$\statesubst{x_o}{E_o}$.
			
			\paragraph{Sampling.}
			The sample statement models the behavior of the environment by assigning a new value to a variable $x_u$ via sampling from the distribution $f(\sigma)$. 
			At this point, the configuration has multiple possible successors, each with a different true variable assignment. 
			To account for the program's uncertainty about the true variable assignment, the belief state splits each possible variable assignment $\rho$ within $\beta$ in accordance with the distribution $f(\rho)$ to create a new belief state. 
			Formally, the resulting belief state is defined as%
			\begin{align*}
				\beta\statesubst{x}{f} 
				\morespace{\coloneqq}
				\lambda \sigma\mydot \quad \sum\limits_{\mathclap{\rho \in \Sigma, \rho[x\mapsto \sigma(x)] = \sigma}}~ \beta(\rho) \cdot f(\rho, \sigma(x))~. \tag{$\dagger$}\label{dagger-page}
			\end{align*}%
			Here, the likelihood of a variable assignment $\sigma$ \textit{after} sampling, is described by considering all variable assignments $\rho$ \textit{before} sampling, so that $\rho$ aligns with $\sigma$ if $x$ is set to $\sigma(x)$. 
			The updated probability for $\sigma$ can then be calculated as the  sum over all such variable assignments $\rho$, where we consider the probability to \enquote{start} in $\rho$, i.e. $\beta(\rho)$, and to then sample the value $\sigma(x)$, which is $f(\rho, \sigma(x))$.
			The premise $q > 0$ prevents division by $0$ in the rule for the observe statement.
			
			\paragraph{Observing.} $\OBSERVEASSIGN{y_o}{x_u}$ has two effects: Determinizing the value of $x_u$ within the belief state $\beta$, and assigning the value of $x_u$ to the variable $y_o$. 
			The latter effect is analogous to the assignment, and we will thus not elaborate further on the effect it has on the belief state and the true variable assignment.
			
			The true variable assignment $\sigma$ models the state of the environment which is hidden behind partial observability, and thus $\sigma$ fixes the true value of $x_u$. 
			The observe statement passes the true value of $x_u$ to the program, and uses it to condition the belief state:
			As we now know the true value of $x_u$ from $\sigma$, some of the variable assignment within $\beta$ are potentially no longer possible, simply because they map $x_u$ to the wrong value. 
			We incorporate this information into~$\beta$ by filtering out variable assignments which do not match our observation, and then re-normalizing the distribution. 
			Formally, this is described as:%
			\begin{align*}
				\beta|_{x=c} \coloneqq \lambda \sigma\mydot \begin{cases}
					\frac{\beta(\sigma) \cdot (\SMODEL{\sigma}{x=c})}{\BMODEL{\beta}{x=c}}~, &\textnormal{if } \BMODEL{\beta}{x=c} > 0\\
					0~,& \text{otherwise.}	
				\end{cases}
				\tag{$\ddagger$}
				\label{ddagger-page}
			\end{align*}%
			For example, observing $x$ on belief state $\tfrac{1}{3} | \blue{0}\,0 \rangle + \tfrac{1}{3} | \orange{1}\,0 \rangle + \tfrac{1}{6} | \blue{0}\,1 \rangle + \tfrac{1}{6} | \orange{1}\,1 \rangle$ will branch with probability $\tfrac{1}{3} + \tfrac{1}{6} = \tfrac{1}{2}$ into $\tfrac{2}{3} | \blue{0}\,0 \rangle + \tfrac{1}{3} | \blue{0}\,1\rangle$ and with probability $\tfrac{1}{3} + \tfrac{1}{6} = \tfrac{1}{2}$ into $\tfrac{2}{3} | \orange{1}\,0 \rangle + \tfrac{1}{3} | \orange{1}\,1\rangle$.
			This branching is not directly represented in the operational semantics above, but it is the result of the sample statements which set up the believe state \textit{prior} to the observation and branched out into multiple configurations with the same belief state.
			However, for the program, which only has access to the belief state, it appears as though the execution branches at the observe statement.

			\paragraph{Conditional Branching and Inferring.}
			Both the if- and the infer-statement pick a branch depending on their condition. 
			For the if-statement, the condition is an observable proposition $P_o$. 
			The configuration transitions to $C_1$ if $\BMODEL{\beta}{P_o} = 1$, otherwise the configuration transitions to $C_2$. 
			We always have either $\BMODEL{\beta}{P_o} = 1$ or $\BMODEL{\beta}{P_o} = 0$ for all observable propositions, as we only consider consistent belief states.
			For the infer-statement, the condition is an unobservable proposition $P_u$ which is evaluated under the belief state $\beta$ (cf. ($\star$) on~p.~\pageref{eval-page}). 
			The configuration transitions to $C_1$ if the value of $\BMODEL{\beta}{P_u}$ falls into the interval $i$, otherwise the configuration transitions to $C_2$. 
			\paragraph{While Loop, Composition and Skip.} The remaining statements follow their typical semantics. 
			The while loop is unrolled into if-statements until the loop guard is no longer satisfied. 
			The composed statement $\COMPOSE{C_1}{C_2}$ first executes $C_1$ step-wise until termination, i.e.\ until $\top;C_2$ would be reached, and then behaves like~$C_2$. 
			The $\SKIP$ statement terminates with no effect. 
			
			\subsection{Execution and Termination}
			Starting in a configuration $(C,\,\beta,\, \sigma,\, p)$, we can exhaustively apply the transition rules to create a computation tree which includes all possible execution paths. 
			In this tree, the leaves are terminal configurations of the form $(\top,\,\beta',\, \sigma',\, p')$. 
			If the root configuration of the tree satisfies~\mbox{$p = 1$}, then $p'$ encodes the probability to reach that particular leaf. 
			However, note that the tree may contain multiple copies of this leaf, and thus $p'$ by itself is \emph{not} the probability to reach the  configuration~\mbox{$(\top,\,\beta',\, \sigma',\, p')$}.
			We can take the sum over all leaves of the tree to get a sub-distribution over the terminal configurations. 
			We consider sub-distributions, i.e.\ probability distributions which possibly add up to less than $1$, as there may be diverging paths in the computation tree. 
			We will write $[C]^{\beta, \sigma}$ for the sub-distribution created by the tree starting from the configuration $(C,\,\beta,\, \sigma,\, 1)$. 
			
			$[C]^{\beta, \sigma}$ can be used to determine the probability to reach a terminal configuration containing the belief state $\beta'$, written as $[C]^{\beta, \sigma} (\beta')$,  
			by considering all terminal configurations in which $\beta'$ occurs, and summing up the probabilities to reach those configurations. 
			Analogously, we can also determine the probability to terminate in configuration containing the belief state $\beta'$ and the variable assignment $\sigma'$, written as $[C]^{\beta, \sigma}(\beta', \sigma')$. 
			We can consider execution from an initial belief state as the expected outcome over all possible initial configurations with different true variable assignments $\sigma$. 
			Thus, we define the probability to reach a terminal state containing $\beta'$ starting from a belief state $\beta$ as:
			
			$$[C]^{\beta}(\beta') \coloneqq \sum_{\sigma\in \Sigma}\beta(\sigma) \cdot [C]^{\beta, \sigma} (\beta')$$

					\paragraph{Belief State Accuracy.} We can now ask how the belief state $\beta$ and the true variable assignment $\sigma$ relate. 
					In particular, if we terminate with a belief state~$\beta$, we expect the probability to also terminate with the true system state $\sigma$ to be $\beta(\sigma)$. 
					This ensures that we can make statements about the true variable assignment based only on the belief state, which is necessary as the programmer only has access to the latter. 
					We can formalize this property as follows:%
					\begin{theorem}[Belief Accuracy Preservation]\label{beliefAccuracyTheorem}
						Let $C$ be a program, $\beta_0$ be the initial belief state, $\beta$~be a belief state, and $\sigma$ be a variable assignment. 
						Then we have:
						$$[C]^{\beta_0} (\beta, \sigma) \eeq \beta(\sigma) \cdot [C]^{\beta_0} (\beta)$$
					\end{theorem}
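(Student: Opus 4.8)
The plan is to prove the identity $[C]^{\beta_0}(\beta,\sigma) = \beta(\sigma)\cdot[C]^{\beta_0}(\beta)$ by induction on the structure of $C$, taking the statement itself — already universally quantified over the initial belief $\beta_0$ and the terminal pair $(\beta,\sigma)$ — as the induction hypothesis. It is worth noting at the outset that the analogous per-path claim $[C]^{\beta_0,\sigma_0}(\beta,\sigma) = \beta(\sigma)\,[C]^{\beta_0,\sigma_0}(\beta)$ is \emph{false}: starting from a single true assignment $\sigma_0$ one generally cannot reach the full support of a terminal belief $\beta$. Belief accuracy is genuinely a property of the aggregate process that first draws $\sigma_0\sim\beta_0$ and then executes, which is why I keep $\beta_0$ universally quantified and unfold $[C]^{\beta_0}(\cdot) = \sum_{\sigma_0}\beta_0(\sigma_0)\,[C]^{\beta_0,\sigma_0}(\cdot)$ throughout.

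For the base cases $\SKIP$ and $\ASSIGN{x_o}{E_o}$ the check is immediate once one expands the one-step transition and uses the definition of $\beta\statesubst{x_o}{E_o}$ to absorb the (possibly non-injective) collision of assignments. The two informative base cases are the ones the update rules were designed for. For $\CHOOSE{x_u}{f}$, summing the single-path contributions $f(\sigma_0,\sigma(x_u))$ over all $\sigma_0$ agreeing with $\sigma$ off $x_u$ reproduces exactly the right-hand side of the sampling update~($\dagger$), so that $\sum_{\sigma_0}\beta_0(\sigma_0)[C]^{\beta_0,\sigma_0}(\beta,\sigma)$ collapses to $\beta(\sigma)$ times the indicator that $\beta=\beta_0\statesubst{x_u}{f}$, while the $\sigma$-marginal $[C]^{\beta_0}(\beta)$ collapses to the same indicator because $f(\sigma_0,\cdot)$ is a distribution. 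For $\OBSERVEASSIGN{y_o}{x_u}$, the successor belief $\beta_0|_{x_u=c}$ with $c=\sigma_0(x_u)$ is precisely the normalised conditioning~($\ddagger$); grouping the starting assignments $\sigma_0$ by their observed value $c$ and dividing by $\BMODEL{\beta_0}{x_u=c}$ is exactly what makes the aggregated terminal weight equal $\beta(\sigma)$, the renormalisation factor cancelling against the branch probability.

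The inductive cases for $\ITE{P_o}{C_1}{C_2}$ and $\INFER{P_u\in i}{C_1}{C_2}$ are easy, because in both the chosen branch depends \emph{only} on $\beta_0$ (through $\BMODEL{\beta_0}{P_o}\in\{0,1\}$ by consistency, resp.\ through $\BMODEL{\beta_0}{P_u}\in i$) and not on $\sigma_0$; hence every initial configuration steps into the same branch and the claim reduces verbatim to the hypothesis for $C_1$ or $C_2$. The conceptual heart is sequential composition $\COMPOSE{C_1}{C_2}$. Writing $[C_1;C_2]^{\beta_0,\sigma_0}(\beta,\sigma) = \sum_{\beta'',\sigma''}[C_1]^{\beta_0,\sigma_0}(\beta'',\sigma'')\,[C_2]^{\beta'',\sigma''}(\beta,\sigma)$ and aggregating over $\sigma_0$, I first apply the hypothesis for $C_1$ to rewrite $[C_1]^{\beta_0}(\beta'',\sigma'')$ as $\beta''(\sigma'')\,[C_1]^{\beta_0}(\beta'')$. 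This is the decisive move: it reconstitutes $\beta''$ as the conditional distribution over the intermediate true assignments $\sigma''$, which is exactly the weight needed to re-fold $\sum_{\sigma''}\beta''(\sigma'')[C_2]^{\beta'',\sigma''}(\beta,\sigma)$ into $[C_2]^{\beta''}(\beta,\sigma)$ and then invoke the hypothesis for $C_2$. Pulling out the resulting factor $\beta(\sigma)$ yields the claim.

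Finally, the loop $\WHILEDO{P_o}{C'}$ requires care and is where I expect the real work to lie, since structural induction does not apply to it directly. Because $P_o$ is observable and $\beta_0$ consistent, the guard evaluates to a fixed truth value under the belief, so the loop unrolls deterministically into a chain of if-statements; I would define the finite $n$-fold unrollings $W_n$, establish the identity for each $W_n$ by an inner induction on $n$ using the composition and if-statement cases already proven, and then pass to the limit. The operational sub-distribution of the loop is the supremum of those of the $W_n$, and since $[W_n]^{\beta_0}(\beta,\sigma) = \beta(\sigma)[W_n]^{\beta_0}(\beta)$ holds for every $n$ with both marginals converging monotonically, the identity survives the limit. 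The main obstacles are thus (i) getting the aggregation bookkeeping in the observe base case exactly right, where the conditioning normaliser must cancel the branching probability, and (ii) justifying the limit passage for loops, i.e.\ that the computation-tree sub-distribution is indeed the monotone limit of the bounded unrollings and that divergent mass does not disturb the identity.
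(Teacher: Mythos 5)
Your proposal is correct and follows essentially the same route as the paper's own proof: structural induction with $\beta_0$ kept universally quantified and $[C]^{\beta_0}$ unfolded as the $\beta_0$-weighted aggregate over initial true assignments, trivial base cases, branch-independence of if/infer from $\sigma_0$, the same decisive move in the composition case (using the hypothesis for $C_1$ to reconstitute the intermediate belief $\beta''$ as the conditional weight that re-folds the sum into $[C_2]^{\beta''}$), and bounded loop unrollings with a passage to the limit for while. Your opening observation that the per-path claim $[C]^{\beta_0,\sigma_0}(\beta,\sigma)=\beta(\sigma)\cdot[C]^{\beta_0,\sigma_0}(\beta)$ is false is a correct articulation of why the aggregation over $\sigma_0$ must be carried through the induction, which the paper relies on implicitly.
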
%
					\begin{proof}
						By structural induction on $C$. We refer to \ifthenelse{\boolean{extended}}{\Cref{belief_accuracy}}{the extended version~\extendedVersion}. \qed
					\end{proof}%
					\noindent%
					This theorem guarantees that the belief state accurately represents the unobservable environment. 
					Due to this, it is possible to drop the true variable assignment from the semantics altogether, and to define a semantics where configurations only maintain the belief state.
					Notably, program executions then branch at the observe statement, whereas the sample statement updates the belief state in a deterministic manner.
					For details on this we refer to \ifthenelse{\boolean{extended}}{\Cref{semantics_full}}{the extended version~\extendedVersion}.


\section{Weakest Preexpectations for \PBLIMP}
This section introduces a weakest preexpectation calculus for \PBLIMP programs. 
The calculus enables verification of 
quantitative properties on \PBLIMP programs, thus yielding hard guarantees on the execution of such programs.

\subsection{Weakest Preexpectation Calculus}

Usually, predicates are functions mapping to $1$ or $0$, depending on whether the predicate holds or not.
To account for the probabilistic uncertainty within \PBLIMP, we generalize to quantitative predicates which map belief states to any non-negative value (or infinity), i.e.\ we define a predicate $F$ to be a function of type $F\colon\mathcal{D}(\Sigma) \rightarrow \PosRealsInf$. 
Thanks to \Cref{beliefAccuracyTheorem}, these predicates still allow us to reason about the true variable assignment, e.g.\ the predicate $F(\beta) \coloneqq \BMODEL{\beta}{ x > 3}$ describes the probability that $x > 3$ holds in the true variable assignment.

Including infinity in the co-domain of predicates is a technical requirement for obtaining a \emph{complete lattice} of predicates.
We adhere to the convention $0 \cdot \infty = 0$ and moreover $\frac{a}{0} = 0$ for all $a$ including $\infty$.
We denote by $Pr(P)$ the predicate $\lambda\, \beta \mydot \BMODEL{\beta}{P}$, and we define predicate manipulations such as $F[x \mapsto E] \coloneqq \lambda\, \beta \mydot F (\beta\statesubst{x}{E})$. 
Similarly, we define $F\statesubst{x}{f}$ (cf.\ ($\dagger$) on~p.~\pageref{dagger-page}) and $F|_{x=c}$ (cf.\ ($\ddagger$) on~p.~\pageref{ddagger-page}).
To reason about predicates in the context of belief programs, we present a predicate transformer function $\wpsymbol$, which is designed in the style of McIver \& Morgan \cite{DBLP:series/mcs/McIverM05}. 
For our purpose, $\wpsymbol$ is of type%
\begin{align*}
	\wpsymbol\colon \quad \PBLIMP \tto \bigl(\mathcal{D}(\Sigma) \to \PosRealsInf\bigr) \tto \bigl(\mathcal{D}(\Sigma) \to \PosRealsInf\bigr)	~.
\end{align*}%
The rules which define $\wpsymbol$ can be found in \Cref{wp}. 
To our knowledge, Olmedo et al. \cite{olmedo2018conditioning} and Symczakk \& Katoen \cite{DBLP:conf/setss/SzymczakK19} (and in a broader sense also Nori et al. \cite{nori2014r2}) are the only works which present weakest preexpectation calculi that include conditioning statements.
Still, none of those works support infer-statements. 
\begin{figure}[t]
	\begin{center}
		\begin{adjustbox}{max width=\textwidth}
		\renewcommand{\arraystretch}{1.475}
		\begin{tabular}{l@{\quad}l}
			$C$ & $\wp{C}{F}$\\ \hline \hline
			\SKIP & $F$ \\
			$\ASSIGN{x}{E}$ & $F\statesubst{x}{E}$ \\
			$\ITE{P_o}{C_1}{C_2}$ & $Pr(P_o)  \cdot \wp{C_1}{F}\, + (1- Pr(P_o)) \cdot \wp{C_2}{F}$\\
			$\WHILEDO{P_o}{C}$ & $ \lfp X\mydot  (1 - Pr(P_o)) \cdot F \pplus Pr(P_o) \cdot \wp{C}{X}$ \\
			$\COMPOSE{C_1}{C_2}$ & $\wp{C_1}{\wp{C_2}{F}}$ \\
			$\CHOOSE{x}{f}$ & $ F\statesubst{x}{f}$ \\
			$\OBSERVEASSIGN{y}{x}$ & $\sum\limits_{c \in \mathbb{N}} Pr(x = c) \cdot F|_{x = c}\statesubst{y}{c}$ \\
			$\INFER{p(P) \in i}{C_1}{C_2}$ & $[Pr(P) \in i] \cdot \wp{C_1}{F}\, + [Pr(P) \notin i] \cdot \wp{C_2}{F}$
		\end{tabular}
	\end{adjustbox}
		\caption{Rules for the weakest preexpectation calculus $\wpsymbol$ for \PBLIMP.}
		\label{wp}
	\end{center}
\end{figure}%

The function $\wpsymbol$ determines the weakest preexpectation of a given belief program $C$ with respect to a predicate $F$. 
Intuitively, this can be understood as \enquote{predicting} the value of $F$ in the terminal state which is reached after executing the program $C$.
But, belief programs are probabilistic, meaning there may not be just a single terminal state to reach.
Hence, $\wpsymbol$ instead determines the \emph{expected value} of $F$ after executing $C$, hence also the term weakest pre\emph{expectation}.
Formally, $\wp{C}{F}\, (\beta)$ can be understood as the expected value of the predicate $F$ on all terminal states after executing the program $C$ starting in the belief state $\beta$.
We now consider the rules from \Cref{wp} with this intuition in mind.%

\paragraph{Assignment and Sampling.} 

As described above, $\wpsymbol$ determines the expected value of the predicate~$F$ after executing the program $C$. 
For the assignment, there is only a single possible terminal belief state, namely $\beta\statesubst{x}{E}$. 
We can thus determine the expected value of $F$ at termination by evaluating $F$ in the belief state $\beta\statesubst{x}{E}$, which amounts to the predicate $F\statesubst{x}{E}$. 
Similarly, the sample statement also only has a single reachable terminal belief state, namely $\beta\statesubst{x}{f}$, and thus its weakest preexpectation is $F\statesubst{x}{f}$.

\paragraph{Observing.} 
Consider the execution of an observe statement from the initial belief state $\beta$.
Notably, if we observe $x$ to have value $c$,  which occurs with probability~$\BMODEL{\beta}{x=c}$ by \Cref{beliefAccuracyTheorem}, then we always terminate with the belief state $\beta|_{x=c}\statesubst{y}{c}$. 
In line with the intuition for $\wpsymbol$, the weakest preexpectation thus determines the expected value of $F$ over all such observations:
When we observe $x$ to be $c$, which happens with probability $Pr(x=c)$, then we terminate with the belief state $\beta|_{x=c}\statesubst{y}{c}$, which will yield the \enquote{reward} $F|_{x=c}\statesubst{y}{c}$.

\paragraph{Conditionals and Inference.} 

Both the if- and the infer-statement execute either $C_1$ or $C_2$, depending on whether or not their condition holds. 
If the condition holds, then $C_1$ is executed and we can predict the expected value of $F$ with $\wp{C_1}{F}$, otherwise $C_2$ is executed and we can use $\wp{C_2}{F}$.
What remains is to distinguish the two cases:
For the if-statement, the condition $P_o$ is an observable proposition and $\BMODEL{\beta}{P_o}$ evaluates to either $0$ or $1$.
We can thus multiply with $Pr(P_o)$ to \enquote{select} the correct preexpectation.
For the infer-statement, the same principle applies, but we use the Iverson bracket $[Pr(P) \in i]$ to distinguish the two cases, where we define $[\varphi]$ as:%
\begin{align*}
	\iverson{\varphi}(\beta) \coloneqq \lambda\, \beta\mydot \begin{cases}
		1, & \text{if $\beta$ satisfies $\varphi$}\\
		0, & \text{otherwise}
	\end{cases}
\end{align*}

\paragraph{While Loop.} 

For the while-loop, $\wpsymbol$ is defined as the least fixed point over the loop's characteristic function  $\Phi(X) = (1 - Pr(P_o)) \cdot F\, + \,Pr(P_o) \cdot \wp{C}{X}$. 
$\Phi$~is closely related to the weakest preexpectation of if-statements, as while loops are unrolled into if-statements during execution.
Calculating the least fixed point is undecidable \cite{rice1953classes}, which means that determining $\wp{C}{F}$ is also undecidable in general.
Still, the fixed point can be lower-bounded by repeatedly applying~$\Phi$ to the predicate~\mbox{$\mathbf{0} \coloneqq \lambda\, \beta\mydot 0$}. 
This can be seen as approximating the weakest preexpectation of a general while loop with the weakest preexpectation of bounded while loops with an ever increasing number of iterations. 
Formally, we even have:%
$$\wp{\WHILEDO{P_o}{C}}{F} = \lfp\Phi = \lim\limits_{n \rightarrow \infty} \Phi^n(\mathbf{0})$$

\paragraph{Composition and Skip.} 
Both statements follow the classical rules for $\wpsymbol$. 
The composition $\COMPOSE{C_1}{C_2}$ applies $\wpsymbol$ step-wise by first determining the weakest preexpectation for $C_2$ with the predicate $F$, and then uses the result as the predicate for the weakest preexpectation of $C_1$. 
The skip statement has no effect.

\subsection{Soundness}

As previously mentioned, our weakest preexpectation calculus shall capture the expected value of $F$ after executing the program $C$. 
This idea is formally captured in the following soundness theorem:%
\begin{theorem}[Soundness]\label{soundness}
	For all programs $C$, initial belief states $\beta_0$, and predicates $F$, we have
	$$ \wp{C}{F}\, (\beta_0) \qeq \sum_{\mathclap{\beta \in \mathcal{D}(\Sigma)}}~ [ C ]^{\beta_0}(\beta) \cdot F (\beta)~.$$
\end{theorem}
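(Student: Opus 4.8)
The plan is to proceed by structural induction on the program $C$, establishing in each case that the denotational quantity $\wp{C}{F}(\beta_0)$ coincides with the operational expected value $\sum_{\beta} [C]^{\beta_0}(\beta) \cdot F(\beta)$. The atomic statements $\SKIP$, $\ASSIGN{x}{E}$, $\CHOOSE{x}{f}$, and $\OBSERVEASSIGN{y}{x}$ I would dispatch by directly reading the terminal configurations off the rules in \Cref{Transitions}. For $\SKIP$, $\ASSIGN{x}{E}$, and $\CHOOSE{x}{f}$ there is a single reachable terminal belief state ($\beta_0$, $\beta_0\statesubst{x}{E}$, and $\beta_0\statesubst{x}{f}$ respectively), so the operational sum collapses to one term matching the corresponding $\wpsymbol$-rule; for sampling one additionally notes that the different sampled values $c$ change only the true assignment $\sigma\statesubst{x_u}{c}$ and not the belief state, so the branch weights $f(\sigma,c)$ sum to $1$. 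The observe case is the most computational atomic case: grouping terminal configurations by the observed value $c$ produces the coefficient $\sum_{\sigma:\sigma(x_u)=c}\beta_0(\sigma) = Pr(x_u=c)(\beta_0)$, and the resulting belief state $\beta_0|_{x_u=c}\statesubst{y}{c}$ matches $F|_{x=c}\statesubst{y}{c}$ once one observes that conditioning on an unobservable variable commutes with assigning an observable one.

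For the branching statements $\ITE{P_o}{C_1}{C_2}$ and $\INFER{p(P)\in i}{C_1}{C_2}$ I would use that consistent belief states force $Pr(P_o)(\beta_0)\in\{0,1\}$, respectively $[Pr(P)\in i](\beta_0)\in\{0,1\}$, so that exactly one summand of the $\wpsymbol$-rule survives and the claim reduces to the induction hypothesis for the selected branch (which also governs the computation tree, since the operational rule transitions into that branch).

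The sequential composition $\COMPOSE{C_1}{C_2}$ is the first genuinely delicate case. Here I would first establish a Markov-style decomposition of the computation tree, namely $[\COMPOSE{C_1}{C_2}]^{\beta_0,\sigma}(\beta'') = \sum_{\beta',\sigma'} [C_1]^{\beta_0,\sigma}(\beta',\sigma')\cdot[C_2]^{\beta',\sigma'}(\beta'')$, which follows from the two composition rules (run $C_1$ to termination, then continue with $C_2$ while multiplying path probabilities). Averaging over the initial $\sigma$ and rewriting $\sum_{\sigma}\beta_0(\sigma)[C_1]^{\beta_0,\sigma}(\beta',\sigma')$ as $[C_1]^{\beta_0}(\beta',\sigma')$, the crucial step is to invoke \Cref{beliefAccuracyTheorem} to replace $[C_1]^{\beta_0}(\beta',\sigma')$ by $\beta'(\sigma')\cdot[C_1]^{\beta_0}(\beta')$; this is exactly what lets me collapse the inner sum over $\sigma'$ into $[C_2]^{\beta'}(\beta'')$ and thereby apply the induction hypothesis first to $C_2$ (with predicate $F$) and then to $C_1$ (with predicate $\wp{C_2}{F}$), yielding $\wp{C_1}{\wp{C_2}{F}}(\beta_0)$.

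Finally, for $\WHILEDO{P_o}{C}$ I would run an inner induction on the unrolling depth. Writing $W_n$ for the $n$-fold unrolling (with a non-terminating tail), one checks $\wp{W_n}{F} = \Phi^n(\mathbf{0})$ from the already-proven rules for if, composition, and skip together with the outer hypothesis for $C$, and that $W_n$ captures exactly those executions of the loop that terminate within $n$ iterations, so that $[W_n]^{\beta_0}$ equals the restriction of $[\WHILEDO{P_o}{C}]^{\beta_0}$ to such paths. Soundness for each fixed $W_n$ then follows as in the composition case, giving $\Phi^n(\mathbf{0})(\beta_0) = \sum_{\beta} [W_n]^{\beta_0}(\beta)\,F(\beta)$. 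Taking $n\to\infty$, the left-hand side converges to $\lfp\Phi = \wp{\WHILEDO{P_o}{C}}{F}$ by the continuity identity already stated in the excerpt, while on the right the sub-distributions $[W_n]^{\beta_0}$ increase monotonically to $[\WHILEDO{P_o}{C}]^{\beta_0}$, so monotone convergence (using $F\geq 0$ and the co-domain $\PosRealsInf$) lets me pull the limit inside the sum. I expect the main obstacle to be precisely this limit exchange together with the composition case: keeping the bookkeeping of the true assignment $\sigma$ correct and appealing to belief accuracy preservation at the right moment is where the argument has real content, whereas the atomic and branching cases amount to unfolding definitions.
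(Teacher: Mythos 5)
Your proof is correct, but it takes a genuinely different route from the paper's. The paper does not prove the theorem directly against the configuration semantics $[C]^{\beta_0}$ of \Cref{Transitions}; instead it first introduces an alternative operational semantics $\llbracket C \rrbracket^{\beta_0}$ on belief states only (\Cref{semantics_full}), in which sampling is deterministic and observation branches, proves that the two semantics assign the same probability to every terminal belief state (\Cref{OperationalCorrectnessTheorem}, which itself rests on \Cref{beliefAccuracyTheorem}), and then runs the soundness induction against $\llbracket \cdot \rrbracket^{\beta_0}$, where every case is a near-literal match of the corresponding $\wpsymbol$ rule. You instead stay with the original semantics throughout and absorb the true-assignment bookkeeping into the induction itself: in your atomic cases the branching over sampled and observed values is collapsed by averaging over the initial $\sigma$, and in the composition case you invoke \Cref{beliefAccuracyTheorem} to replace $[C_1]^{\beta_0}(\beta',\sigma')$ by $\beta'(\sigma')\cdot[C_1]^{\beta_0}(\beta')$ so that the sum over intermediate true assignments collapses into $[C_2]^{\beta'}(\beta'')$ --- which is exactly the step the paper performs once and for all inside the proof of \Cref{OperationalCorrectnessTheorem} (its composition case likewise appeals to belief accuracy). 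So the two arguments use the same essential ingredients, just factored differently: the paper's factorization buys a cleaner induction plus a reusable belief-state-only semantics, while yours is more self-contained, avoiding the equivalence theorem at the cost of heavier bookkeeping with $\sigma$. A small point in your favor: in the loop case you explicitly justify exchanging the limit with the sum over $\beta$ by monotone convergence (the sub-distributions $[W_n]^{\beta_0}$ increase and $F \geq 0$), a step the paper's appendix passes over with ``follows directly from the three equations above''.
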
%
\begin{proof}
	By structural induction over $C$. See \ifthenelse{\boolean{extended}}{\Cref{Soundness_proof}}{the extended version~\extendedVersion} for details. \qed
\end{proof}%
\noindent%
Note that the soundness theorem refers to the sub-distribution $[ C ]^{\beta_0}$. Thus, $\wpsymbol$ only considers terminal configurations, and we cannot gain significant insights into phenomenons which are not captured by terminal states.
However, divergence still affects the value of $\wp{C}{F} (\beta_0)$. 
The missing probability mass in~$[ C ]^{\beta_0}$ amounts to the probability that $C$ diverges, and is automatically weighted at $0$ in the soundness theorem, regardless of $F$.\\
Additionally, we will briefly discuss two useful properties which allows us to calculate $\wpsymbol$ more easily. 
We refer to \ifthenelse{\boolean{extended}}{\Cref{app:properties}}{the extended version~\extendedVersion} for a more formal introduction of these properties.
The first property is \textit{linearity}, which is characterized by the following equation for predicates $F, G$ and constant $\alpha \in \mathbb{R}_{{}\geq 0}$:
$$\wp{C}{\alpha\cdot F + G} \qeq \alpha \cdot \wp{C}{F} \pplus \wp{C}{G}$$
In particular, linearity enables \emph{decomposing} predicates, making $\wpsymbol$ computations amenable to parallelization.
The second property is \textit{independence}, which allows one to \enquote{skip} calculating $\wp{C}{F}$ for a predicate if the program $C$ does not affect the value of~$F$. 
For example, one can show that if the loop-free program $C$ does not modify any variables in the predicate $P$, then we have:
$$\wp{C}{Pr(P)} \eeq Pr(P)$$


\newbool{grammarFirst}
\setboolean{grammarFirst}{true}

\ifthenelse{\boolean{grammarFirst}}
{\subsection{Structuring Predicates}\label{sec:grammar}
	There is another hurdle to an efficient computation of $\wpsymbol$:
	If we apply the rules from \Cref{wp} naively, then the predicates we push backwards through the program would accumulate modifiers such as $\statesubst{t}{f}$ and we would end up with terms like $Pr(d = 1)|_{t=0}\statesubst{t}{f}\statesubst{d}{g}$, which quickly becomes infeasible for larger programs.
	To address this issue, we present the set of predicates $G$ as defined by the grammar in \Cref{GrammarFigure}, which can simplify terms like $F\statesubst{x}{f}$ efficiently.
	\begin{figure}[t]
		\begin{align*}
			E &\Coloneqq x \in\Varso \cup \Varsu \,|\, c \in \mathbb{N} \,|\, [P_u]  \,|\, E \oplus E & \oplus \in \{+, -, \cdot, /\}\\
			\Phi &\Coloneqq Ex(E) \sim q \cdot Ex(E) & \sim \in \{<,\leq, =, \neq, \geq, >\} \\
			G &\Coloneqq Ex(E) \,|\, G + G \,|\, \alpha \cdot G \,|\, [\Phi] \cdot G 
		\end{align*}
		$$Ex(E) \coloneqq \lambda\, \beta \mydot \sum_{\sigma \in \Sigma} \beta(\sigma)\cdot \sigma(E)$$
		\caption{An expressive grammar $G$ for predicates.}
		\label{GrammarFigure}
	\end{figure}
	Here, $E\colon \Sigma \rightarrow \mathbb{N}$ is a $G$-expression which additionally may include propositions $P$, which will be evaluated to either $0$ or $1$ in accordance with $\SMODEL{\sigma}{P}$. 
	The predicate $Ex(E)$ is defined as the expected value of $E$ within $\beta$. 
	Note that $Pr(P) = Ex([P])$, which will be useful when we consider $\wpsymbol$. 
	Furthermore, $G$ is expressive with respect to~$\wpsymbol$ for loop-free programs, as formalized below:
	\begin{theorem}[Expressiveness of $\boldsymbol{G}$]\label{GrammarNew}
		For any syntactically expressible loop-free program $C$ and predicate $F \in G$ we have $\wp{C}{F}\in G$
	\end{theorem}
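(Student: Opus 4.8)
The plan is to proceed by structural induction on the loop-free program $C$, following the clause-by-clause definition of $\wpsymbol$ in \Cref{wp}; note that $G$ is a set of predicate \emph{functions}, so ``$\in G$'' means equal (pointwise) to some expression generated by the grammar. The cases \SKIP{} and \COMPOSE{C_1}{C_2} are immediate: the former returns $F\in G$ unchanged, and the latter is two applications of the induction hypothesis, since $\wp{C_2}{F}\in G$ supplies the postexpectation for $C_1$. Every remaining case reduces to showing that $G$ is closed under one of the semantic predicate operations appearing in the rules, so the heart of the argument is a small collection of closure lemmas; once these are established, each $\wpsymbol$-clause is a finite $G$-combination of their outputs. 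Throughout I would exploit the \emph{additivity} of $Ex$, i.e.\ $Ex(E_1)+Ex(E_2)=Ex(E_1+E_2)$, which lets me collapse sums of $Ex$-terms back into a single $Ex$ and thereby keep guards in the grammar shape $Ex(E)\sim q\cdot Ex(E)$, together with the identity $Ex(1)=1$ on belief states.

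The first closure lemma handles substitution: unfolding the definition of $\beta\statesubst{x}{E}$ and reindexing the sum over variable assignments gives $Ex(E')\statesubst{x}{E}=Ex(E'\subst{E}{x})$, where $\subst{E}{x}$ is syntactic replacement. Since $\statesubst{x}{E}$ distributes over $+$, over scaling by $\alpha$, and over guards (replacing $\Phi$ by the substituted comparison), this settles the assignment case. For \ITE{P_o}{C_1}{C_2} and \INFER{p(P) \in i}{C_1}{C_2} I would first rewrite the selectors as guards of grammar form: since $Pr(P_o)\in\{0,1\}$ on consistent belief states, $Pr(P_o)$ equals the guard $[Ex([P_o])\geq 1\cdot Ex(1)]$ and $1-Pr(P_o)$ equals $[Ex([P_o])<1\cdot Ex(1)]$, while interval membership $[Pr(P)\in i]$ becomes a product of two guards (one per endpoint, with the appropriate comparison) and its complement $[Pr(P)\notin i]$ a sum of the two opposite guards. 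Multiplying the inductively $G$-membered branch preexpectations $\wp{C_j}{F}$ by these guards and summing lands in $G$ via the $[\Phi]\cdot G$ and $G+G$ productions.

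The sampling case is where the syntactic-expressibility hypothesis on $C$ is used: taking the sampling function in finite-support form $f(\sigma)=\sum_i p_i(\sigma)\,|e_i(\sigma)\rangle$ with $G$-expressible probabilities $p_i$ and values $e_i$, I would unfold $(\dagger)$ and reindex the double sum to obtain $Ex(E)\statesubst{x}{f}=\sum_i Ex\bigl(p_i\cdot E\subst{e_i}{x}\bigr)$, which additivity collapses to a single $Ex$; guards transform the same way. The genuinely delicate case, and the main obstacle, is \OBSERVEASSIGN{y}{x}, whose clause is an \emph{infinite} sum over $c\in\mathbb{N}$ of conditioned predicates $F|_{x=c}$, each carrying a division by $Pr(x=c)$. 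Two things have to be reconciled with membership in the finitely generated $G$. First, the normalizing denominator must cancel; I would prove the identity $Pr(x=c)\cdot Ex(E)|_{x=c}=Ex(E\cdot[x=c])$ (consistent with the convention $a/0=0$), turning conditioning into multiplication by the mask $[x=c]$. Second, the infinite sum must collapse: for pure $Ex$-terms this is clean, since $\sum_{c} Ex(E\subst{c}{y}\cdot[x=c])=Ex(E\subst{x}{y})$ because the masks select $c=\sigma(x)$ pointwise.

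The hard part will be making this cancellation-and-collapse work \emph{uniformly} for the guarded products $[\Phi]\cdot G'$. Here conditioning pushes masks inside the guard, producing $[\Phi_c]$ with $\Phi_c=Ex(E_1\cdot[x=c])\sim q\cdot Ex(E_2\cdot[x=c])$ after clearing the (positive) denominator, and the resulting sum $\sum_c Pr(x=c)\cdot[\Phi_c]\cdot(G'|_{x=c})$ does \emph{not} telescope the way the unguarded sum does, because the guard's truth value genuinely varies with $c$ and with $\beta$. My plan is to isolate a per-outcome operator $T_c(F):=Pr(x=c)\cdot F|_{x=c}$, show by an inner induction on the structure of $F\in G$ that $T_c$ maps $G$ into $G$ for each fixed $c$ (with $T_c(Ex(E))=Ex(E\cdot[x=c])$, linearity over $+$ and $\alpha\cdot$, and $T_c([\Phi]\cdot G')=[\Phi_c]\cdot T_c(G')$), and then argue that the syntactic-expressibility assumption bounds the range of the observed variable, so that $\sum_c T_c(F)\statesubst{y}{c}$ is a \emph{finite} sum of $G$-terms and hence lies in $G$. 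Pinning down exactly which finiteness guarantee the expressibility fragment provides, and verifying that it suffices to truncate the observe-sum in the guarded case, is the step I expect to require the most care.
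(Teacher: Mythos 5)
Your proposal is, for all but one step, the same proof the paper gives in \Cref{GrammarProofNew}: structural induction on $C$; the substitution identity $Ex(E')\statesubst{x}{E}=Ex(E'[E/x])$ for assignments; encoding the selectors of if- and infer-statements as grammar guards (the paper uses $Pr(P_o)=[Ex([P_o])=1]$ and, via the expressibility restriction, $[Pr(P_u)\in i]=[Ex([P_u])\sim q\cdot Ex(1)]$, so the interval is a single comparison rather than your two endpoint guards); the identity $Ex(E)\statesubst{x}{f}=Ex(E\leftarrow f)$ for sampling; and, for $\OBSERVE{x}$, exactly your cancellation $Pr(x=c)\cdot Ex(E)|_{x=c}=Ex(E\cdot[x=c])$ together with pushing masks into guards, $[\Phi]|_{x=c}=[Ex(E_1\cdot[x=c])\sim q\cdot Ex(E_2\cdot[x=c])]$. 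Your per-outcome operator $T_c$ computes precisely the paper's per-summand simplification; the only organizational difference is that the paper first rewrites $F$ into a sum of guarded monomials $\bigl(\prod_j\alpha_j\bigr)\cdot\bigl(\prod_k[\Phi_k]\bigr)\cdot Ex(E_i)$ and invokes linearity of $\wpsymbol$ (\Cref{Linearity}) where you run an inner induction on the structure of $F$; these are interchangeable.

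The step that would fail is your proposed finish for the observe case. Syntactic expressibility, as the paper defines it, restricts only the intervals of infer statements and the form of sample functions; it does \emph{not} bound the range of an observed variable. A sample statement assigns $x$ a value $\sigma(E_i)$, which varies with $\sigma$, and more importantly $\wp{C}{F}$ must lie in $G$ as a function on \emph{all} belief states, including those in which $x$ has infinite support; so the observe rule genuinely sums over all $c\in\mathbb{N}$, and no truncation to a finite sum follows from the stated hypotheses. You are right that for guarded terms this infinite sum does not telescope the way the unguarded $\sum_c Ex(E\cdot[x=c])$ does; but the paper does not resolve this by truncation either --- its proof shows only that each summand $\bigl(\prod_k[\Phi_k]|_{x=c}\bigr)\cdot Ex(E_i\cdot[x=c])$ lies in $G$ and stops there, leaving closure of the finitely generated grammar under this infinite sum implicit. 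So the difficulty you isolated is genuine, but the fix you sketch is not available from the paper's definitions: completing the argument your way would require an additional assumption (e.g., an a priori bound on the observed variable's range), or else an argument that the observe-sum can be re-expressed as a single finite grammar term, which neither your plan nor the paper's written proof supplies.
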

	\begin{proof}
		By structural induction on $C$, see \ifthenelse{\boolean{extended}}
		{\Cref{GrammarProofNew}}
		{the extended version \extendedVersion}.\qed
	\end{proof}%

	\noindent
	There are two restrictions on in \Cref{GrammarNew}, namely the programs need to be \textit{loop-free} and \textit{syntactically expressible}.
	A grammar for programs with loops is beyond the scope of this paper, but loop-free programs still allow us to reason about bounded loops or loop invariants.
	Being syntactically expressible does not significantly restrict programs, but it still guarantees that programs are sufficiently structured to easily resolve predicate modifiers.
	We consider $C$ to be \textit{syntactically expressible}, if it adheres to the following restrictions:
	\begin{itemize}
		\item The conditions of infer statements may only feature such intervals $i$ that can be expressed using a comparison operator. 
		\item The functions for sample statements must be expressible as finite sums $f(\sigma) = \sum  p_i|\sigma(E_i)\rangle$ for some fixed $G$-expressions $E_i$ and probabilities $p_i$. 
	\end{itemize}

	\noindent%
	Notably, \Cref{GrammarNew} guarantees that any predicate modifier which is introduced during the computation of $\wpsymbol$ will be removed.
	To do so, every modifier applied to a predicate in $G$ will first be passed down recursively until we reach the base case $Ex(E)$.
	Then, any given modifier can be resolved syntactically as follows:

	\begin{lemma}[Resolution]
		Let $E, E'$ be $G$-expressions, $x$ be a variable, $c\in \mathbb{N}$ be a value and $f=\sum  p_i|E_i\rangle $ be the function of a sample statement.
		Then we have:
		\begin{align*}
			Ex(E)[x \mapsto E'] &\eeq Ex(E[x / E']) \tag{Assign}\label{lem:replacement}\\
			Ex(E)|_{x = c} &\eeq \frac{Ex(E\cdot[x=c])}{Ex([x = c])} \tag{Observe} \label{BeliefConditioningTheoremNew} \\
			Ex(E)\statesubst{x}{f} &\eeq Ex\left( E\leftarrow f\right) \tag{Sample} \label{BeliefElimTheoremNew} \\
		\end{align*}
		where $E[x/ E']$ is the $G$-expression $E$ with all occurrences of $x$ replaced by $E'$, and $E\leftarrow f \coloneqq \sum_{i \in \mathbb{N}} p_i \cdot E[x / E_i]$
	\end{lemma}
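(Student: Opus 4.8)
The plan is to prove the three identities uniformly: unfold the definition $Ex(E) = \lambda\,\beta\mydot\sum_{\sigma}\beta(\sigma)\cdot\sigma(E)$ together with the relevant belief-state modifier, reorganize the resulting (non-negative, pointwise) sums, and appeal to a single auxiliary \emph{substitution lemma}. That lemma states that for every $G$-expression $E$, variable $x$, $G$-expression $E'$, and assignment $\rho$, we have $(\rho\statesubst{x}{E'})(E) = \rho(E[x/E'])$; I would prove it first by structural induction on $E$, with base cases for a variable (splitting on whether it equals $x$, using the definition of $\sigma\statesubst{x}{E'}$), a constant, and an embedded proposition $[P_u]$ (where substitution commutes with evaluating the comparison), and the inductive step pushing $\oplus$ through both sides. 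All three equations of the lemma then reduce to bookkeeping on top of this fact.

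For \ref{lem:replacement} I would unfold $Ex(E)[x\mapsto E'](\beta) = \sum_{\sigma'}\beta\statesubst{x}{E'}(\sigma')\cdot\sigma'(E)$, expand $\beta\statesubst{x}{E'}(\sigma') = \sum_{\rho\,:\,\rho\statesubst{x}{E'}=\sigma'}\beta(\rho)$, and reindex the double sum by $\rho$ (each $\rho$ contributes to the unique $\sigma' = \rho\statesubst{x}{E'}$) to obtain $\sum_{\rho}\beta(\rho)\cdot(\rho\statesubst{x}{E'})(E)$. The substitution lemma rewrites this as $\sum_{\rho}\beta(\rho)\cdot\rho(E[x/E']) = Ex(E[x/E'])(\beta)$. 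For \ref{BeliefConditioningTheoremNew} I would unfold to $\sum_{\sigma}\beta|_{x=c}(\sigma)\cdot\sigma(E)$; when $\BMODEL{\beta}{x=c} > 0$ I substitute $(\ddagger)$, pull the constant denominator out of the sum, and use $\sigma(E)\cdot\sigma(x=c) = \sigma(E\cdot[x=c])$ to read the numerator as $Ex(E\cdot[x=c])(\beta)$ and the denominator as $\BMODEL{\beta}{x=c} = Ex([x=c])(\beta)$ (using $Pr(P)=Ex([P])$); when $\BMODEL{\beta}{x=c}=0$ both sides vanish, the left because $\beta|_{x=c}=\mathbf{0}$ and the right by the convention $\frac{a}{0}=0$.

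The case \ref{BeliefElimTheoremNew} is the most involved and is where I expect the real work. After unfolding to $\sum_{\sigma}\beta\statesubst{x}{f}(\sigma)\cdot\sigma(E)$ and expanding $\beta\statesubst{x}{f}$ via $(\dagger)$, the key observation is that the constraint $\rho\statesubst{x}{\sigma(x)}=\sigma$ holds precisely when $\sigma = \rho\statesubst{x}{v}$ for $v = \sigma(x)$. Swapping the order of summation then lets me sum first over $\rho$ and then over $v\in\mathbb{N}$, giving $\sum_{\rho}\beta(\rho)\sum_{v}f(\rho,v)\cdot(\rho\statesubst{x}{v})(E)$. Writing $f=\sum_i p_i|E_i\rangle$ as $f(\rho,v)=\sum_{i\,:\,\rho(E_i)=v}p_i$ and collapsing the $v$-sum (each index $i$ forces $v=\rho(E_i)$) yields $\sum_i p_i\sum_{\rho}\beta(\rho)\cdot(\rho\statesubst{x}{\rho(E_i)})(E)$; one last application of the substitution lemma turns this into $\sum_i p_i\,Ex(E[x/E_i])(\beta)$, which equals $Ex(E\leftarrow f)(\beta)$ by the definition of $E\leftarrow f$ and linearity of $Ex$.

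The main obstacle is the bookkeeping in \ref{BeliefElimTheoremNew}: correctly swapping the double sum against the implicit constraint in $(\dagger)$ and then collapsing the value-sum against the finite representation $f=\sum_i p_i|E_i\rangle$, taking care that distinct indices $i$ may carry the same value $\rho(E_i)$, so that no terms are lost or double-counted. Once the substitution lemma is established and these index manipulations are justified, the remaining steps are routine rearrangements of finite non-negative sums, and the \ref{lem:replacement} and \ref{BeliefConditioningTheoremNew} cases follow the same template with considerably less effort.
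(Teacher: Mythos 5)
Your proposal is correct and follows essentially the same route as the paper's proof: unfold $Ex$ against the respective belief-state modifier, reindex/swap the resulting non-negative sums (with the same case split on $\BMODEL{\beta}{x=c}$ for the observe case and the same collapse of the value-sum against the finite representation $f=\sum_i p_i|E_i\rangle$ for the sample case), and finish with the substitution fact $(\rho\statesubst{x}{E'})(E)=\rho(E[x/E'])$. The only difference is presentational: you make explicit, via structural induction on $E$, the substitution lemma that the paper invokes as immediate from the definitions, and you perform the re-parametrization in the sample case as a single bijection $(\rho,v)\mapsto(\rho\statesubst{x}{v},\rho)$ where the paper swaps the constrained double sum first and re-indexes by values afterwards.
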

	\begin{proof}
		\Cref{lem:replacement} follows from definition of $Ex(E)$ and $\sigma[x\mapsto E'] (E) = \sigma (E[x / E'])$. 
		For \Cref{BeliefConditioningTheoremNew} and \Cref{BeliefElimTheoremNew} we refer to
		\ifthenelse{\boolean{extended}}
		{\Cref{ConditioningProofNew} and \Cref{BeliefElimProofNew}}
		{the extended version \extendedVersion}.
		Here, $E\leftarrow f$ builds upon Jacobs \cite{jacobs2019mathematics} to describe the outcome of a sampling as the expectation over all possible samples.
\end{proof}
}{}

\subsection{Reasoning about Loops with Loop Invariants for $\wpsymbol$}\label{CaseStudyLoopsSection}
We now showcase how $\wpsymbol$ can soundly reason about unbounded loops, a task which remains challenging for most PPLs.
The weakest preexpectation of a loop with respect to postexpectation $f$ is defined as the least fixed point of the loops characteristic function $\Phi_f$. 
This fixed point can be under-approximated by repeatedly applying $\Phi_f$ to the predicate $\mathbf{0}$, where each application yields a tighter \textit{lower} bound on the least fixed point, until we eventually converge (after possibly infinitely many iterations).
For \emph{upper} bounds, we can use a loop invariant:
Verifying that $I$ is an invariant for the loop $\WHILEDO{P_o}{C}$ is computationally cheaper than iteration, as we only apply $\Phi_f$ once and check following inequality:
$$ I \ggeq \Phi_f(I) \eeq (1-Pr(P_o)) \cdot f \pplus Pr(P_o) \cdot \wp{C}{I}$$
\begin{wrapfigure}[14]{l}{0.6\textwidth}
	\vspace*{-1 \intextsep}%
	\begin{lstlisting}
d = sample(0.9|1> + 0.1|0>);
inCare = true;
while(inCare){
	infer(p(d = 1) > q) {
		d = sample(0.75|d> + 0.25|0>);
		t = sample(0.95|d> + 0.05|1-d>);
		observe t;
	} else {
		inCare = false;
	}
}
\end{lstlisting}
	\caption{Probabilistic belief program $C_w$ 
	}
	\label{CaseStudyLoopyRepeat}
\end{wrapfigure}
Now, we reconsider the introductory example, restated in \Cref{CaseStudyLoopyRepeat} for convenience.
The program models a strategy for the ongoing treatment of a (possibly) sick patient, and the goal is to only release the patient once they are cured.
Hence, the probability to release a sick patient is highly relevant to us, and we can determine it through the weakest preexpectation of the predicate $Pr(d=1)$.
However, determining the value of $\wp{C_w}{Pr(d=1)}$ is difficult due to the loop, so instead we can show that the following predicate $I$ is an invariant of the loop:
$$ I \coloneqq (1 - Pr(\mathit{inCare})) \cdot Pr(d=1) \pplus Pr(\mathit{inCare}) \cdot q $$
This invariant encodes the two possible outcomes when entering the loop: Either $\mathit{inCare}$ is true and the patient takes treatment until the likelihood of the disease is at most $q$, or the patient never enters treatment and the likelihood of the disease is unchanged. 
\ifthenelse{\boolean{extended}}
{The proof that $I$ is an Invariant can be found in \Cref{wpCalc}.}
{We show $I$ to be an invariant in the extended version~\extendedVersion.}

\ifthenelse{\boolean{grammarFirst}}{}
{\subsection{Structuring Predicates}
There is another hurdle to an efficient computation of $\wpsymbol$:
If we apply the rules from \Cref{wp} naively, then the predicates we push backwards through the program would accumulate modifiers such as $\statesubst{t}{f}$ and we would end up with terms like $Pr(d = 1)|_{t=0}\statesubst{t}{f}\statesubst{d}{g}$, which quickly becomes infeasible for larger programs.
To address this issue, we present the set of predicates $G$ as defined by the grammar in \Cref{GrammarFigure}, which can simplify terms like $F\statesubst{x}{f}$ efficiently.
\begin{figure}[t]
	\begin{align*}
		E &\Coloneqq x \in\Varso \cup \Varsu \,|\, c \in \mathbb{N} \,|\, [P_u]  \,|\, E \oplus E & \oplus \in \{+, -, \cdot, /\}\\
		\Phi &\Coloneqq Ex(E) \sim q \cdot Ex(E) & \sim \in \{<,\leq, =, \neq, \geq, >\} \\
		G &\Coloneqq Ex(E) \,|\, G + G \,|\, \alpha \cdot G \,|\, [\Phi] \cdot G 
	\end{align*}
	$$Ex(E) \coloneqq \lambda\, \beta \mydot \sum_{\sigma \in \Sigma} \beta(\sigma)\cdot \sigma(E)$$
	\caption{An expressive grammar $G$ for predicates.}
	\label{GrammarFigure}
\end{figure}
Here, $E\colon \Sigma \rightarrow \mathbb{N}$ is a $G$-expression which additionally may include propositions $P$, which will be evaluated to either $0$ or $1$ in accordance with $\SMODEL{\sigma}{P}$. 
The predicate $Ex(E)$ is defined as the expected value of $E$ within $\beta$. 
Note that $Pr(P) = Ex([P])$, which will be useful when we consider $\wpsymbol$. 
Furthermore, $G$ is expressive with respect to~$\wpsymbol$ for loop-free programs, as formalized below:
\begin{theorem}[Expressiveness of $\boldsymbol{G}$]\label{GrammarNew}
	For any syntactically expressible loop-free program $C$ and predicate $F \in G$ we have $\wp{C}{F}\in G$
\end{theorem}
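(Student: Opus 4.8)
The plan is to proceed by structural induction on the loop-free program $C$, mirroring the proof of \Cref{soundness}; since $C$ is loop-free, the cases are exactly $\SKIP$, assignment, sequential composition, if-then-else, sampling, observing, and inference, so there is no fixed-point case to handle. The recurring mechanism is uniform: each $\wpsymbol$-rule of \Cref{wp} either applies a \emph{modifier} to $F$ (namely $\statesubst{x}{E}$ for assignment, $\statesubst{x}{f}$ for sampling, and the compound observe operator) or forms a linear combination of the $\wpsymbol$'s of subprograms weighted by $Pr$-terms. I would first isolate a ``push-down'' principle: every modifier distributes pointwise over the constructors $G + G$, $\alpha \cdot G$, and $[\Phi] \cdot G$, so that applying it to a $G$-predicate reduces to applying it to the base expressions $Ex(E)$ and to the guards $\Phi = (Ex(E_1) \sim q \cdot Ex(E_2))$. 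At a leaf the Resolution Lemma (\Cref{lem:replacement}, \Cref{BeliefConditioningTheoremNew}, \Cref{BeliefElimTheoremNew}) turns the modifier into a purely syntactic rewrite of the $G$-expression, and the same rewrite applied to both sides of a guard again produces a syntactically valid $\Phi$; hence each such modifier maps $G$ into $G$ once the base and guard leaves are handled.

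The routine cases follow directly. For $\SKIP$ the result is $F$ itself, and for $\COMPOSE{C_1}{C_2}$ two nested applications of the induction hypothesis suffice (first to $C_2$, then to $C_1$ on the resulting $G$-predicate). For assignment I would push $\statesubst{x}{E}$ through $F$ and invoke \Cref{lem:replacement} at the leaves. For the if- and infer-statements the one subtlety is absorbing the weights $Pr(P_o)$, $1 - Pr(P_o)$, $[Pr(P) \in i]$, and $[Pr(P) \notin i]$ into the $[\Phi] \cdot G$ form: because $P_o$ is observable and belief states are consistent, $Pr(P_o)$ is pointwise $0$ or $1$ and coincides with the guard $[Ex([P_o]) = 1 \cdot Ex(1)]$, while syntactic expressibility forces $Pr(P) \in i$ to be a comparison, hence a guard $[\Phi]$; the two branches then recombine via $[\Phi] \cdot G$ and $G + G$ using the induction hypothesis. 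For sampling I would push $\statesubst{x}{f}$ down and apply \Cref{BeliefElimTheoremNew}, obtaining $Ex(E \leftarrow f)$ with $E \leftarrow f = \sum_i p_i \cdot E[x/E_i]$; by linearity of $Ex$ this equals $\sum_i p_i \cdot Ex(E[x/E_i])$, a finite $G$-predicate built from $\alpha \cdot G$ and $G + G$.

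The main obstacle is the observe statement, whose rule carries the compound modifier $\sum_{c \in \mathbb{N}} Pr(x = c) \cdot F|_{x=c}\statesubst{y}{c}$, which breaks the clean push-down principle in two ways: the sum ranges over \emph{all} $c \in \mathbb{N}$, so it is not obviously a finite $G$-predicate, and the conditioning $|_{x=c}$ does not factor through $[\Phi] \cdot G$ because the outer sum couples guard and body through the shared index $c$. I would resolve both by a pointwise collapse. At a leaf $Ex(E)$, \Cref{BeliefConditioningTheoremNew} cancels the normalizer $Ex([x=c])$ against the weight $Pr(x=c) = Ex([x=c])$, leaving $Ex((E \cdot [x=c])[y/c])$ (the degenerate $Pr(x=c)=0$ branch contributing $0$ under $\tfrac{a}{0}=0$); then, by nonnegativity (Tonelli) the infinite sum and $Ex$ interchange, and the selection identity $\sum_{c \in \mathbb{N}} \sigma\bigl((E \cdot [x=c])[y/c]\bigr) = \sigma(E[y/x])$ collapses everything to the single term $Ex(E[y/x]) \in G$, which also reflects that observing $x$ into $y$ links the two variables. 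The real work, and where I expect the proof to be delicate, is extending this collapse through the guard constructor: there the per-$c$ guard $[\Phi]|_{x=c}\statesubst{y}{c}$ and body $G'|_{x=c}\statesubst{y}{c}$ are summed together, and one must show via the same normalizer cancellation and the same $c = \sigma(x)$ selection that the coupled sum again reduces to an ordinary $[\Phi] \cdot G$ predicate, all while carefully tracking the $\tfrac{a}{0}=0$ convention through the nested guards.
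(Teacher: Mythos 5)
Apart from the observe statement, your plan \emph{is} the paper's proof: the same structural induction on $C$, the same push-down of modifiers through the constructors $G+G$, $\alpha\cdot G$ and $[\Phi]\cdot G$ with the Resolution Lemma applied at the leaves $Ex(E)$, the same absorption of $Pr(P_o)$ into a guard via consistency of belief states, and of $[Pr(P_u)\in i]$ into a guard via syntactic expressibility. Your leaf-level treatment of observe is correct and in fact sharper than the paper's: one can check that $\wp{\OBSERVEASSIGN{y}{x}}{Ex(E)} \eeq Ex(E[y/x])$ holds, since under $Ex(\cdot)$ the sum over outcomes $c$ interchanges with the sum over variable assignments $\sigma$ and the factor $\sigma([x=c])$ selects exactly $c=\sigma(x)$. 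The paper instead brings $F$ into sum-of-products normal form via linearity and resolves each fixed-$c$ summand of $\sum_{c}Pr(x=c)\cdot(\cdots)|_{x=c}\statesubst{y}{c}$ separately, leaving the result as a $c$-indexed sum of $G$-terms.

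The genuine gap is the step you yourself flag as delicate, and it is fatal to your strategy: the guard-coupled sum does \emph{not} reduce to a $G$-predicate, so no amount of care with the $\tfrac{a}{0}=0$ convention will make the collapse go through. The selection identity is available only for functionals of the shape $\beta\mapsto\sum_{\sigma}\beta(\sigma)\cdot(\cdots)$; a guard $[\Phi]$ evaluated at the posterior $\beta|_{x=c}\statesubst{y}{c}$ is a threshold functional of the \emph{whole} belief state, not a $\beta$-weighted pointwise sum, so there is no per-$\sigma$ index against which the $c$-sum can collapse. Concretely, take $F=[Ex([z=1])\geq\tfrac{1}{2}\cdot Ex(1)]\cdot Ex(1)\in G$ with $z$ unobservable, and consider $\OBSERVE{x}$: then $\wp{\OBSERVE{x}}{F}(\beta)=\sum_{c}\BMODEL{\beta}{x=c}\cdot[\,\BMODEL{\beta|_{x=c}}{z=1}\geq\tfrac{1}{2}\,]$. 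On the finite-support belief states $\beta_{\mathbf{t}}$, $\mathbf{t}\in[0,1]^{N}$, giving mass $t_{i}/N$ to $(x=c_{i},z=1)$ and $(1-t_{i})/N$ to $(x=c_{i},z=0)$, this evaluates to $\tfrac{1}{N}\bigl|\{\,i : t_{i}\geq\tfrac{1}{2}\,\}\bigr|$, which has jump discontinuities across the $N$ distinct hyperplanes $\{t_{i}=\tfrac{1}{2}\}$. Every $G$-predicate is a finite sum $\sum_{j}\alpha_{j}\prod_{k}[\Phi_{jk}]\cdot Ex(E_{j})$, hence restricted to this affine family it is piecewise affine with discontinuities confined to a number of hyperplanes bounded by its (fixed, finite) number of guards; choosing $N$ larger than that number yields a contradiction. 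So the coupled sum lies outside $G$, and the observe case cannot be closed your way. (This also exposes the weak point of the paper's own proof, which establishes membership in $G$ only for each fixed-$c$ summand and is silent on whether the infinite sum over $c$ is expressible; your proposal has the merit of making that difficulty explicit, but the proposed pointwise collapse is not the fix.)
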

\begin{proof}
	By structural induction on $C$, see \Cref{GrammarProofNew}.\qed
\end{proof}%

\noindent
There are two restrictions on in \Cref{GrammarNew}, namely the programs need to be \textit{loop-free} and \textit{syntactically expressible}.
A grammar for programs with loops is beyond the scope of this paper, but loop-free programs still allow us to reason about bounded loops or loop invariants.
Being syntactically expressible does not significantly restrict programs, but it still guarantees that programs are sufficiently structured to easily resolve predicate modifiers.
We consider $C$ to be \textit{syntactically expressible}, if it adheres to the following restrictions:
\begin{itemize}
	\item The conditions of infer statements may only feature such intervals $i$ that can be expressed using a comparison operator. 
	\item The functions for sample statements must be expressible as finite sums $f(\sigma) = \sum  p_i|\sigma(E_i)\rangle$ for some fixed $G$-expressions $E_i$ and probabilities $p_i$. 
\end{itemize}

\noindent%
	Notably, \Cref{GrammarNew} guarantees that any predicate modifier which is introduced during the computation of $\wpsymbol$ will be removed.
	To do so, every modifier applied to a predicate in $G$ will first be passed down recursively until we reach the base case $Ex(E)$.
	Then, any given modifier can be resolved syntactically as follows:

\begin{lemma}[Resolution]
	Let $E, E'$ be $G$-expressions, $x$ be a variable, $c\in \mathbb{N}$ be a value and $f=\sum  p_i|E_i\rangle $ be the function of a sample statement.
	Then we have:
	\begin{align*}
		Ex(E)[x \mapsto E'] &\eeq Ex(E[x / E']) \tag{Assign}\label{lem:replacement}\\
		Ex(E)|_{x = c} &\eeq \frac{Ex(E\cdot[x=c])}{Ex([x = c])} \tag{Observe} \label{BeliefConditioningTheoremNew} \\
		Ex(E)\statesubst{x}{f} &\eeq Ex\left( E\leftarrow f\right) \tag{Sample} \label{BeliefElimTheoremNew} \\
\end{align*}
where $E[x/ E']$ is the $G$-expression $E$ with all occurrences of $x$ replaced by $E'$, and $E\leftarrow f \coloneqq \sum_{i \in \mathbb{N}} p_i \cdot E[x / E_i]$
\end{lemma}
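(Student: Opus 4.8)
The plan is to prove all three identities \emph{pointwise}: fix an arbitrary belief state $\beta \in \mathcal{D}(\Sigma)$ and show that the two sides agree as values in $\PosRealsInf$. In each case the predicate-level modifier first unfolds to the corresponding belief-state operation via the definitions $Ex(E)[x\mapsto E'] = \lambda\,\beta\mydot Ex(E)(\beta\statesubst{x}{E'})$, and analogously for $\cdot\statesubst{x}{f}$ and $\cdot|_{x=c}$; thereafter everything reduces to unfolding the definition of $Ex(E)(\beta) = \sum_{\sigma\in\Sigma}\beta(\sigma)\cdot\sigma(E)$ and reindexing a sum over $\Sigma$. Since every summand $\beta(\sigma)\cdot\sigma(E)$ is nonnegative, the order of summation may be exchanged freely (Tonelli), even when the sums are infinite, so I need not track convergence. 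The (Assign) identity is the workhorse and is invoked again inside the (Sample) case, so I would establish it first.

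For \textbf{(Assign)}, starting from $Ex(E)(\beta\statesubst{x}{E'}) = \sum_{\sigma'}\beta\statesubst{x}{E'}(\sigma')\cdot\sigma'(E)$, I would substitute $\beta\statesubst{x}{E'}(\sigma') = \sum_{\rho\,:\,\rho\statesubst{x}{E'}=\sigma'}\beta(\rho)$ and swap the order of summation to obtain $\sum_{\rho}\beta(\rho)\cdot\rho\statesubst{x}{E'}(E)$. The standard syntactic-substitution fact $\rho\statesubst{x}{E'}(E) = \rho(E[x/E'])$ then rewrites this as $\sum_{\rho}\beta(\rho)\cdot\rho(E[x/E']) = Ex(E[x/E'])(\beta)$, as stated in the lemma's own remark.

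The \textbf{(Sample)} case is where I expect the real work. Unfolding the $(\dagger)$ definition gives $Ex(E)(\beta\statesubst{x}{f}) = \sum_{\sigma}\bigl(\sum_{\rho\,:\,\rho[x\mapsto\sigma(x)]=\sigma}\beta(\rho)\,f(\rho,\sigma(x))\bigr)\sigma(E)$. The key manipulation is to reindex this double sum by $\rho$: the constraint $\rho[x\mapsto\sigma(x)]=\sigma$ forces $\sigma$ to agree with $\rho$ off $x$, so for fixed $\rho$ the admissible $\sigma$ are exactly $\rho[x\mapsto v]$ for $v\in\mathbb{N}$, yielding $\sum_{\rho}\beta(\rho)\sum_{v\in\mathbb{N}}f(\rho)(v)\cdot\rho[x\mapsto v](E)$. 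Here \emph{syntactic expressibility} is essential: writing $f(\rho)=\sum_i p_i\,|\rho(E_i)\rangle$ lets me expand $f(\rho)(v)=\sum_{i\,:\,\rho(E_i)=v}p_i$, collapse the sum over $v$, and then apply the (Assign) fact in the form $\rho[x\mapsto\rho(E_i)](E)=\rho(E[x/E_i])$. This produces $\sum_{\rho}\beta(\rho)\sum_i p_i\,\rho(E[x/E_i]) = Ex(E\leftarrow f)(\beta)$. The main obstacle is precisely the bookkeeping of the nested sums and the clean use of the finite-sum shape of $f$ to turn a sum over sampled values $v$ into the finite sum over the summands $E_i$.

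For \textbf{(Observe)}, I would unfold the $(\ddagger)$ definition. On the branch $\BMODEL{\beta}{x=c}>0$ this gives $\sum_{\sigma}\frac{\beta(\sigma)\cdot\SMODEL{\sigma}{x=c}}{\BMODEL{\beta}{x=c}}\,\sigma(E)$; pulling out the constant normalizer and using $\SMODEL{\sigma}{x=c}\cdot\sigma(E)=\sigma(E\cdot[x=c])$ together with $\BMODEL{\beta}{x=c}=Ex([x=c])(\beta)$ yields exactly $\frac{Ex(E\cdot[x=c])(\beta)}{Ex([x=c])(\beta)}$. The only subtlety is the zero-probability branch: there $\beta|_{x=c}$ is the all-zero map, so the left-hand side is $0$, while the right-hand side has denominator $Ex([x=c])(\beta)=0$ and hence equals $0$ by the stated convention $\tfrac{a}{0}=0$. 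Thus both cases of the $(\ddagger)$ definition collapse into the single quotient claimed, completing the proof.
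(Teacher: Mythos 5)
Your proposal is correct and follows essentially the same route as the paper: all three identities are proven pointwise by unfolding $Ex(\cdot)$ and the belief-state operations, with (Assign) reduced to the substitution fact $\rho\statesubst{x}{E'}(E)=\rho(E[x/E'])$, (Observe) handled by the same case distinction on $Ex([x=c])$ with the $\tfrac{a}{0}=0$ convention covering the zero branch, and (Sample) done by reindexing the double sum over $\sigma$ by $\rho$, converting the sum over sampled values into the finite sum over the $E_i$, and applying the substitution fact — exactly the steps $=_1$, $=_2$, $=_3$ in the paper's proof. Your explicit appeal to Tonelli for swapping nonnegative sums and your fuller write-out of the (Assign) reindexing only add detail the paper leaves implicit.
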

\begin{proof}
\Cref{lem:replacement} follows from definition of $Ex(E)$ and $\sigma[x\mapsto E'] (E) = \sigma (E[x / E'])$. 
For \Cref{BeliefConditioningTheoremNew} and \Cref{BeliefElimTheoremNew} we refer to \Cref{ConditioningProofNew} and \Cref{BeliefElimProofNew}. 
Here, $E\leftarrow f$ builds upon Jacobs \cite{jacobs2019mathematics} to describe the outcome of a sampling as the expectation over all possible samples.
\end{proof}}


\section{Related Work}

A very common approach to modeling partial observability is to use a POMDP \cite{smallwood1973optimal,durbin1998biological,kress2009temporal}.
However, POMDPs are based on an explicit representation of the state space, which is in stark contrast to the symbolic approach of \PBLIMP and other PPLs.
In particular, the symbolic approach of PPLs allows one to compactly model and reason about state spaces without having to consider each state individually.
Furthermore, Evans et al. \cite{agentmodels} demonstrated how probabilistic programs comparable to \PBLIMP can model POMDPs.
Using a similar approach, \PBLIMP can be used as a language to encode certain POMDPs.
We will not compare POMDPs and PPLs as a means to model partial observability in detail, but we refer to Evans et al. \cite{agentmodels} for an in-depth description.
Instead, we would like to focus more thoroughly on the relation between \PBLIMP and other PPLs.

$\wpsymbol$-calculi in the style of \cite{DBLP:series/mcs/McIverM05} have long been used as a tool for formal verification of program correctness, and our work builds upon this idea.
In the context of PPLs, our $\wpsymbol$ can be considered an inference method 
for probabilistic programs with nested inference, i.e.\ inference about programs which themselves perform inferences.
However, not all PPLs feature nested inference, and some languages \cite{holtzen2020scaling,randone2024inference} trade off the lower expressiveness for faster inference.
Still, PPLs with nested inference can often model the same systems as \PBLIMP, even though their observe statement behaves differently.
Additionally, these languages commonly support continuous distributions, and they may feature richer interaction with the inferred distribution than \PBLIMP's infer statement \cite{baudart2020reactive,tran2017deep,goodman2012church}.

There is a variety of PPLs which use sampling algorithms to perform \textit{approximate} nested inference \cite{dippl,tolpin2016design,tavares2019random,baudart2020reactive}. 
Sampling is often faster than exact inference, but it is difficult to get sound bounds on the quality of the approximation \cite{cowles1996markov,chatterjee2018sample}.

In contrast to sampling, the $\wpsymbol$ calculus performs exact symbolic inference, which can also be performed with Bhat et al \cite{bhat2013deriving} or tools like Hakaru \cite{narayanan2016probabilistic}, and some PPLs like WebPPL \cite{dippl} support inference for discrete distributions by enumerating all paths.
But, to our knowledge, the only exact symbolic inference tool with nested inference is PSI \cite{gehr2020lpsi}. 
PSI is able to analyze programs containing both discrete and continuous distributions, while providing simple descriptions of the inferred distribution.
PSI (and most PPLs) does not support unbounded loops, thus programs like \Cref{CaseStudyLoopy} need to be approximated with bounded loops.
This yields either an under-approximation or no guarantees at all, and bounded loops provide no guarantees regarding the accuracy of the final result \cite{torres2024iteration}.

Therefore, loop invariants as seen with \Cref{CaseStudyLoopyRepeat} are a key distinction between $\wpsymbol$ and existing inference methods for PPLs.
Loop invariants are well-established in the field of program analysis \cite{kaminski2019advanced}, but to our knowledge they have not been applied to probabilistic programs with both conditioning and nested inference.
Loop invariants yield a sound over-approximation of the result, but they usually need to be provided by the developer.
Still, loop invariants for probabilistic programs without conditioning have successfully been automatically synthesized~\cite{batz2023probabilistic}.

\section{Conclusion and Future Work}
Our work  thoroughly laid the theoretical groundwork for probabilistic belief programming, a methodology to write programs for partially observable environments. 
For this, we introduced the imperative probabilistic belief programming language \PBLIMP and provided operational semantics for it. 
We introduced a weakest preexpectation calculus -- sound with respect to the operational semantics -- which can be used to reason about quantitative properties of \PBLIMP programs.
Furthermore, we demonstrated how weakest preexpectations can be calculated in an effective manner, including a treatment of loops.
Nonetheless, there are still challenges for a practical implementation of our work.
In particular, the automatic derivation of loop invariants \cite{batz2023probabilistic,DBLP:journals/pacmpl/SchroerBKKM23} is a crucial feature for the analysis of loops.
In future work we would like to address this gap and implement the efficient and fully automatic analysis of \PBLIMP programs.

	%
	%
	%
	 \bibliographystyle{splncs04}
	\bibliography{bibliographyPBLIMP}
	
	\newpage
	
	\appendix
	\section{Alternative Operational Semantics}\label{semantics_full}
Below we present full operational semantics without a true variable assignment, allowing for easier reasoning in the proof of \Cref{soundness}.
A program configuration is alternatively a triple $\langle C, \beta, p \rangle$, where $C$, $\beta$, and $p$ are interpreted as before, and successor configurations are defined with a small-step operational approach.
The largest difference is that the true variable assignment $\sigma$ is missing in such alternative program configurations.
The semantics are hence very similar to the rules from \Cref{Transitions}, with the exception of the observe and sample statements.

\renewcommand{\arraystretch}{3.4}
\begin{adjustbox}{max width=\textwidth}
	\begin{tabular}{c c}
		\inference{ }{  \langle \SKIP,\, \beta, \, p \rangle  \qtriangleright  \langle \top,\, \beta,\, p \rangle } &
		
		\inference{\beta\statesubst{x_o}{E_o}(\sigma') \coloneqq\sum\limits_{
				\rho\statesubst{x_o}{E_o} \eeq \sigma'} \beta(\rho)}
		{ \langle \ASSIGN{x_o}{E_o},\, \beta,\, p \rangle \qtriangleright \langle \top,\,\beta\statesubst{x_o}{E_o},\, p \rangle }\\
		\inference{}
		{ \langle \CHOOSE{x_u}{f},\,\beta,\, p \rangle \qtriangleright \langle \top,\,\beta\statesubst{x_u}{f},\, p \rangle } &
		
		\inference{\BMODEL{\beta}{ x_u \eeq c} \eeq q \qquad q > 0}
		{  \langle \OBSERVEASSIGN{y_o}{x_u},\, \beta,\, p \rangle \qtriangleright \langle \top,\, \beta|_{x_u=c}\statesubst{y_o}{c},\, p\cdot q \rangle } \\
		
		\inference{(\BMODEL{\beta}{P_o}) \eeq 1}
		{ \langle \ITE{P_o}{C_1}{C_2}\, \beta,\, p \rangle \qtriangleright \langle C_1,\,\beta,\, p \rangle } &
		
		\inference{(\BMODEL{\beta}{P_o}) \eeq 0}
		{ \langle \ITE{P_o}{C_1}{C_2},\, \beta,\, p \rangle \qtriangleright \langle C_2,\,\beta,\, p \rangle } \\
		\multicolumn{2}{c}{\inference{}
			{ \langle \WHILEDO{P_o}{C},\,\beta,\, p \rangle \qtriangleright \langle \ITE{P_o}{\COMPOSE{C}{ \WHILEDO{P_o}{C}}}{\SKIP},\, \beta,\, p \rangle }}\\
		
		\inference{(\BMODEL{\beta}{P_u}) \in i}
		{ \langle \INFER{P_u \in i}{C_1}{C_2},\, \beta,\, p \rangle \qtriangleright \langle C_1,\,\beta,\, p \rangle } &
		
		\inference{\langle C_1,\, \beta,\, 1 \rangle \qtriangleright \langle C_1',\, \beta',\, q \rangle \qquad C_1' \neq \top
		} 
		{ \langle \COMPOSE{C_1}{C_2},\, \beta,\, p \rangle \qtriangleright \langle \COMPOSE{C_1'}{C_2},\, \beta',\, p \cdot q \rangle }\\
		
		\inference{(\BMODEL{\beta}{P_u}) \notin i}
		{ \langle \INFER{P_u \in i}{C_1}{C_2},\, \beta,\, p \rangle \qtriangleright \langle C_2,\,\beta,\, p \rangle } &
		
		\inference{\langle C_1,\, \beta,\, 1 \rangle \qtriangleright \langle \top,\, \beta',\, q \rangle}
		{ \langle \COMPOSE{C_1}{C_2},\, \beta,\, p \rangle \qtriangleright \langle C_2,\, \beta,\, p \cdot q\rangle }\\
	\end{tabular}
\end{adjustbox}

					\paragraph{Observe and Sample.} 
					There is a notable difference between the two operational semantics in the interaction between the observe and sample statement:
					Under the first semantics, a configuration branches out into multiple possible successor configurations when we execute a sample statement, whereas the observe statement only allows for one possible successor configuration. 
					There, \emph{the sample statement} \enquote{picks} the true value immediately and updates the true variable assignment accordingly. 
					In the second -- alternative~-- operational semantics, the sample statement only has one possible successor configuration, whereas the observe statement branches out into multiple successors. 
					Here, \emph{the observe statement} \enquote{picks} the true value, and we need a successor configuration for all the possible observations. 
					In both cases, the same updates are made to the respective belief states with the same probabilities when we transition between configurations.

					\paragraph{Equivalence between the Two Operational Semantics.}
					We already showed how the computation tree rooted in a configuration $(C, \beta, \sigma, 1)$ can be used to generate a sub-distribution $[C]^{\beta, \sigma}$ over the terminal leaves. 
					In a fully analogous manner, we can use the computation tree rooted in the configuration $\langle C,\beta, 1 \rangle$ to generate a sub-distribution $\llbracket C\rrbracket^{\beta}$ over the terminal belief states.
					We will use $\llbracket C\rrbracket^{\beta}(\beta')$ to denote the probability that executing $C$ from a belief state $\beta$ terminates in $\beta'$.
					
					At this point, one might ask how the two semantics relate. 
					As it turns out, both approaches agree on the probability to reach a particular belief state, as is formalized in the following theorem:%
					\begin{theorem}[Operational Correctness]\label{OperationalCorrectnessTheorem}
							Let $C$ a program, $\beta_0$ an initial belief state, $\beta$ a final belief state, and $\sigma$ a variable assignment.
							Then we have:
							\begin{enumerate}
									\item $[C]^{\beta_0}(\beta) \eeq \llbracket C\rrbracket^{\beta_0} (\beta)$
									\item $[C]^{\beta_0} (\beta, \sigma) \eeq \beta(\sigma) \cdot \llbracket C \rrbracket^{\beta_0} (\beta)$
								\end{enumerate}
						\end{theorem}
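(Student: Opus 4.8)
The plan is to assume the already-established Belief Accuracy Preservation theorem (\Cref{beliefAccuracyTheorem}) and to prove statement~(1) by structural induction on $C$, from which statement~(2) follows immediately: once~(1) is known, \Cref{beliefAccuracyTheorem} yields $[C]^{\beta_0}(\beta,\sigma) \eeq \beta(\sigma)\cdot[C]^{\beta_0}(\beta) \eeq \beta(\sigma)\cdot\llbracket C\rrbracket^{\beta_0}(\beta)$, which is exactly~(2). Hence the whole burden rests on~(1), i.e.\ on showing that deferring the probabilistic branching from the sample statement (first semantics) to the observe statement (second semantics) leaves the terminal distribution over belief states unchanged. Throughout I will use the defining identity $[C]^{\beta_0}(\beta) \eeq \sum_\sigma \beta_0(\sigma)\cdot[C]^{\beta_0,\sigma}(\beta)$ together with two routine decomposition facts read off from the operational rules: $\llbracket \COMPOSE{C_1}{C_2}\rrbracket^{\beta_0}(\beta) \eeq \sum_{\beta_1}\llbracket C_1\rrbracket^{\beta_0}(\beta_1)\cdot\llbracket C_2\rrbracket^{\beta_1}(\beta)$ for the second semantics, and analogously $[\COMPOSE{C_1}{C_2}]^{\beta_0,\sigma_0}(\beta,\sigma) \eeq \sum_{\beta_1,\sigma_1}[C_1]^{\beta_0,\sigma_0}(\beta_1,\sigma_1)\cdot[C_2]^{\beta_1,\sigma_1}(\beta,\sigma)$ for the first.

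The base cases \SKIP{} and $\ASSIGN{x_o}{E_o}$ are trivial, as both semantics produce a single $\sigma$-independent successor belief state, so that $\sum_\sigma\beta_0(\sigma)$ collapses to~$1$. The two genuinely interesting base cases are where the semantics differ. For $\CHOOSE{x_u}{f}$, the first semantics branches over each sampled value $c$ into a leaf carrying the same belief state $\beta_0\statesubst{x_u}{f}$ but a different true assignment; summing the branch probabilities $f(\sigma,c)$ over $c$ gives $1$, so $[\CHOOSE{x_u}{f}]^{\beta_0}(\beta_0\statesubst{x_u}{f}) \eeq 1 \eeq \llbracket\CHOOSE{x_u}{f}\rrbracket^{\beta_0}(\beta_0\statesubst{x_u}{f})$, matching the single deterministic successor of the second semantics. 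For $\OBSERVEASSIGN{y_o}{x_u}$ the roles are swapped: in the first semantics each $\sigma$ deterministically yields $\beta_0|_{x_u=\sigma(x_u)}\statesubst{y_o}{\sigma(x_u)}$, and grouping the $\sigma$ by the observed value $c=\sigma(x_u)$ produces exactly the weight $\BMODEL{\beta_0}{x_u = c}$ on that belief state --- precisely the branching weight of the second semantics. Verifying this requires carefully tracking the conditioning-and-renormalization of~($\ddagger$) on p.~\pageref{ddagger-page}, which makes the observe case the most calculation-heavy of the base cases.

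For the inductive step, $\ITE{P_o}{C_1}{C_2}$ and $\INFER{p(P_u) \in i}{C_1}{C_2}$ are easy: their branch condition depends only on the belief state (via $\BMODEL{\beta_0}{P_o}$ resp.\ $\BMODEL{\beta_0}{P_u}$), so both semantics pick the same branch for every $\sigma$ and the claim reduces to the induction hypothesis for $C_1$ or $C_2$. The composition case $\COMPOSE{C_1}{C_2}$ is where \Cref{beliefAccuracyTheorem} does the real work: starting from the first-semantics decomposition and marginalizing over the intermediate true assignment, the weight $[C_1]^{\beta_0}(\beta_1,\sigma_1)$ appears; replacing it by $\beta_1(\sigma_1)\cdot[C_1]^{\beta_0}(\beta_1)$ via \Cref{beliefAccuracyTheorem} lets the inner sum recombine as $\sum_{\sigma_1}\beta_1(\sigma_1)\cdot[C_2]^{\beta_1,\sigma_1}(\beta) \eeq [C_2]^{\beta_1}(\beta)$, and two uses of the induction hypothesis (on $C_1$ and on $C_2$) convert this into the second-semantics decomposition. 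This is exactly the step justifying that the branching may be moved from sample-time to observe-time without changing the terminal belief-state marginal.

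The main obstacle is the \WHILEDO{P_o}{C} case, since unrolling reintroduces the loop and plain structural induction does not apply. The plan is to characterize the loop's terminal mass at any fixed $\beta$ as the supremum over $n$ of the mass produced within at most $n$ loop unrollings (a monotone-increasing quantity whose supremum is the total terminal mass of the possibly infinite computation tree, in both semantics). For each fixed $n$ the two semantics agree by an inner induction on $n$ that appeals to the conditional and composition cases above together with the structural induction hypothesis for the body $C$; passing to the supremum via monotone convergence then lifts the equality to $W$. The delicate points are to argue that the tree-based sub-distribution is genuinely the supremum of its depth-truncations and to justify interchanging this limit with the outer sum $\sum_\sigma\beta_0(\sigma)\cdot(\cdot)$, both of which follow from monotonicity and the monotone convergence theorem, mirroring the $\lfp\Phi \eeq \lim_n\Phi^n(\mathbf 0)$ characterization already used for $\wpsymbol$.
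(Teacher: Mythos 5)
Your proposal is correct and follows essentially the same route as the paper: statement~(2) is obtained directly from \Cref{beliefAccuracyTheorem} together with statement~(1), and statement~(1) is proved by structural induction with the same case analysis — trivial \SKIP{}/assignment cases, sample and observe as the two semantics-swapping base cases, branch-agreement for if/infer, a composition case that invokes belief accuracy to recombine the marginal over intermediate true assignments, and a bounded-loop approximation with an inner induction on the number of unrollings for \WHILESYMBOL{}. The paper justifies the limit step by a contradiction argument about leaves reached after finitely many iterations rather than by explicitly citing monotone convergence, but this is the same underlying idea.
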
%
					\begin{proof}%
							(1)\ follows by structural induction over $C$, see \Cref{Tree_proof}. 
							(2)\ follows directly from \Cref{beliefAccuracyTheorem} and (1).
							\qed
						\end{proof}%

\section{Proof of Preservation of Belief State Accuracy}\label{belief_accuracy}
Here, we formally prove \Cref{beliefAccuracyTheorem}. For convenience, the theorem is repeated below:

Let $C$ be a program, and $\beta_0$ the initial belief state and $\sigma$ a variable assignment. Then we have:
$$[C]^{\beta_0} (\beta, \sigma) \eeq \beta(\sigma) \cdot [C]^{\beta_0} (\beta)$$
\begin{proof}
	We prove the statement by induction on $C$, giving us the following cases:
	\subsection{\SKIP}
	Every initial configuration $(\SKIP, \beta_0, \sigma_0, 1)$ reduces to $(\top, \beta_0, \sigma_0, 1)$. Thus, we have $\beta_0$ as the only reachable belief state with $[\SKIP]^{\beta_0} (\beta_0) = 1$, and the theorem reduces to $[\SKIP]^{\beta_0} (\beta_0, \sigma) = \beta_0(\sigma)$. This follows directly from the transition rules and the definition of $[\SKIP]^{\beta_0}$.
	\subsection{\ASSIGN{x}{E}}
	Similarly to the skip statement, we only have a single reachable belief state $\beta_0\statesubst{x}{E}$ with $[\ASSIGN{x}{E}]^{\beta_0} (\beta_0\statesubst{x}{E}) = 1$, and the theorem again reduces to $[\ASSIGN{x}{E}]^{\beta_0} (\beta_0\statesubst{x}{E}, \sigma) = \beta_0\statesubst{x}{E}(\sigma)$ which follows directly from the transition rules and the definition of $\beta_0\statesubst{x}{E}$, as every initial configuration generates exactly one terminal state.
	\subsection{\ttsymbol{if}  and \ttsymbol{infer}}
	For both the if-statement as well as the infer statement, all relevant initial configurations will take the same branch independent of the particular real variable assignment. This holds, as both the infer statement and the if-statement pick their branch depending only on the belief state $\beta_0$.
	
	Thus, we can assume w.l.o.g. that $\BMODEL{\beta_0}{P_o} = 1$ holds, and we can conclude that we have: 
	$$[\ITE{P_o}{C_1}{C_2}]^{\beta_0} = [C_1]^{\beta_0}$$
	where $C_i$ is the Code of the branch we picked to execute. The theorem now follows immediately from the inductive hypothesis applied to $C_1$. The proof for the case $\BMODEL{\beta_0}{P_o} = 0$ and the infer statement are fully analogous.
	\subsection{\ttsymbol{while}}
	We will first consider the theorem for bounded loops, defined as
	$$\BOUNDEDWHILE{n}{P_o}{C} \coloneqq \begin{cases} \diverge &, n = 0\\
		\ITE{P_o}{\COMPOSE{C}{\BOUNDEDWHILE{n-1}{P_o}{C}}}{\SKIP} &, n>0
	\end{cases}$$
	Where the special statement diverge never terminates. The idea is to derive the behavior of unbounded loops as the limit of the behavior of bounded loops with ever increasing iterations. Formally expressed, this means:
	$$\lim\limits_{n \rightarrow \infty} [\BOUNDEDWHILE{n}{P}{C}]^{\beta_0} (\beta, \sigma) =  [\WHILEDO{P}{C}]^{\beta_0} (\beta, \sigma) $$
	and 
	$$\lim\limits_{n \rightarrow \infty} [\BOUNDEDWHILE{n}{P}{C}]^{\beta_0} (\beta) = [\WHILEDO{P}{C}]^{\beta_0} (\beta) $$
	We only prove the first equation as the second one follows from the first equation. To be precise, we will first show that for the execution of the first $n$ loops iterations, the terminated leaves on the computation trees for $(\BOUNDEDWHILE{n}{P}{C}, \beta, \sigma, 1)$ and $(\WHILEDO{P}{C}, \beta, \sigma, 1)$ are identical. For our purposes we consider an iteration of the loop to start with checking the condition, and to end when the loop body or skip has been executed. 
	\begin{proof}Proof by induction:
		\begin{itemize}
			\item [$n=0$:] The tree for $\BOUNDEDWHILE{0}{P}{C}$ has no terminated leaves, as by definition we have $\BOUNDEDWHILE{0}{P}{C} \eeq \diverge$. Additionally, \WHILEDO{P}{C} also has no terminated leaves by our our definition of a loop iteration. The theorem holds trivially.
			\item [$n>0$:] We are now able to unroll both while loops into an if-statement. Thus we get the following statements:\\
			\renewcommand{\arraystretch}{2.5}
			\begin{tabular}{r c l}
				$\BOUNDEDWHILE{n}{P}{C}$ & turns to & $\ITE{P_o}{\COMPOSE{C}{\BOUNDEDWHILE{n-1}{P_o}{C}}}{\SKIP}$\\
				$\WHILEDO{P}{C}$ & turns to & $\ITE{P_o}{\COMPOSE{C}{\WHILEDO{P}{C}}}{\SKIP}$
			\end{tabular}\\ 
			We make a case distinction on whether $\BMODEL{\beta_0}{P} = 0$ or $\BMODEL{\beta_0}{P} = 1$ holds, as one has to hold, and resolve the if-statement:
			\begin{itemize}
				\item $\BMODEL{\beta_0}{P} = 0$: In this case both $\BOUNDEDWHILE{n}{P}{C}$ and $\WHILEDO{P}{C}$ behave like $\SKIP$ and create the same terminal leaf after one transition
				\item $\BMODEL{\beta_0}{P} = 1$: In this case both the bounded and unbounded loop take the same branch of the if statement, 
				and it remains to show that $\COMPOSE{C}{\BOUNDEDWHILE{n-1}{P_o}{C}}$ and $\COMPOSE{C}{\WHILEDO{P}{C}}$ create the same terminated leafs for the first $n$ loop iterations.
				
				Both programs first execute $C$ and terminate in identical intermediate belief states and variable assignments with the same probabilities. This is the first loop iteration. Afterwards, we execute $\BOUNDEDWHILE{n-1}{P_o}{C}$ or $\WHILEDO{P}{C}$ starting from those intermediate belief states. However, by the inductive hypothesis, this will create identical leaves for the next $n-1$ iterations, thus leading to the same terminated leaves for in total $n$ iterations.
			\end{itemize} 
		\end{itemize}
	\end{proof}
	Now assume $\lim_{n \rightarrow \infty} [\BOUNDEDWHILE{n}{P}{C}]^{\beta_0} (\beta, \sigma) \neq [\WHILEDO{P}{C}]^{\beta_0} (\beta, \sigma)$. Thus, there has to be a terminal leaf within either $\lim_{n \rightarrow \infty} [\BOUNDEDWHILE{n}{P}{C}]^{\beta_0}$ or $[\WHILEDO{P}{C}]^{\beta_0}$ which does not have a counterpart in the other one. However, that leaf has to be reached after a finite number of loop iterations, and thus it also has to be present in the other computation tree. We have reached a contradiction, and thus the assumption cannot be true.
	
	It now suffices to prove $[\BOUNDEDWHILE{n}{P}{C}]^{\beta_0} (\beta, \sigma) = \beta(\sigma) \cdot [\BOUNDEDWHILE{n}{P}{C}]^{\beta_0} (\beta)$, 
	proof by induction over $n$:
	\begin{itemize}
		\item [$n = 0$:] This case is trivial as there are no terminated configurations and the result is thus $0$ on both sides
		\item [$n > 0$:] We have $\BOUNDEDWHILE{n}{P}{C} = \ITE{P_o}{\COMPOSE{C}{\BOUNDEDWHILE{n-1}{P_o}{C}}}{\SKIP}$. If $\BMODEL{\beta_0}{P} = 0$ the expression reduces to $\SKIP$, which can again be proven as was done in the case for $\SKIP$. Otherwise we have $\BMODEL{\beta_0}{P} = 1$, and
		the theorem follows with the same argumentation as was done in the case for $\COMPOSE{C_1}{C_2}$ and the inductive hypothesis.
	\end{itemize}
	\subsection{$\COMPOSE{C_1}{C_2}$}
	Note that we can divide the execution of $\COMPOSE{C_1}{C_2}$ into a prefix which executes $C_1$ and terminates in some intermediate belief state $\beta'$, and a suffix which executes $C_2$ starting from that particular $\beta'$. However, hidden in the definition of $[C_2]^{\beta'}$ is the weighted sum over all possible initial variable assignments, so that we can then consider all possible computation trees.
	
	Thus, the weighted sum over initial computation trees $[C_2]^{\beta', \sigma'}$ for $C_2$ has to match with the probabilities to reach the terminal configurations $(\top, \beta', \sigma', p)$ when executing $C_1$. This amounts to proving $[C_1]^{\beta_0} (\beta', \sigma') = \beta'(\sigma') \cdot [C_1]^{\beta_0} (\beta')$, which holds in accordance with the inductive hypothesis for $C_1$.
	Thus, we can express $[\COMPOSE{C_1}{C_2}]^{\beta_0}$ using $[C_1]^{\beta_0}$ and $[C_2]^{\beta'}$ by considering all possible intermediate belief states $\beta'$. We now get:  
	\begin{align*}
		[\COMPOSE{C_1}{C_2}]^{\beta_0} (\beta, \sigma) &= \sum_{\beta' \in \mathcal{D}(\Sigma)} [C_1]^{\beta_0} (\beta') \cdot [C_2]^{\beta'} (\beta, \sigma) \\
		&=_{IH} \sum_{\beta' \in \mathcal{D}(\Sigma)} [C_1]^{\beta_0} (\beta') \cdot (\beta(\sigma) \cdot [C_2]^{\beta'} (\beta)) \\
		&= \beta(\sigma) \cdot \sum_{\beta' \in \mathcal{D}(\Sigma)} [C_1]^{\beta_0} (\beta') \cdot [C_2]^{\beta'} (\beta) \\
		&= \beta(\sigma) \cdot [\COMPOSE{C_1}{C_2}]^{\beta_0} (\beta)
	\end{align*}

	\subsection{\CHOOSE{x}{f}}
	For $[\CHOOSE{x}{f}]^{\beta_0}$ the only reachable terminal belief state is $\beta_0\statesubst{x}{f}$, we thus have $[\CHOOSE{x}{f}]^{\beta_0}(\beta\statesubst{x}{f}) = 1$. Furthermore the probability to reach a leaf with the real variable assignment $\sigma$ is 
	$$\sum\limits_{\rho \in \Sigma, \rho[x\mapsto \sigma(x)] = \sigma} f(\rho, \sigma(x))\cdot \beta_0(\rho)$$ 
	This can be understood as the probability to start in any $\rho$ which could turn into $\sigma$ after updating $x$ to the correct value, and then the probability for $f$ to "sample" the right value, i.e.\ $f(\rho,\sigma(x))$. That expression by definition is equal to $\beta_0\statesubst{x}{f} (\sigma)$. We thus get:
	\begin{align*}
		&[\CHOOSE{x}{f}]^{\beta_0} (\beta_0\statesubst{x}{f}, \sigma) \\
		&= \sum_{\rho \in \Sigma, \rho[x\mapsto \sigma(x)] \eeq \sigma} f(\rho,\sigma(x))\cdot \beta_0(\rho)\\
		&= \beta_0\statesubst{x}{f}(\sigma)\\
		&= \beta_0\statesubst{x}{f}(\sigma) \cdot 1\\
		&= \beta_0\statesubst{x}{f}(\sigma) \cdot [\CHOOSE{x}{f}]^{\beta_0}(\beta_0\statesubst{x}{f})
	\end{align*}

	\subsection{\OBSERVEASSIGN{y}{x}}
	The observe statement can be split into two parts: A "pure" observe statement \OBSERVE{x} which only conditions the belief state via $\beta_0|_{x=c}$, and an assignment which updates the value of $y$ with the (now deterministic) value of $x$.
	It suffices to consider the first part, i.e. $\OBSERVE{x}$, as the second part can be proven fully analogous to the case for the assignment, because after the observe statement has been executed there is only a single possible value for $x$ in the belief state.
	Afterwards the full observe statement follows from the proofs for $\OBSERVE{x}$, the assignment and sequential execution.
	
	All reachable belief states have the form $\beta_0|_{x=c}\statesubst{y}{c}$. 
	By definition of the transition relation, $\beta_0|_{x=c}$ is reached when $\sigma(x) = c$ holds in the real variable assignment. The probability to terminate with the belief state $\beta|_{x=c}$ can thus be derived from the probability to start with a variable assignment where $\sigma(x) = c$ holds. By the definition of $[C]^{\beta}$ we get: 
	$$[\OBSERVE{x}]^{\beta_0} (\beta_0|_{x=c}) =\sum_{\sigma\in \Sigma} \beta_0(\sigma)\cdot \SMODEL{\sigma}{x=c} \eeq \BMODEL{\beta_0}{x = c}$$ 
	We distinguish two cases: Belief states $\beta_0|_{x=c}$ with $\BMODEL{\beta_0}{x = c} = 0$ and belief states $\beta_0|_{x=c}$ with $\BMODEL{\beta_0}{x = c} > 0$. For the first case the theorem follows trivially, as we have $[\OBSERVE{x}]^{\beta_0} (\beta_0|_{x=c}) =  0$, because no possible initial variable assignment satisfies $\sigma(x) = c$ and thus $\beta_0|_{x=c}$ is unreachable. In the other case, we additionally assume that $\sigma(x) = c$ holds, as otherwise the theorem holds trivially because we have 
	$$[\OBSERVE{x}]^{\beta_0} (\beta_0|_{x=c}, \sigma) =0  =  \beta_0|_{x=c}(\sigma) \cdot [\OBSERVE{x}]^{\beta_0} (\beta_0|_{x=c})$$ 
	by definition of $\beta|_{x=c}$, the transition rule for observe and $\beta_0|_{x=c} \neq \beta_0|_{x=\sigma(x)}$ . Finally, we get:
	\begin{align*}
		[\OBSERVE{x}]^{\beta_0} (\beta_0|_{x=c}, \sigma) &= \sum\limits_{\sigma_0 \in \Sigma} \beta_0(\sigma_0) \cdot [\OBSERVE{x}]^{\beta_0, \sigma_0}(\beta_0|_{x=c}, \sigma)\\
		&=_1 \beta_0(\sigma) \cdot [C]^{\beta_0, \sigma}(\beta_0|_{x=c}, \sigma)\\
		&=_2 \beta_0(\sigma) \\
		&=\frac{\beta_0(\sigma)}{\BMODEL{\beta_0}{x = c}} \cdot \BMODEL{\beta_0}{x = c} \\
		&=\frac{\beta_0(\sigma) \cdot \SMODEL{\sigma}{x=c}}{\BMODEL{\beta_0}{x = c}} \cdot \BMODEL{\beta_0}{x = c} \\
		&=\beta_0|_{x=c}(\sigma) \cdot \BMODEL{\beta_0}{x = c} \\
		&=\beta_0|_{x=c}(\sigma) \cdot [\OBSERVE{x}]^{\beta_0} (\beta_0|_{x=c})
	\end{align*}
	The equation $=_1$ is sound because the observe statement does not modify the true variable assignment, and thus the term $[\OBSERVE{x}]^{\beta_0, \sigma_0}(\beta_0|_{x=c}, \sigma)$ is $0$ unless we have $\sigma_0 = \sigma$. The equation $=_2$ follows from $[\OBSERVE{x}]^{\beta_0, \sigma}(\beta_0|_{x=c}, \sigma) = 1$, which is a consequence of the transition rule for observe and our assumption $\sigma(x) = c$.
\end{proof}

\section{Proof of Operational Correctness}\label{Tree_proof}
Here, we formally prove \Cref{OperationalCorrectnessTheorem}. For convenience, the theorem is repeated below:

Let $C$ be a program, $\beta_0$ the initial belief state, and $\beta$ another belief state. Then we have:
$$[C]^{\beta_0}(\beta) \eeq \llbracket C\rrbracket^{\beta_0} (\beta)$$
\begin{proof}
	Proof by structural induction over $C$, giving us the following cases:
	\subsection{\SKIP}
	If we consider both computation trees, we realize that only the belief state $\beta_0$ is reachable in the terminal configurations, we thus have $[\SKIP]^{\beta_0}(\beta_0) = 1$ and $\llbracket \SKIP\rrbracket^{\beta_0} (\beta_0) = 1$. The theorem now follows directly.
	\subsection{\ASSIGN{x}{E}}
	We only have one reachable belief state, namely the belief state $\beta_0\statesubst{x}{E}$ with $[\ASSIGN{x}{E}]^{\beta_0}(\beta_0\statesubst{x}{E}) = 1$ and $\llbracket \ASSIGN{x}{E}\rrbracket^{\beta_0} (\beta_0\statesubst{x}{E}) = 1$. The theorem now follows directly.
	\subsection{\ITE{P_o}{C_1}{C_2}}
	As $P_o$ may only use observable variables, and as we only consider consistent belief states, we must also have $\BMODEL{\beta}{P_o} = 1$ or $\BMODEL{\beta}{P_o} = 0$, we assume w.l.o.g. $\BMODEL{\beta}{P_o} = 1$.
	
	As the if-statement only checks the condition with the belief state, all initial configurations for $[\IFSYMBOL\dots]^{\beta_0}$ take the same branch of the if-statement, and make a transition from $(\IFSYMBOL\dots, \beta_0, \sigma, 1)$ to $(C_1, \beta, \sigma, 1)$. Note that after this transition we arrive at the initial configurations for the forest of $[C_1]^{\beta_0}$, it should be clear that $[\IFSYMBOL\dots]^{\beta_0} (\beta)= [C_1]^{\beta_0} (\beta)$ holds.
	
	For $\llbracket \IFSYMBOL\dots\rrbracket^{\beta_0}$, we then have $\langle \IFSYMBOL\dots, \beta, 1 \rangle \triangleright \langle C_1,\beta, 1 \rangle$ as the only successor, thus the trees for $C$ and $C_1$ have the same leafs and create the same partial distributions. The theorem now follows from the inductive hypothesis for $C_1$.
	\subsection{while ($P_o$) C}
	We will first consider the theorem for bounded loops, defined as 
	$$\BOUNDEDWHILE{n}{P_o}{C} \coloneqq \begin{cases} \diverge &, n = 0\\
		\ITE{P_o}{\COMPOSE{C}{\BOUNDEDWHILE{n-1}{P_o}{C}}}{\SKIP} &, n>0
	\end{cases}$$
	This is possible as we have 
	$$\lim\limits_{n \rightarrow \infty} \llbracket \BOUNDEDWHILE{n}{P_o}{C}\rrbracket^{\beta_0} (\beta) = \llbracket \WHILEDO{P}{C}\rrbracket^{\beta_0} (\beta) $$
	and 
	$$\lim\limits_{n \rightarrow \infty} [\BOUNDEDWHILE{n}{P_o}{C}]^{\beta_0}\, (\beta) = [\WHILEDO{P}{C}]^{\beta_0}\, (\beta) $$
	The second equation has already been proven in \Cref{belief_accuracy}. We will prove the first equation by first showing that for the execution of the first $n$ loops iterations, the terminated leaves on the trees for $\llbracket \BOUNDEDWHILE{n}{P_o}{C}\rrbracket^{\beta_0}$ and $\llbracket \WHILEDO{P}{C}\rrbracket^{\beta_0}$ are identical. We consider a loop iteration to start when the loop condition is checked, and to end just before the condition would be checked again. We will do a proof by induction on $n$:
	\begin{itemize}
		\item [$n=0$:]  $\BOUNDEDWHILE{0}{P_o}{C}$ has no terminated leaves, as by definition of bounded loops we have $\BOUNDEDWHILE{0}{P_o}{C} = \diverge$. 
		Similarly the tree for $\WHILEDO{P}{C}$ also has no leaves which terminate within $0$ loop iterations by our definition of a loop iteration. The theorem holds trivially.
		\item [$n > 0$:] $\BOUNDEDWHILE{n}{P_o}{C}$ reduces to $\ITE{P_o}{\COMPOSE{C}{\BOUNDEDWHILE{n-1}{P_o}{C}}}{\SKIP}$ whereas the unbounded loop \WHILEDO{P}{C} behaves like the statement $\ITE{P_o}{\COMPOSE{C}{\WHILEDO{P}{C}}}{\SKIP}$. 
		We make a case distinction on whether $\BMODEL{\beta_0}{P} = 0$ or $\BMODEL{\beta_0}{P} = 1$ holds, as one has to hold, and resolve the if-statement:
		\begin{itemize}
			\item $\BMODEL{\beta_0}{P} = 0$: In this case both $\BOUNDEDWHILE{n}{P_o}{C}$ and $\WHILEDO{P}{C}$ behave like $\SKIP$ and create the same leaf after one transition.
			\item $\BMODEL{\beta_0}{P} = 1$: In this case the computation tree for $\BOUNDEDWHILE{n}{P_o}{C}$ behaves like $\COMPOSE{C}{\BOUNDEDWHILE{n-1}{P_o}{C}}$ and $\WHILEDO{P}{C}$ behaves like $\COMPOSE{C}{\WHILEDO{P_o}{C}}$. It remains to show that \COMPOSE{C}{\BOUNDEDWHILE{n-1}{P_o}{C}} and $\COMPOSE{C}{\WHILEDO{P_o}{C}}$ create the same terminated leafs for the first $n$ loop iterations. Both programs first execute $C$ and terminate in identical intermediate belief states with the same probabilities. This is the first loop iteration. Afterwards, we execute $\BOUNDEDWHILE{n-1}{P_o}{C}$ or $\WHILEDO{P}{C}$ starting from those intermediate belief states. However, by the inductive hypothesis, this will create identical leaves for the next $n-1$ iterations, thus leading to the same terminated leaves for a total of $n$ iterations.
		\end{itemize} 
	\end{itemize}
	Now assume $\lim_{n \rightarrow \infty} \llbracket \BOUNDEDWHILE{n}{P_o}{C}\rrbracket^{\beta_0} (\beta) \neq \llbracket \WHILEDO{P}{C}\rrbracket^{\beta_0} (\beta)$. Thus, there has to be some terminated leaf in either $\lim_{n \rightarrow \infty} \llbracket \BOUNDEDWHILE{n}{P_o}{C}\rrbracket^{\beta_0}$ or $\llbracket \WHILEDO{P}{C}\rrbracket^{\beta_0}$ which does not exist in the other one. However, that particular leaf has to be reached after a finite number of loop iterations, and thus we just showed that the leaf also occurs in the other computation tree. We conclude that the assumption cannot hold.
	In order to prove the original theorem, it suffices to show $\llbracket \BOUNDEDWHILE{n}{P_o}{C}\rrbracket^{\beta_0} (\beta) = [\BOUNDEDWHILE{n}{P_o}{C}]^{\beta_0} (\beta)$. We again conduct a proof by induction over $n$:\\
	\begin{itemize}
		\item [$n=0$:] Trivial as both diverge and thus both evaluate to $0$
		\item [$n>0$:] We have $\BOUNDEDWHILE{n}{P_o}{C} = \ITE{P_o}{\COMPOSE{C}{\BOUNDEDWHILE{n-1}{P_o}{C}}}{\SKIP}$. If $\BMODEL{\beta_0}{P} = 0$ the expression reduces to $\SKIP$, which can again be proven as was done in the case for $\SKIP$. Otherwise we have $\BMODEL{\beta_0}{P} = 1$, and we need to show: 
		$$\llbracket \COMPOSE{C}{\BOUNDEDWHILE{n-1}{P_o}{C}}\rrbracket^{\beta_0} (\beta) = [\COMPOSE{C}{\BOUNDEDWHILE{n-1}{P_o}{C}}]^{\beta_0} (\beta)$$
		The equation above follows directly by using the same argumentation as in the case for $\COMPOSE{C_1}{C_2}$ and the inductive hypothesis using $n-1$.
	\end{itemize}
	\subsection{$\COMPOSE{C_1}{C_2}$}
	The execution of $\COMPOSE{C_1}{C_2}$ can be divided into a prefix which executes $C_1$ and terminates in a belief state $\beta'$, and then continuing with the execution of $C_2$ from there. We thus have:
	$$[\COMPOSE{C_1}{C_2}]^{\beta_0} (\beta) = \sum_{\beta' \in \mathcal{D}(\Sigma)}  [C_1]^{\beta_0} (\beta') \cdot [C_2]^{\beta'} (\beta) $$
	as well as 
	$$\llbracket \COMPOSE{C_1}{C_2} \rrbracket^{\beta_0} (\beta) = \sum_{\beta' \in \mathcal{D}(\Sigma)} \llbracket C_1 \rrbracket^{\beta_0} (\beta') \cdot \llbracket C_2 \rrbracket^{\beta'} (\beta) $$
	The first equation has already been argued for in the case $\COMPOSE{C_1}{C_2}$ of \cref{belief_accuracy}.	The second equation can be explained as follows:
	
	If we consider a particular terminal configuration $\langle \top, \beta, p\rangle$ in the computation tree of $\llbracket \COMPOSE{C_1}{C_2}\rrbracket^{\beta_0}$, then the path from the initial configuration $\langle \COMPOSE{C_1}{C_2}, \beta_0, 1\rangle$ to that leaf can be divided into two parts: We first execute $C_1$ independently of $C_2$ until $C_1$ terminates and we reach a configuration of the form $\langle C_2, \beta', q\rangle$, and then we execute $C_2$ until we finally reach the terminal configuration. This behavior is formalized in the equation above, but we additionally sum over all possible intermediate belief states $\beta'$ to get the total probability to terminate with the belief state $\beta$.
	The theorem now follows directly from the inductive hypothesis for $C_1$ and $C_2$ as well as the two equations.

	\subsection{\CHOOSE{x}{f}}
	We only have one reachable belief state, namely the belief state $\beta_0\statesubst{x}{f}$ with $[\CHOOSE{x}{f}]^{\beta_0}(\beta_0\statesubst{x}{f}) = 1$ and $\llbracket \CHOOSE{x}{f}\rrbracket^{\beta_0} (\beta_0\statesubst{x}{f}) = 1$. The theorem now follows directly.
	\subsection{\OBSERVEASSIGN{y}{x}}
	The observe statement can be split into two parts: A "pure" observe statement \OBSERVE{x} which only conditions the belief state via $\beta_0|_{x=c}$, and an assignment which updates the value of $y$ with the (now deterministic) value of $x$.
	It suffices to consider the first part, i.e. $\OBSERVE{x}$, as the second part can be proven fully analogous to the case for the assignment, because after the observe statement has been executed there is only a single possible value for $x$ in the belief state. 
	Afterwards the full observe statement follows from the proofs for $\OBSERVE{x}$, the assignment and sequential execution.
	
	If we now consider $\llbracket \OBSERVE{x}\rrbracket^{\beta_0}$, then we notice that the initial configuration $\langle \OBSERVE{x}, \beta_0, 1\rangle$ splits into multiple terminal configurations of the form $\langle \top, \beta_0|_{x=c}, \BMODEL{\beta_0}{x = c} \rangle$ for every possible value $c$ with $\BMODEL{\beta_0}{x = c} > 0$. Thus, all reachable, terminal belief states have the form $\beta_0|_{x=c}$ for some $c$. It now suffices to show: 
	$$ \llbracket \OBSERVE{x}\rrbracket^{\beta_0} (\beta_0|_{x=c}) = \BMODEL{\beta_0}{x = c} = [\OBSERVE{x}]^{\beta_0} (\beta_0|_{x=c}) $$ 
	The first equation follows directly from the transition rules of configurations. $[\OBSERVE{x}]^{\beta_0}$ is considered below. Notice that $[\OBSERVE{x}]^{\beta_0,\sigma}(\beta_0|_{x=c}) = 1$ iff $\sigma(x) = c$ and $0$ otherwise, this follows directly from the transition rules. Thus we get:
	\begin{align*}
		[\OBSERVE{x}]^{\beta_0}(\beta_0|_{x=c}) &= \sum\limits_{\sigma \in \Sigma}\beta_0(\sigma)\cdot [\OBSERVE{x}]^{\beta_0, \sigma}(\beta_0|_{x=c})\\ 
		&=_1 \sum\limits_{\sigma \in \Sigma}\beta_0(\sigma)\cdot\SMODEL{\sigma}{x=c}\\ 
		&= \BMODEL{\beta_0}{x = c}
	\end{align*}
	The equation $=_1$ holds as we have $[\OBSERVE{x}]^{\beta_0, \sigma}(\beta_0|_{x=c}) = 1$ whenever $\sigma(x)=c$ holds, and $0$ otherwise. The theorem now follows directly.
	\subsection{\INFER{p(P) \in i}{C_1}{C_2}}
	The proof is fully analogous to the if-statement.
\end{proof}

\section{Proof of the Soundness Theorem}\label{Soundness_proof}
In this section we formally prove \Cref{soundness}. 
We will not prove the theorem directly, instead we will prove the following theorem, which uses an alternative definition of the operational semantics without a variable assignment.
The alternative semantics simplify the proof, and we present them in more detail in \Cref{semantics_full}.
The original theorem then follows from the statement below and \Cref{OperationalCorrectnessTheorem}.
We now state the theorem which we will prove:

\begin{theorem}[Adapted Soundness]
Let $C$ be a program, $\beta$ an initial belief state and $F$ a predicate on belief states. Then we have:
$$ \wp{C}{F}\, (\beta_0) \eeq \sum_{\beta \in \mathcal{D}(\Sigma)} \llbracket C \rrbracket^{\beta_0}(\beta) \cdot F (\beta)$$
\end{theorem}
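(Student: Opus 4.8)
The plan is to prove the statement by structural induction on $C$, checking in each case that the rule for $\wpsymbol$ from \Cref{wp} agrees with the expected value of $F$ against the sub-distribution $\llbracket C\rrbracket^{\beta_0}$ produced by the (variable-assignment-free) operational semantics of \Cref{semantics_full}. For the atomic statements this is a direct computation. For $\SKIP$, $\ASSIGN{x}{E}$ and $\CHOOSE{x}{f}$ there is exactly one reachable terminal belief state ($\beta_0$, $\beta_0\statesubst{x}{E}$, and $\beta_0\statesubst{x}{f}$ respectively), each reached with probability $1$, so the right-hand side collapses to $F$ evaluated there, which is precisely $F(\beta_0)$, $F\statesubst{x}{E}(\beta_0)$, and $F\statesubst{x}{f}(\beta_0)$. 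For $\OBSERVE{x}$ the initial configuration branches into the terminal configurations $\langle\top,\beta_0|_{x=c}\statesubst{y}{c},\BMODEL{\beta_0}{x=c}\rangle$ over all $c$ with $\BMODEL{\beta_0}{x=c}>0$, so the sum over terminal belief states reproduces exactly $\sum_{c} Pr(x=c)\cdot F|_{x=c}\statesubst{y}{c}$ evaluated at $\beta_0$. For $\ITE{P_o}{C_1}{C_2}$ and $\INFER{p(P)\in i}{C_1}{C_2}$, consistency of $\beta_0$ forces the guard's value to lie in $\{0,1\}$ (respectively forces $\iverson{Pr(P)\in i}(\beta_0)\in\{0,1\}$), so exactly one branch is taken without altering the belief state; the claim then follows from the induction hypothesis applied to that branch, the other summand being annihilated by its zero coefficient.

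For sequential composition I would use the tree-decomposition identity $\llbracket\COMPOSE{C_1}{C_2}\rrbracket^{\beta_0}(\beta)=\sum_{\beta'\in\mathcal D(\Sigma)}\llbracket C_1\rrbracket^{\beta_0}(\beta')\cdot\llbracket C_2\rrbracket^{\beta'}(\beta)$ established in the proof of \Cref{OperationalCorrectnessTheorem}. Substituting this into the right-hand side and reordering the non-negative sums yields $\sum_{\beta'}\llbracket C_1\rrbracket^{\beta_0}(\beta')\cdot\bigl(\sum_{\beta}\llbracket C_2\rrbracket^{\beta'}(\beta)\cdot F(\beta)\bigr)$. The inner parenthesis is $\wp{C_2}{F}(\beta')$ by the induction hypothesis for $C_2$, and applying the induction hypothesis for $C_1$ to the postexpectation $\wp{C_2}{F}$ gives $\wp{C_1}{\wp{C_2}{F}}(\beta_0)$, which is the defining right-hand side of $\wpsymbol$ for composition.

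The while loop is the crux. I would route through the bounded loops $\BOUNDEDWHILE{n}{P_o}{C}$ of \Cref{belief_accuracy}, with $\BOUNDEDWHILE{0}{P_o}{C}=\diverge$. First, unfolding the definition and using the already-treated rules for conditionals and composition, an inner induction on $n$ shows $\wp{\BOUNDEDWHILE{n}{P_o}{C}}{F}=\Phi^n(\mathbf 0)$, where $\Phi(X)=(1-Pr(P_o))\cdot F+Pr(P_o)\cdot\wp{C}{X}$ and $\wp{\diverge}{F}=\mathbf 0$. The same inner induction, combined with the structural induction hypothesis for the body $C$ and the composition and conditional cases above, establishes soundness for every bounded loop, i.e. $\Phi^n(\mathbf 0)(\beta_0)=\sum_\beta\llbracket\BOUNDEDWHILE{n}{P_o}{C}\rrbracket^{\beta_0}(\beta)\cdot F(\beta)$. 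It then remains to take $n\to\infty$. On the $\wpsymbol$ side, $\Phi^n(\mathbf 0)\to\lfp\Phi=\wp{\WHILEDO{P_o}{C}}{F}$ by the Kleene-iteration characterisation stated after \Cref{wp}. On the operational side, the proof of \Cref{OperationalCorrectnessTheorem} already shows $\lim_{n\to\infty}\llbracket\BOUNDEDWHILE{n}{P_o}{C}\rrbracket^{\beta_0}(\beta)=\llbracket\WHILEDO{P_o}{C}\rrbracket^{\beta_0}(\beta)$ for each $\beta$, with the bounded sub-distributions increasing in $n$; hence the monotone convergence theorem lets me pull the limit inside the sum, producing $\sum_\beta\llbracket\WHILEDO{P_o}{C}\rrbracket^{\beta_0}(\beta)\cdot F(\beta)$. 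Equating the two limits closes the case.

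The main obstacle is exactly this limit argument for loops: I must ensure both that $\Phi$ is $\omega$-continuous so that $\lfp\Phi$ really is the pointwise supremum of the increasing chain $\Phi^n(\mathbf 0)$, and that the interchange of $\lim_n$ with the infinite sum over belief states is licensed. Both rest on non-negativity and monotonicity --- the chain $\Phi^n(\mathbf 0)$ is increasing, and the bounded-loop sub-distributions grow monotonically with $n$ --- so that monotone convergence applies despite $F$ possibly taking the value $\infty$ (using the stated conventions $0\cdot\infty=0$). The remaining work is bookkeeping: verifying $\wp{\diverge}{F}=\mathbf 0$, and checking that the observe case correctly handles the re-normalisation and the concurrent assignment to $y$, which factors through the assignment case exactly as in \Cref{belief_accuracy}.
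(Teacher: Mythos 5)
Your proposal matches the paper's own proof essentially step for step: the same structural induction over the variable-assignment-free semantics, the same single-terminal-state computations for $\SKIP$, $\ASSIGN{x}{E}$ and $\CHOOSE{x}{f}$, the same branch-selection argument via consistency for if/infer, the same tree-decomposition identity and sum-reordering for composition, and the same bounded-loop strategy for while (with $\wp{\diverge}{F} = \mathbf{0}$), relying on the Kleene/Scott-continuity characterisation proved in \Cref{wpConvergence} and the operational limit from \Cref{Tree_proof}. The only difference is one of explicitness: you invoke monotone convergence to justify interchanging the limit with the infinite sum over belief states, a step the paper's proof leaves implicit when it asserts that soundness \enquote{follows directly from the three equations above}.
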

\begin{proof}
	We will prove the theorem by structural induction over $C$, giving us the following cases:
	\subsection{\SKIP}
	We have $\llbracket \SKIP \rrbracket^{\beta_0}(\beta_0) = 1$ and $\llbracket \SKIP \rrbracket^{\beta_0}(\beta) = 0$ for all other belief states $\beta$, thus we have 
	$$\wp{\SKIP}{F}\, (\beta_0) = F (\beta_0) = \llbracket \SKIP \rrbracket^{\beta_0} (\beta_0) \cdot F(\beta_0) = \sum_{\beta \in \mathcal{D}(\Sigma)} \llbracket \SKIP \rrbracket^{\beta_0}(\beta) \cdot F (\beta)$$
	
	\subsection{\ASSIGN{x}{E}}
	We have $\llbracket \ASSIGN{x}{E}\rrbracket^{\beta_0}(\beta_0\statesubst{x}{E}) = 1$ and $\llbracket \ASSIGN{x}{E} \rrbracket^{\beta_0}(\beta) = 0$ for all other belief states $\beta$, thus we can conclude:
	\begin{align*}
		\wp{\ASSIGN{x}{E}}{F}\, (\beta_0) &= F\statesubst{x}{E} (\beta_0)\\ 
		&= F (\beta_0\statesubst{x}{E})\\
		&= \llbracket  \ASSIGN{x}{E}\rrbracket^{\beta_0} (\beta_0\statesubst{x}{E}) \cdot F(\beta_0\statesubst{x}{E})\\ 
		&= \sum_{\beta \in \mathcal{D}(\Sigma)} \llbracket \ASSIGN{x}{E} \rrbracket^{\beta_0}(\beta) \cdot F (\beta)
	\end{align*}
	\subsection{\ITE{P_o}{C_1}{C_2}}
	$P_o$ is an observable proposition, which means we have either $Pr(P_o) (\beta_0)= \BMODEL{\beta_0}{P_o} = 1$ or $Pr(P_o) (\beta_0) = \BMODEL{\beta_0}{P_o} = 0$, as we only consider consistent belief states. Now assume w.l.o.g. that $\BMODEL{\beta_0}{P_o} = 1$ holds. 
	
	Thus we have the transition $\langle \ITE{P_o}{C_1}{C_2}, \beta, 1 \rangle \triangleright \langle C_1,\beta, 1 \rangle$. From that follows that $\llbracket\IFSYMBOL\dots\rrbracket^{\beta_0} = \llbracket C_1\rrbracket^{\beta_0}$ holds. As $C_1$ is structurally smaller than the if-statement, we may apply the inductive hypothesis and the theorem can be proven as follows:
	\begin{align*}
		&\wp{\IFSYMBOL\dots}{F} (\beta_0) \\
		&= Pr(P_o) \cdot \wp{C_1}{F}\, (\beta_0) + (1 - Pr(P_o)) \cdot \wp{C_2}{F}\, (\beta_0)\\
		&= \BMODEL{\beta_0}{P} \cdot \wp{C_1}{F}\, (\beta_0) + (1 - \BMODEL{\beta_0}{P}) \cdot \wp{C_2}{F}\, (\beta_0)\\
		&= \wp{C_1}{F}\, (\beta_0)\\
		&=_{IH} \sum_{\beta \in \mathcal{D}(\Sigma)} \llbracket C_1 \rrbracket^{\beta_0}(\beta) \cdot F (\beta)\\ 
		&= \sum_{\beta \in \mathcal{D}(\Sigma)} \llbracket \IFSYMBOL\dots \rrbracket^{\beta_0}(\beta) \cdot F (\beta) 
	\end{align*}
	The case for $(\BMODEL{\beta}{P}) = 0$ is proven fully analogously.
	\subsection{\WHILEDO{P_o}{C}}
	We will first consider the theorem for bounded loops, defined as 
	$$\BOUNDEDWHILE{n}{P_o}{C} \coloneqq \begin{cases} \diverge &, n = 0\\
		\ITE{P_o}{\COMPOSE{C}{\BOUNDEDWHILE{n-1}{P_o}{C}}}{\SKIP} &, n>0
	\end{cases}$$
	where we require $\wp{\diverge}{F}\, \beta = 0$. Afterwards, we can then generalize to unbounded while loops as we have: 
	$$\lim\limits_{n \rightarrow \infty} \wp{\BOUNDEDWHILE{n}{P}{C}}{F} =  \wp{\WHILEDO{P}{C}}{F}  $$
	and 
	$$\lim\limits_{n \rightarrow \infty} \llbracket \BOUNDEDWHILE{n}{P}{C}\rrbracket^{\beta_0} (\beta) = \llbracket \WHILEDO{P}{C}\rrbracket^{\beta_0} (\beta)$$
	A full proof for the first equation is given in \Cref{wpConvergence}. The second equation has already been proven in the while loop case of \Cref{Tree_proof}. Furthermore we will show the following equation for all $k\in \mathbb{N}$:
	$$ \wp{\BOUNDEDWHILE{k}{P}{C}}{F} \, (\beta_0) = \sum_{\beta \in \mathcal{D}(\Sigma)} \llbracket \BOUNDEDWHILE{k}{P}{C} \rrbracket^{\beta_0}(\beta) \cdot F (\beta)$$
	The soundness theorem follows directly from the three equations above. We will now prove the third equation for all $k\in \mathbb{N}$.
	\begin{proof}
		We will prove this fact by induction on $k$:
		\begin{itemize}
			\item [$k=0$:] We have 
			\begin{align*}
				\wp{\BOUNDEDWHILE{0}{P}{C}}{F}\, (\beta_0) &= \wp{\diverge}{F}\, (\beta_0)\\
				&= 0\\
				&= \sum_{\beta \in \mathcal{D}(\Sigma)} \llbracket \diverge \rrbracket^{\beta_0}(\beta) \cdot F (\beta)\\
				&= \sum_{\beta \in \mathcal{D}(\Sigma)} \llbracket \BOUNDEDWHILE{0}{P}{C} \rrbracket^{\beta_0}(\beta) \cdot F (\beta)
			\end{align*}
			\item [$k > 0$:] We make a case distinction on whether we have $\BMODEL{\beta_0}{P} = 1$ or $\BMODEL{\beta_0}{P} = 0$:
			\begin{itemize}
				\item $\BMODEL{\beta_0}{P} = 1$. By the definition of $\wpsymbol$ and the semantics of the if-statement we only need to show 
				\begin{align*}
					&\wp{\COMPOSE{C}{\BOUNDEDWHILE{k-1}{P}{C}}}{F}\, (\beta_0) \\
					&= \sum_{\beta \in \mathcal{D}(\Sigma)} \llbracket \COMPOSE{C}{\BOUNDEDWHILE{k-1}{P}{C}} \rrbracket^{\beta_0}(\beta) \cdot F (\beta)
				\end{align*}
				Which follows from the inductive hypothesis for $k-1$ and the case for $\COMPOSE{C_1}{C_2}$
				\item $\BMODEL{\beta_0}{P} = 0$. By the definition of $\wpsymbol$ and the semantics of the if-statement we only need to show 
				$$\wp{\SKIP}{F}\, (\beta_0) = \sum_{\beta \in \mathcal{D}(\Sigma)} \llbracket \SKIP \rrbracket^{\beta_0}(\beta) \cdot F (\beta)$$
				Which follows from the previously shown case for $skip$
			\end{itemize}
		\end{itemize}
	\end{proof} 
	\subsection{$\COMPOSE{C_1}{C_2}$}
	By applying the inductive hypothesis using $C_1$ and $C_2$ to the definition of $wp(\COMPOSE{C_1}{C_2}, F)$, we get:
	\begin{align*}
		\wp{\COMPOSE{C_1}{C_2}}{F}(\beta_0) &= \wp{C_1}{\wp{C_2}{F}}(\beta_0)\\ 
		&=_{IH} \sum_{\beta' \in \mathcal{D}(\Sigma)} \llbracket C_1 \rrbracket^{\beta_0}(\beta') \cdot \wp{C_2}{F}(\beta')\\ 
		&=_{IH} \sum_{\beta' \in \mathcal{D}(\Sigma)} \llbracket C_1 \rrbracket^{\beta_0}(\beta') \cdot \left(\sum_{\beta'' \in \mathcal{D}(\Sigma)} \llbracket C_2 \rrbracket^{\beta'}(\beta'') \cdot F (\beta'')\right)\\
		&= \sum_{\beta' \in \mathcal{D}(\Sigma)} \sum_{\beta'' \in \mathcal{D}(\Sigma)} \llbracket C_1 \rrbracket^{\beta_0}(\beta') \cdot \llbracket C_2 \rrbracket^{\beta'}(\beta'') \cdot F (\beta'')\\
		&= \sum_{\beta'' \in \mathcal{D}(\Sigma)} \sum_{\beta' \in \mathcal{D}(\Sigma)} \llbracket C_1 \rrbracket^{\beta_0}(\beta') \cdot \llbracket C_2 \rrbracket^{\beta'}(\beta'') \cdot F (\beta'')\\
		&= \sum_{\beta'' \in \mathcal{D}(\Sigma)} F(\beta'') \cdot \sum_{\beta' \in \mathcal{D}(\Sigma)} \llbracket C_1 \rrbracket^{\beta_0}(\beta') \cdot \llbracket C_2 \rrbracket^{\beta'}(\beta'') \\
		&= \sum_{\beta'' \in \mathcal{D}(\Sigma)} F(\beta'') \cdot  \llbracket \COMPOSE{C_1}{C_2} \rrbracket^{\beta_0}(\beta'')
	\end{align*}
	The first three equations follow from the definition of $\wpsymbol$ and the inductive hypothesis using $C_1$ and $C_2$. The next equations only modify the expression based on common rules of arithmetic. The last equation follows as the execution of $\COMPOSE{C_1}{C_2}$ can be understood as first executing $C_1$, terminating in an intermediate belief state $\beta'$ and then executing $C_2$. A more thorough justification of this is given in the case for $\COMPOSE{C_1}{C_2}$ of \Cref{Tree_proof}. The final expression amounts to the expression from the soundness theorem up to renaming of $\beta''$.

	\subsection{\CHOOSE{x}{f}}
	This case is analogous to $\ASSIGN{x}{E}$: 
	We have $\llbracket \CHOOSE{x}{f} \rrbracket^{\beta_0}(\beta_0\statesubst{x}{f}) = 1$ and $\llbracket \CHOOSE{x}{f} \rrbracket^{\beta_0}(\beta) = 0$ for all other belief states $\beta$, thus we have 
	\begin{align*}
		\wp{\CHOOSE{x}{f}}{F}\, (\beta_0) &= F\statesubst{x}{f} (\beta_0)\\ 
		&= F (\beta_0\statesubst{x}{f})\\ 
		&= \llbracket \CHOOSE{x}{f}\rrbracket^{\beta_0} (\beta_0\statesubst{x}{f}) \cdot F(\beta_0\statesubst{x}{f})\\ 
		&= \sum_{\beta \in \mathcal{D}(\Sigma)} \llbracket \CHOOSE{x}{f} \rrbracket^{\beta_0}(\beta) \cdot F (\beta)
	\end{align*}

	\subsection{\OBSERVEASSIGN{y}{x}}
	The observe statement can be split into two parts: A "pure" observe statement \OBSERVE{x} which only conditions the belief state via $\beta_0|_{x=c}$, and an assignment which updates the value of $y$ with the (now deterministic) value of $x$.
	It suffices to consider the first part, i.e. $\OBSERVE{x}$, as the second part can be proven fully analogous to the case for the assignment, because after the observe statement has been executed there is only a single possible value for $x$ in the belief state.
	Afterwards the full observe statement follows from the proofs for $\OBSERVE{x}$, the assignment and sequential execution.
	
	We have $\llbracket \OBSERVE{x} \rrbracket^{\beta_0} (\beta_0|_{(x=c)}) = \BMODEL{\beta_0}{x = c} = Pr(x=c)(\beta_0)$ for any value $c \in \mathbb{N}$, as well as $\llbracket \OBSERVE{x}\rrbracket^{\beta_0} (\beta) = 0$ for all other belief states $\beta$. Thus we get:
	\begin{align*}
		\wp{\OBSERVE{x}}{F} (\beta_0) &= \sum_{c \in \mathbb{N}} Pr(x = c) \cdot F|_{x = c}(\beta_0) \\
		&= \sum_{c \in \mathbb{N}} \BMODEL{\beta_0}{x = c} \cdot F(\beta_0|_{x = c}) \\
		&= \sum_{c \in \mathbb{N}} \llbracket \OBSERVE{x}\rrbracket^{\beta_0}(\beta_0|_{(x=c)}) \cdot F(\beta_0|_{x = c})\\
		&= \sum_{\beta \in \mathcal{D}(\Sigma)} \llbracket \OBSERVE{x}\rrbracket^{\beta_0}(\beta) \cdot F(\beta)
	\end{align*}  
	\subsection{\INFER{p(P) \in i}{C_1}{C_2}}
	The proof for the infer statement is fully analogous to the if-statement
\end{proof}

\section{Convergence of $\wpsymbol$ for Bounded Loops}\label{wpConvergence}
To show:
$$\lim\limits_{n \rightarrow \infty} \wp{\BOUNDEDWHILE{n}{P}{C}}{F}(\beta_0) =  \wp{\WHILEDO{P}{C}}{F} (\beta_0) $$
In order to prove this, we will apply the following theorem:
\begin{theorem}[Theorem 3 in \cite{lassez1982fixed}]
	Every scott-continuous function $F$ over a closed partial order has a least fixed point which is $\sup_{n \in \mathbb{N}} F^n(\bot)$
\end{theorem}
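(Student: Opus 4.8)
The plan is to prove the statement as the classical Kleene fixed-point theorem, specialized to the setting at hand. I read ``closed partial order'' as a pointed $\omega$-complete partial order: a partial order $(D, \sqsubseteq)$ with a least element $\bot$ in which every ascending $\omega$-chain $d_0 \sqsubseteq d_1 \sqsubseteq \cdots$ possesses a supremum $\sup_n d_n \in D$. Recall that Scott-continuity of an endofunction $F\colon D \to D$ means $F$ preserves suprema of such chains, i.e.\ $F(\sup_n d_n) = \sup_n F(d_n)$; in particular $F$ is \emph{monotone}, since for $d \sqsubseteq d'$ the chain $d \sqsubseteq d' \sqsubseteq d' \sqsubseteq \cdots$ has supremum $d'$, so continuity gives $F(d) \sqsubseteq \sup\{F(d), F(d')\} = F(d')$. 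The overall strategy is to exhibit the candidate $s \defeq \sup_{n \in \mathbb{N}} F^n(\bot)$, argue it is well defined, show it is a fixed point, and finally show it lies below every fixed point.

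First I would verify that the \emph{Kleene chain} $\bot \sqsubseteq F(\bot) \sqsubseteq F^2(\bot) \sqsubseteq \cdots$ is ascending, by induction on $n$: the base case $\bot \sqsubseteq F(\bot)$ holds because $\bot$ is least, and the step follows by applying monotonicity of $F$ to $F^n(\bot) \sqsubseteq F^{n+1}(\bot)$. Since $D$ is $\omega$-complete, the supremum $s = \sup_{n} F^n(\bot)$ exists, so the candidate is well defined.

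Next, to show $s$ is a fixed point, I would compute, using Scott-continuity on the chain $\bigl(F^n(\bot)\bigr)_n$,
$$ F(s) \eeq F\Bigl(\sup_{n} F^n(\bot)\Bigr) \eeq \sup_{n} F^{n+1}(\bot) \eeq \sup_{n \geq 1} F^{n}(\bot) \eeq \sup_{n \geq 0} F^{n}(\bot) \eeq s, $$
where the last two steps re-index the supremum and note that adjoining the term $F^0(\bot) = \bot$ leaves the value unchanged, as $\bot$ lies below every element of the chain.

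Finally, for minimality, let $y$ be any fixed point, $F(y) = y$. I would show $F^n(\bot) \sqsubseteq y$ for all $n$ by induction: the base case uses $\bot \sqsubseteq y$, and the step uses $F^n(\bot) \sqsubseteq y \Rightarrow F^{n+1}(\bot) = F(F^n(\bot)) \sqsubseteq F(y) = y$ by monotonicity. Thus $y$ is an upper bound of the Kleene chain, so $s \sqsubseteq y$ because $s$ is the \emph{least} upper bound; hence $s$ is the least fixed point. The only genuinely delicate step is the fixed-point computation, which hinges on commuting $F$ with the supremum: this is precisely where Scott-continuity (rather than mere monotonicity) is indispensable, and also where one must take care that the re-indexed supremum equals the original one.
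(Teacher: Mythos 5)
Your proposal is correct, but note that the paper contains no proof of this statement to compare against: it is imported as-is as Theorem~3 of \cite{lassez1982fixed} and used as a black box in the convergence argument for $\wpsymbol$ on loops. What you have written is precisely the classical Kleene fixed-point iteration that the citation stands for, and all four steps are sound: monotonicity extracted from continuity via the chain $d \sqsubseteq d' \sqsubseteq d' \sqsubseteq \cdots$, well-definedness of $\sup_n F^n(\bot)$, the fixed-point computation with the re-indexing $\sup_{n\geq 1} F^n(\bot) = \sup_{n \geq 0} F^n(\bot)$ (correctly justified by $\bot$ lying below the chain), and minimality by induction below an arbitrary fixed point. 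Two definitional remarks. First, the paper's appendix defines cpos and Scott-continuity via arbitrary \emph{directed} subsets, whereas you work with $\omega$-chains only; since the Kleene chain is itself directed, your argument goes through verbatim under the paper's definitions, and your hypotheses ($\omega$-completeness plus $\omega$-continuity) are in fact weaker, so your version is marginally more general than what the paper assumes. Second, you were right to read ``closed partial order'' as a \emph{pointed} complete partial order: the theorem as stated needs a least element $\bot$ for $F^n(\bot)$ to make sense, and the paper's own definition of cpo omits pointedness, so supplying it explicitly, as you do, is the correct repair rather than an extra assumption.
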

In \Cref{Soundness_proof}, we defined the $\wpsymbol$ for bounded while loops in a purely syntactic manner by reducing to the rules for the if-statement and diverge. However, when we consider the characteristic function $\Phi_F$ of a while loop with regard to the postexpectation $F$, i.e.
$$ \Phi_F (X) = Pr(P) \cdot \wp{C}{X} + (1- Pr(P)) \cdot F$$
then we can see that the syntactic definition of $\wpsymbol$ for bounded loops is directly equivalent to repeatedly applying the characteristic function to the constant $0$ predicate, i.e.\ $ \wp{\BOUNDEDWHILE{n}{P}{C}}{F} = \Phi_F^n(0)$. If we can now show that $\wpsymbol$ is scott-continuous, then we can use the theorem above to show:
$$\wp{\WHILEDO{P}{C}}{F} = \lfp\Phi_F = \sup\limits_{n \in \mathbb{N}} \Phi_F^n(0) =  \sup\limits_{n \in \mathbb{N}}  \wp{\BOUNDEDWHILE{n}{P}{C}}{F}$$ 
Furthermore, as $ \wp{\BOUNDEDWHILE{n}{P}{C}}{F}$ is monotonically increasing with $n$, we may exchange the supremum for the limit and we would arrive at the desired theorem. The rest of this section is used to introduce the necessary definitions and to prove that $\wpsymbol$ is scott-continuous in order to apply the theorem.
\begin{definition}
	A set $D \neq \{\}$ is called directed, iff $\forall x,y \in D\colon\exists z\in D\colon z\geq x \land z \geq y$
\end{definition}
\begin{definition}\label{cpo}
	The set $S$ is a complete partial order (cpo) iff for every directed subset $D\subseteq S$ the supremum of $D$ exists and lies in $S$
\end{definition}
The set of all possible predicates $F\colon \mathcal{D}(\Sigma) \rightarrow \PosRealsInf$ is directed as we could simply take the point-wise maximum of two predicates. Furthermore, the set of our predicates is a cpo as we included $\infty$.
\begin{definition}
	A function $F\colon S\rightarrow S'$ with $S,S'$ both being cpos is scott-continuous iff
	\begin{itemize}
		\item the image $F(D)$ for every directed set $D$ is directed
		\item $F(\sup\,S) = \sup\,F(S)$
	\end{itemize}
\end{definition}
We now show that for every program $C$, the function $\wpC{C}$ is scott-continuous. 
We will prove both points at the same time by structural induction on $C$. 
The proof is very similar to the proof of the analogous theorem in \cite{gretz2014operational}, where the only notable difference are the additional cases for the sample, observe and infer statement.
Nonetheless, we will present the full proof.
For the first point we will show that $\wpsymbol$ is monotonous, i.e.\ we have 
$$f\geq g \Rightarrow \wp{C}{f} \geq \wp C g$$
The directedness of $wp(C,D)$ then follows directly from the monotonicity of $\wpsymbol$. The proof of monotonicity will be omitted in all cases except the while loop, as it is trivial and follows directly by applying the definition of $\wpsymbol$ and the inductive hypothesis.

For the second point, we have to show for all directed sets of predicates $D$:
$$ \wp{C} {\sup_{f \in D}\,f }= \sup\limits_{f \in D}\,\wp{C}{f}$$
\subsection{\SKIP}
We have 
$$ \wp{\SKIP}{\sup\limits_{f \in D}\,f} = \sup\limits_{f \in D}\,f = \sup\limits_{f \in D}\,\wp{\SKIP}{f} $$
\subsection{\ASSIGN{x}{E}}
We have 
\begin{align*}
	\wp{\ASSIGN{x}{E}}{\sup\limits_{f \in D}\,f}
	&= (\sup\limits_{f \in D}\,f)\, \statesubst{x}{E}\\
	&=_1 \sup\limits_{f \in D}\, (f\statesubst{x}{E})\\
	&= \sup\limits_{f \in D}\,\wp{\ASSIGN{x}{E}}{f}
\end{align*}
For $=_1$ we will show that if we have $f_1 = \sup_{f \in D}$, then we also have $f_1\statesubst{x}{E} = \sup_{f \in D} f[x \mapsto E]$. In particular, we will show that for every $f_2 \in D$ that $f_1\statesubst{x}{E}$ is point-wise greater than $f_2\statesubst{x}{E}$:
\begin{align*}
	f_1\statesubst{x}{E} (\beta) &= f_1 (\beta [x \mapsto E])\\ 
	&\geq f_2 (\beta [x \mapsto E]) \\
	&= f_2\statesubst{x}{E} (\beta)
\end{align*}
$\geq$ follows as $f_1$ is point-wise greater than $f_2$ on all belief states, which includes $\beta [x \mapsto E]$
\subsection{\CHOOSE{x}{f}}
Fully analogous to the previous case
\subsection{\OBSERVEASSIGN{y}{x}}
The observe statement can be split into two parts: A "pure" observe statement \OBSERVE{x} which only conditions the belief state via $\beta_0|_{x=c}$, and an assignment which updates the value of $y$ with the (now deterministic) value of $x$.
It suffices to consider the first part, i.e. $\OBSERVE{x}$, as the second part can be proven fully analogous to the case for the assignment, because after the observe statement has been executed there is only a single possible value for $x$ in the belief state.
Afterwards the full observe statement follows from the proofs for $\OBSERVE{x}$, the assignment and sequential execution.
We thus have:
\begin{align*}
	\wp{\OBSERVE{x}}{\sup\limits_{f \in D}\,f}
	&= \sum_{c \in \mathbb{N}}Pr(x = c)\cdot (\sup\limits_{f \in D}\,f)|_{x=c}\\
	&=_1 \sum_{c \in \mathbb{N}}Pr(x = c)\cdot  \sup\limits_{f \in D}\,(f|_{x=c})\\
	&\geq_2 \sup\limits_{f \in D} \, \sum_{c \in \mathbb{N}}Pr(x = c)\cdot  \,f|_{x=c}\\
	&= \sup\limits_{f \in D}\,\wp{\OBSERVE{x}}{f}
\end{align*}
Here, $=_1$ follows similar to the case for $\ASSIGN{x}{E}$: It does not matter if we take the supremum before or after the recursive call and we can move the supremum outwards. 
We will now show that $\geq_2$ can be strengthened to an equality by showing that $>$ leads to a contradiction, and the proof for this can be understood as a generalization of the proof we give for the if-statement. Assume that we have
$$ \sum_{c \in \mathbb{N}}Pr(x = c)\cdot  \sup\limits_{f \in D}\,(f|_{x=c})
> \sup\limits_{f \in D} \, \sum_{c \in \mathbb{N}}Pr(x = c)\cdot  \,f|_{x=c}$$
Then by the definition of the supremum as the smallest upper bound, there must be $g_c \in D$ for $c\in \mathbb{N}$ so that we have
$$ \sum_{c \in \mathbb{N}}Pr(x = c)\cdot (g_c|_{x=c})
> \sup\limits_{f \in D} \, \sum_{c \in \mathbb{N}}Pr(x = c)\cdot  \,f|_{x=c}$$
If there would be no such $g_c$, then we immediately reach a contradiction. 
Now consider an arbitrary index $d \in \mathbb{N}$.
By the definition of $>$, The inequation above has to hold for all $\beta \in \mathcal{D}(\Sigma)$, and thus the inequation also holds for any $\beta_d$ with $Pr(x = d)\, \beta_d = 1$ and $Pr(x = c)\, \beta_d = 0$ for all $d\neq c, c\in \mathbb{N}$. By considering the evaluation of both terms on $\beta_d$ we can "filter out" a single summand, and we now get the following contradiction:
\begin{align*}
	\left( \sup\limits_{f \in D} \, \sum_{c \in \mathbb{N}}Pr(x = c)\cdot  \,f|_{x=c}\right) \,(\beta_d) 
	&< \left(\sum_{c \in \mathbb{N}}Pr(x = c)\cdot (g_c|_{x=c}) \right) \,(\beta_d)\\
	&= (g_d|_{x=d}) \,(\beta_d)\\
	&= \left(\sum_{c \in \mathbb{N}}Pr(x = c)\cdot (g_d|_{x=c})\right) \,(\beta_d)\\
	&\leq \left( \sup\limits_{f \in D} \, \sum_{c \in \mathbb{N}}Pr(x = c)\cdot  \,f|_{x=c}\right) \,(\beta_d)
\end{align*}

\subsection{\ITE{P}{C_1}{C_2}}
We have:
\begin{align*}
	&\wp{\ITE{P}{C_1}{C_2}}{\sup\limits_{f \in D}\,f}\\
	&= Pr(P) \cdot \wp{C_1}{\sup\limits_{f \in D}\,f} + (1-Pr(P)) \cdot \wp{C_2}{\sup\limits_{f \in D}\,f}\\
	&=_{IH} (\sup\limits_{f \in D}\, Pr(P) \cdot \wp{C_1}{f}) + (\sup\limits_{f \in D}\, (1-Pr(P)) \cdot \wp{C_2}{f})\\
	&\geq_1 \sup\limits_{f \in D}\, (Pr(P) \cdot \wp{C_1}{f}+ (1- Pr(P)) \cdot \wp{C_1}{f}) \\
	&= \sup\limits_{f \in D}\,\wp{\ITE{P}{C_1}{C_2}}{f}
\end{align*}
We can strengthen $\geq_1$ to an equality by showing that the we cannot have $>$. Assume we have:
\begin{align*}
	&( \sup\limits_{f \in D}\,Pr(P) \cdot \wp{C_1}{f}) + ( \sup\limits_{f \in D}\, (1-Pr(P)) \cdot \wp{C_2}{f})\\
	&> \sup\limits_{f \in D}\, (Pr(P) \cdot \wp{C_1}{f} + (1-Pr(P)) \cdot \wp{C_2}{f})
\end{align*}
Then, by the definition of supremum as the smallest upper bound there must be $g_1,g_2 \in D$ with:
\begin{align*}
	&(Pr(P) \cdot \wp{C_1}{g_1}) + ( (1-Pr(P)) \cdot \wp{C_1}{g_2})\\ 
	&> \sup\limits_{f \in D}\, (Pr(P) \cdot \wp{C_1}{f} + (1-Pr(P)) \cdot \wp{C_2}{f})
\end{align*}
Then we have a contradiction by:
\begin{align*}
	&\sup\limits_{f \in D}\, (Pr(P) \cdot \wp{C_1}{f} + (1 - Pr(P)) \cdot \wp{C_2}{f}) \\
	&< (Pr(P) \cdot \wp{C_1}{g_1}) + ( (1 - Pr(P)) \cdot \wp{C_2}{g_2}) \\
	&\leq_2 (Pr(P) \cdot \wp{C_1}{h}) + ( (1 - Pr(P)) \cdot \wp{C_2}{h}) \\
	&\leq \sup\limits_{f \in D} Pr(P) \cdot \wp{C_1}{f} + (1 - Pr(P)) \cdot \wp{C_2}{f}
\end{align*}
The existence an $h\in D$ with $h\geq g_1$ and $h\geq g_2$ follows from the directedness of $D$, and $\leq_2$ follows from the monotonicity of $\wpsymbol$ via the inductive hypothesis.
\subsection{\INFER{p(P)\in i}{C_1}{C_2}}
Analogous to the case for the if-statement
\subsection{$\COMPOSE{C_1}{C_2}$}
By applying the inductive hypothesis twice we get:
\begin{align*}
	\wp{\COMPOSE{C_1}{C_2}}{\sup\limits_{f \in D}\,f}
	&= \wp{C_1} {\wp{C_2} {\sup\limits_{f \in D}\,f}}\\
	&= \wp{C_1} {\sup\limits_{f \in D}\, \wp{C_2}{F}}\\
	&= \sup\limits_{f \in D}\,\wp{C_1} {\wp {C_2} f}\\
	&= \sup\limits_{f \in D}\,\wp{\COMPOSE{C_1}{C_2}}{f}
\end{align*}
\subsection{\WHILEDO{P}{C}}
By definition, the weakest preexpectation with regard to postexpectation $f$ for the loop $while (P)\, C$ is defined as the least fixpoint of a function 
$$\Phi_f (X) = Pr(P) \cdot \wp{C}{X} + (1-Pr(P)) \cdot f$$
Note that $\Phi_f (X)$ only depends on $C$ which is structurally smaller than the while loop as a whole. We can thus use the inductive hypothesis for $C$ to show that $\wp{C}{X}$ is scott-continuous in $X$. It now follows directly that $\Phi_f(X)$ is also scott-continous. We thus have:
$$\lfp\Phi_f = \sup\limits_{n\in \mathbb{N}} \Phi^n_f(0)$$
We will now use this fact to show both monotonicity and the ability to move the supremum outwards:

\paragraph{Monotonicity:}\label{mono} Assume $f\leq g$. We will first show that we have $\Phi^n_f(0) \leq \Phi^n_g(0)$ for all $n \in \mathbb{N}$. Proof by induction on $n$:
\begin{itemize}
	\item $n=0$: We have $\Phi^0_f(0) = 0 \leq 0 = \Phi^0_g(0)$
	\item $n \rightarrow n+1$: We have
	\begin{align*}
		\Phi^{n+1}_f(0) &= Pr(P) \cdot \wp{C} {\Phi^n_f(0)} + (1 - Pr(P)) \cdot f\\
		&\leq_1 Pr(P) \cdot \wp{C} {\Phi^n_g(0)} + (1 - Pr(P)) \cdot f \\
		&\leq_2 Pr(P) \cdot \wp{C} {\Phi^n_g(0)} + (1 - Pr(P)) \cdot g \\
		&= \Phi^{n+1}_g(0)
	\end{align*}
	In the equations above, $\leq_1$ follows from the inductive hypothesis for $n$. $\leq_2$ follows from the assumption $f\leq g$. It now follows:
	\begin{align*}
		\lfpX \Phi_f(X) &= \sup\limits_{n\in \mathbb{N}} \Phi^n_f(0)\\
		&\leq \sup\limits_{n\in \mathbb{N}} \Phi^n_g(0)\\
		&=\lfpX \Phi_g(X)
	\end{align*}
\end{itemize}
\paragraph{Continuity:} For the sake of notation we will use $\Phi_{sup}(X) \coloneqq \Phi_{\sup\limits_{f\in D}f} (X)$
\begin{align*}
	\wp{\WHILEDO{P}{C}}{\sup_{f\in D}f} &= \sup\limits_{n\in \mathbb{N}} \Phi^n_{sup}(0)\\
	&=_1 \sup\limits_{n\in \mathbb{N}} (\sup\limits_{f\in D}\,\Phi_f)^n(0)\\
	&=_2 \sup\limits_{n\in \mathbb{N}} \sup\limits_{f\in D}\,\Phi_f^n(0)\\
	&=_3 \sup\limits_{f\in D} \sup\limits_{n\in \mathbb{N}} \Phi^n_f(0)\\
	&= \sup\limits_{f\in D} \wp{\WHILEDO{P}{C}}{f}
\end{align*}
We will now explain the marked equations:
\begin{itemize}
	\item
	We have $=_1$ as it does not matter whether we add $Pr(P) \cdot \wp{C}{X}$ and multiply with $1 - Pr(P)$ before constructing the supremum or afterwards, as those terms do not depend on $f$
	\begin{align*}
		\Phi_{sup} (X) &= Pr(P) \cdot \wp{C}{X} + (1 - Pr(P)) \cdot \sup\limits_{f\in D} f\\
		&= \sup\limits_{f\in D}(Pr(P) \cdot \wp{C}{X} + (1 -Pr(P)) \cdot f)\\ 
		&= \sup\limits_{f\in D} \Phi_f(X)
	\end{align*}
	
	\item
	We have $=_2$ as we can show $(\sup_{f\in D}\,\Phi_f)^n(0) = \sup_{f\in D}\,\Phi_f^n(0)$ holds for every $n\in \mathbb{N}$. Proof by induction on $n$:
	\begin{itemize}
		\item $n = 0$: We have $(\sup\limits_{f\in D}\,\Phi_f)^0(0) = 0 = \sup\limits_{f\in D}\,\Phi_f^0(0)$
		\item $n \rightarrow n+1$: We have
		\begin{align*}
			(\sup\limits_{f\in D}\,\Phi_f)^{n+1}(0) &= \sup\limits_{f\in D}\,\Phi_f((\sup\limits_{f\in D}\,\Phi_f)^{n}(0))\\
			&=_{a} \sup\limits_{f\in D}\,\Phi_f(\sup\limits_{f\in D}\,\Phi_f^{n}(0))\\
			&=_{b} \sup\limits_{f\in D} \sup\limits_{f\in D}\,\Phi_f(\,\Phi_f^{n}(0))\\
			&= \sup\limits_{f\in D}\,\Phi_f(\Phi_f^{n}(0)\\
			&= \sup\limits_{f\in D}\,\Phi_f^{n+1}(0)
		\end{align*}
		$=_a$ holds by the inductive hypothesis. $=_{b}$ holds as $D$ is directed and $\Phi_f$ is scott-continuous, we are thus able to pull the supremum outwards past the application of $\Phi_f$.
	\end{itemize}
	
	\item
	We prove $=_3$ by establishing inequality in both directions:
	\begin{align*}
		&\forall n\in \mathbb{N}, \;\forall f \in D: &\Phi_f^n (0)\leq \sup\limits_{g\in D} \Phi_g^n (0)\\
		&\Rightarrow\forall f \in D: &\sup\limits_{n\in \mathbb{N}} \Phi_f^n (0)\leq \sup\limits_{n\in \mathbb{N}} \sup\limits_{g\in D} \Phi_g^n (0)\\
		&\Rightarrow &\sup\limits_{f\in D}\sup\limits_{n\in \mathbb{N}} \Phi_f^n (0)\leq \sup\limits_{n\in \mathbb{N}} \sup\limits_{g\in D} \Phi_g^n (0)\\
	\end{align*}
	The last inequation follows from the definition of the suprema, as it is defined to be the least upper bound of the set $M \coloneqq \{\sup_{n\in \mathbb{N}} \Phi_f^n (0) | f\in D\}$. Note that the second inequation establishes $\sup_{n\in \mathbb{N}} \sup_{g\in D} \Phi_g^n (0)$ as an upper bound of the set $M$.
	The existence of these suprema follows as $D$ is a directed set and $\Phi$ is scott-continuous by the inductive hypothesis. In a fully analogous manner we can also show:
	$$ \sup\limits_{f\in D}\sup\limits_{n\in \mathbb{N}} \Phi_f^n (0)\geq \sup\limits_{n\in \mathbb{N}} \sup\limits_{g\in D} \Phi_g^n (0)$$
\end{itemize}

\section{Linearity and Independence}\label{app:properties}

\subsection{Formal Introduction of Linearity and Independence}
In the following, we will discuss two useful properties of $\wpsymbol$, which allows us to calculate $\wpsymbol$ more easily. 
The first property -- \textit{linearity} -- enables \emph{decomposing} predicates, making $\wpsymbol$ computations amenable to parallelization. 
Formally, we describe linearity as follows:
\begin{theorem}[Linearity of $\wpsymbol$]\label{Linearity}
	Let $C \in \PBLIMP$, $\alpha \in \mathbb{R}_{{}\geq 0}$, and $F, G$ be predicates. 
	Then,%
	\begin{align*}
		\wp{C}{\alpha\cdot F + G} \qeq \alpha \cdot \wp{C}{F} \pplus \wp{C}{G}~.
	\end{align*}%
\end{theorem}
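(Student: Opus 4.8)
The plan is to proceed by structural induction on $C$, establishing the identity for each syntactic construct using the corresponding rule from \Cref{wp}. For the atomic and non-recursive cases the argument is essentially immediate once one observes that every predicate-manipulation operation appearing in the rules is itself linear in its predicate argument. Concretely, for $\SKIP$ the claim is the definitional identity $\alpha F + G = \alpha F + G$. For $\ASSIGN{x}{E}$ and $\CHOOSE{x}{f}$ it suffices to note that $(\alpha F + G)\statesubst{x}{E} \eeq \alpha\cdot F\statesubst{x}{E} \pplus G\statesubst{x}{E}$ and likewise for $\statesubst{x}{f}$, which holds because these operations have the form $H \mapsto \lambda\, \beta\mydot H(\beta')$ for a belief-state transformation $\beta \mapsto \beta'$ that does not depend on $H$, so they commute with pointwise linear combinations. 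For $\OBSERVEASSIGN{y}{x}$ the same observation gives $(\alpha F + G)|_{x=c}\statesubst{y}{c} \eeq \alpha\cdot F|_{x=c}\statesubst{y}{c} \pplus G|_{x=c}\statesubst{y}{c}$, and since the weights $Pr(x=c)$ are independent of the postexpectation, linearity survives the summation over $c$.

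For the branching constructs $\ITE{P_o}{C_1}{C_2}$ and $\INFER{p(P)\in i}{C_1}{C_2}$ I would apply the induction hypotheses to $C_1$ and $C_2$ and then regroup: the multiplicative factors $Pr(P_o)$ and $1 - Pr(P_o)$ (respectively the Iverson brackets $[Pr(P)\in i]$ and $[Pr(P)\notin i]$) do not depend on the postexpectation, so distributing $\alpha$ and splitting the sum yields exactly $\alpha\cdot\wp{C}{F} \pplus \wp{C}{G}$. The composition case $\COMPOSE{C_1}{C_2}$ chains the hypotheses: first apply the hypothesis for $C_2$ to rewrite $\wp{C_2}{\alpha F + G}$ as $\alpha\cdot\wp{C_2}{F} \pplus \wp{C_2}{G}$, and then apply the hypothesis for $C_1$ to the two predicates $\wp{C_2}{F}$ and $\wp{C_2}{G}$, obtaining $\alpha\cdot\wp{C_1}{\wp{C_2}{F}} \pplus \wp{C_1}{\wp{C_2}{G}}$ as required.

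The main obstacle is the loop case $\WHILEDO{P_o}{C}$, where $\wpsymbol$ is a least fixed point rather than a closed-form expression. Writing $\Phi_H$ for the characteristic function $\Phi_H(X) = (1 - Pr(P_o))\cdot H \pplus Pr(P_o)\cdot\wp{C}{X}$, I would invoke the Kleene characterization $\lfp\Phi_H = \sup_{n}\Phi_H^n(\mathbf{0})$, which is available because $\wpC{C}$ is Scott-continuous (\Cref{wpConvergence}). The key step is then to prove by an inner induction on $n$ that $\Phi_{\alpha F + G}^n(\mathbf{0}) \eeq \alpha\cdot\Phi_F^n(\mathbf{0}) \pplus \Phi_G^n(\mathbf{0})$ for every $n$; the successor step uses the structural induction hypothesis for the loop body $C$ to push $\wpC{C}$ through the linear combination $\alpha\cdot\Phi_F^n(\mathbf{0}) \pplus \Phi_G^n(\mathbf{0})$, after which the terms regroup exactly as in the branching cases.

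Finally, taking suprema over $n$ gives $\lfp\Phi_{\alpha F + G} = \sup_n\bigl(\alpha\cdot\Phi_F^n(\mathbf{0}) + \Phi_G^n(\mathbf{0})\bigr)$, and it remains to exchange the supremum with the linear combination, i.e.\ to argue $\sup_n\bigl(\alpha\cdot\Phi_F^n(\mathbf{0}) + \Phi_G^n(\mathbf{0})\bigr) \eeq \alpha\cdot\sup_n\Phi_F^n(\mathbf{0}) \pplus \sup_n\Phi_G^n(\mathbf{0})$. This is where care is needed: the exchange is justified because $(\Phi_F^n(\mathbf{0}))_n$ and $(\Phi_G^n(\mathbf{0}))_n$ are both ascending chains (monotonicity of the iterates, already established in \Cref{wpConvergence}), so pointwise the supremum of their sum equals the sum of their suprema, and scaling by the fixed constant $\alpha \in \mathbb{R}_{\geq 0}$ commutes with suprema under the convention $0\cdot\infty = 0$. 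Combining the two chains at a \emph{common} index $n$ is essential here, since a naive term-by-term bound would only yield one inequality; I would therefore phrase this step as an explicit two-sided estimate, analogous to the supremum-exchange arguments carried out for the other constructs in \Cref{wpConvergence}.
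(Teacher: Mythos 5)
Your proposal is correct and follows essentially the same route as the paper's own proof: structural induction with pointwise arguments for the atomic cases, regrouping for branching and composition, and for loops the Kleene characterization $\lfp\Phi_H = \sup_n \Phi_H^n(\mathbf{0})$ combined with an inner induction showing $\Phi_{\alpha F + G}^n(\mathbf{0}) = \alpha\cdot\Phi_F^n(\mathbf{0}) + \Phi_G^n(\mathbf{0})$, followed by exchanging the supremum (the paper writes it as a limit) with the linear combination. Your explicit justification of that final exchange via monotonicity of the iterate chains is slightly more careful than the paper's, which performs the step without comment, but the argument is the same.
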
 
\begin{proof}
	By structural induction on $C$, see \Cref{LinearityProof}. \qed
\end{proof}%
\noindent%
The second property -- \textit{independence} -- enables to \enquote{skip} calculating $\wp{C}{F}$ for a predicate if we can show that the program $C$ does not affect the value of~$F$. 
We formalize this idea through the function $\mathit{mod}(C)$, which maps loop-free belief programs to the set of variables affected by $C$. 
We can define $\mathit{mod}$ as:%

\begin{adjustbox}{max width = \textwidth}
	\begin{minipage}{\textwidth}
		\begin{align*} 
			\mathit{mod}(C) \coloneqq \begin{cases}
				\mathit{mod}(C_1) \cup \mathit{mod}(C_2),	&\textnormal{if } C \eeq \COMPOSE{C_1}{C_2} \quad\textnormal{or}\quad  C \eeq \texttt{i$\ldots$) \{}C_1\texttt{\} else \{} C_2 \texttt{\}}\\
				x, 									&\textnormal{if } C\in \{\ASSIGN{x}{E},\,\CHOOSE{x}{f} \}\\
				\Varsu \cup \{y\}, 						&\textnormal{if } C \eeq \OBSERVEASSIGN{y}{x}
			\end{cases}\\
		\end{align*}
		\end{minipage}
	\end{adjustbox}%
	
	\noindent%
	Note that we statically over-approximate the set of modified variables, in particular for observe statements by using the set of all unobservable variables. 
	A more accurate set $\mathit{mod}(\OBSERVEASSIGN{y}{x})$ can be determined by considering the dependencies between unobservable variables which are created by sample statements. 
	We now formalize the independence  property as follows:%
	\begin{theorem}[Independence of $\wpsymbol$]\label{Unaffected}
		If none of the variables occurring freely in proposition~$P$ are in $\mathit{mod}(C)$, for $C$ loop-free, then we have 
		\begin{enumerate}
			\item $\wp{C}{Pr(P)} \eeq Pr(P)$
			\item $\wp{C}{[Pr(P) \in i]} \eeq [Pr(P) \in i]$
		\end{enumerate}
	\end{theorem}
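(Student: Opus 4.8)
The plan is to prove both statements by structural induction on the loop-free program $C$, carrying~(1) and~(2) through the same case analysis; the inductive cases are literally identical for the two claims, while the base cases for~(2) reuse the computations from~(1). Throughout I abbreviate the free variables of $P$ by $\mathit{FV}(P)$, so the hypothesis reads $\mathit{FV}(P) \cap \mathit{mod}(C) = \emptyset$, and I freely use $Pr(P) = Ex([P])$. The base cases are $\SKIP$, $\ASSIGN{x}{E}$, $\CHOOSE{x}{f}$, and $\OBSERVEASSIGN{y}{x}$; the inductive cases are $\COMPOSE{C_1}{C_2}$, $\ITE{P_o}{C_1}{C_2}$, and $\INFER{p(P')\in i}{C_1}{C_2}$.

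For $\SKIP$ nothing is to show since $\wp{\SKIP}{F} = F$. For $\ASSIGN{x}{E}$ we have $\mathit{mod}(C) = \{x\}$ and hence $x \notin \mathit{FV}(P)$; by the \emph{Assign} case of the Resolution lemma, $Pr(P)\statesubst{x}{E} = Ex([P][x/E]) = Ex([P]) = Pr(P)$, because $x$ does not occur in $[P]$. For $\CHOOSE{x}{f}$ with $f = \sum_i p_i|E_i\rangle$, the \emph{Sample} case gives $Pr(P)\statesubst{x}{f} = Ex(\sum_i p_i\cdot[P][x/E_i]) = Ex((\sum_i p_i)\cdot[P]) = Ex([P]) = Pr(P)$, using $x \notin \mathit{FV}(P)$ and $\sum_i p_i = 1$. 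Claim~(2) in these cases follows pointwise: the modifier acts through the belief-state argument, so $[Pr(P)\in i]\statesubst{x}{E}(\beta) = [Pr(P)(\beta\statesubst{x}{E})\in i] = [Pr(P)(\beta)\in i]$ by the identity just proved, and analogously for sampling. The inductive cases are then routine: for $\COMPOSE{C_1}{C_2}$ both $\mathit{mod}(C_i) \subseteq \mathit{mod}(C)$, so the hypothesis transfers to $C_1$ and $C_2$, and $\wp{\COMPOSE{C_1}{C_2}}{Pr(P)} = \wp{C_1}{\wp{C_2}{Pr(P)}} = \wp{C_1}{Pr(P)} = Pr(P)$ by two applications of the induction hypothesis; for $\ITE{P_o}{C_1}{C_2}$ and $\INFER{p(P')\in i}{C_1}{C_2}$ the two branch selectors sum pointwise to $1$ while the induction hypothesis collapses both branches to the same predicate, so the weighted sum reproduces it. Claim~(2) is handled identically in all three.

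The genuinely delicate case, and the one I expect to be the main obstacle, is $\OBSERVEASSIGN{y}{x}$, where $\mathit{mod}(C) = \Varsu \cup \{y\}$. The hypothesis then forces $\mathit{FV}(P) \subseteq \Varso \setminus \{y\}$, i.e.\ $P$ is an observable proposition; since we only consider consistent belief states, $Pr(P)(\beta) \in \{0,1\}$ and $\sigma(P)$ is constant over all $\sigma$ possible in $\beta$. Two facts drive the argument: (i) $\statesubst{y}{c}$ leaves $Pr(P)$ unchanged because $y \notin \mathit{FV}(P)$; and (ii) \emph{conditioning preserves $Pr(P)$}, namely $Pr(P)(\beta|_{x=c}) = Pr(P)(\beta)$ whenever $\BMODEL{\beta}{x=c} > 0$, since $\beta|_{x=c}$ is again consistent and its support is a nonempty subset of that of $\beta$, on which $\sigma(P)$ retains the same constant value. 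Given (i) and (ii), the $\wp$-rule for observe yields, using $\sum_c \BMODEL{\beta}{x=c} = 1$, that $\sum_c Pr(x=c)\cdot Pr(P)|_{x=c}\statesubst{y}{c} = Pr(P)\cdot\sum_c Pr(x=c) = Pr(P)$; equivalently, via the \emph{Observe} case of the Resolution lemma, $\sum_c Pr(x=c)\cdot\frac{Pr(P\wedge x=c)}{Pr(x=c)} = \sum_c Pr(P\wedge x=c) = Pr(P)$.

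For claim~(2) the nonlinearity of the Iverson bracket is exactly what (i) and (ii) neutralise: for each $c$ with $\BMODEL{\beta}{x=c} > 0$ we get $[Pr(P)\in i]|_{x=c}\statesubst{y}{c}(\beta) = [Pr(P)(\beta)\in i]$, so the observe sum factors as $[Pr(P)\in i](\beta)\cdot\sum_c \BMODEL{\beta}{x=c} = [Pr(P)\in i](\beta)$. The two places that require care are establishing fact~(ii) rigorously from consistency (together with the fact that all belief-state operations preserve consistency) and tracking the $\frac{a}{0} = 0$ convention so that the summands with $Pr(x=c) = 0$ contribute $0$ on both sides; everything else is bookkeeping.
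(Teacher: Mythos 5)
Your overall strategy is the same as the paper's: structural induction on loop-free $C$, with the inductive cases (composition, if, infer) handled by branch-selector arithmetic plus the induction hypothesis, and claim~(2) reduced pointwise to claim~(1). Your treatment of $\OBSERVEASSIGN{y}{x}$ is also correct, and is in fact a cleaner packaging of what the paper does: deducing from $\mathit{mod}(C) = \Varsu \cup \{y\}$ that $P$ is an observable proposition, using consistency to show that conditioning preserves the two-valued $Pr(P)$ (your fact~(ii) is exactly the content of the paper's explicit summation manipulation), and noting that the $Pr(x=c)=0$ summands vanish under the convention $\tfrac{a}{0}=0$. Using the Resolution lemma for the assignment case is legitimate and creates no circularity, since that lemma is proved independently of the independence theorem.

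There is, however, a genuine gap in your sample case. You invoke the \emph{Sample} clause of the Resolution lemma, which presupposes $f = \sum_i p_i |E_i\rangle$ for \emph{finitely many fixed} $G$-expressions $E_i$ and \emph{fixed} probabilities $p_i$, i.e.\ that the sample statement is syntactically expressible in the sense of the grammar section. The theorem makes no such assumption: it applies to every loop-free $C$, and a sample statement $\CHOOSE{x_u}{f}$ carries an arbitrary $f\colon \Sigma \to \mathcal{D}(\mathbb{N})$, whose probabilities $f(\sigma, c)$ may depend on $\sigma$ in ways not captured by finitely many fixed expressions, and whose support need not even be finite. As written, your induction therefore proves the statement only for syntactically expressible programs, which is strictly weaker than claimed. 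The repair is the direct semantic computation the paper performs: unfold $Pr(P)\statesubst{x}{f}(\beta) = \BMODEL{\beta\statesubst{x}{f}}{P}$ via the definition $(\dagger)$ of $\beta\statesubst{x}{f}$, replace $\SMODEL{\sigma}{P}$ by $\SMODEL{\rho}{P}$ for each $\rho$ with $\rho[x \mapsto \sigma(x)] = \sigma$ (legitimate since $x$ does not occur in $P$), reorder the double sum so that it ranges over $\rho$ and the sampled value $c$, and use $\sum_{c} f(\rho, c) = 1$ to collapse it to $\sum_{\rho} \beta(\rho) \cdot \SMODEL{\rho}{P} = Pr(P)(\beta)$. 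Your pointwise argument for claim~(2) then goes through unchanged on top of this; the same caveat applies to your parenthetical re-derivation of the observe case via the \emph{Observe} clause of the Resolution lemma, but there it is harmless because your primary argument for observe is semantic.
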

	\begin{proof}
		By structural induction on $C$, see \Cref{UnaffectedProof}. 
		The restriction that $C$ is loop-free is required as the likelihood to diverge is always factored in with a weight of $0$, and would falsify the theorem for divergent $C$'s.%
		\qed
	\end{proof}%
	\noindent%
	\Cref{Unaffected} allows us to argue that the conditions of if-statements are not affected by observe or sample statements, as those conditions are restricted to observable variables. 
	Proving \Cref{Unaffected} requires insight into the evaluation of the predicate, and thus the theorem cannot be extended to arbitrary predicates. 
	Nonetheless, it is possible to prove an analogous theorem for the upcoming expected value predicate $Ex(E)$ from \Cref{GrammarFigure}.

\subsection{Proof of \Cref{Linearity}}\label{LinearityProof}
To show:
Let $C$ be a belief program, $\alpha \in \mathbb{R}_{{}\geq 0}$ and $F, G$ be predicates. Then we have
$$ \wp C {\alpha\cdot F + G} = \alpha \cdot \wp{C}{F} + \wp{C}{G}$$
\begin{proof}
	Proof by structural induction on $C$:
	\begin{itemize}
		\item $\SKIP$:
		\begin{align*}
			\wp{\SKIP}{\alpha\cdot F + G} &= \alpha\cdot F + G\\ 
			&= \alpha\cdot \wp \SKIP F + \wp \SKIP  G
		\end{align*}
		\item $\ASSIGN{x}{E}$:
		\begin{align*}
			\wp{\ASSIGN{x}{E}}{ \alpha\cdot F + G}\, (\beta) &= (\alpha\cdot F + G)\statesubst{x}{E} (\beta)\\ 
			&= (\alpha\cdot F + G) (\beta\statesubst{x}{E})\\ 
			&= \alpha\cdot F \,(\beta\statesubst{x}{E})+ G\,(\beta\statesubst{x}{E}) \\
			&= \alpha\cdot \wp {\ASSIGN{x}{E}} F \, (\beta) + \wp{\ASSIGN{x}{E}} G\,(\beta)
		\end{align*}
		\item \ITE{P_o}{C_1}{C_2}:
		\begin{align*}
			&\wp{\IFSYMBOL\dots} {\alpha\cdot F + G}\\
			&= Pr(P_o) \cdot \wp{C_1} {\alpha\cdot F + G} + (1 - Pr(P_o)) \cdot \wp{C_2}{\alpha\cdot F + G}\\
			&=_{IH} Pr(P_o) \cdot\left(\alpha \cdot \wp{C_1}{F} + \wp{C_1}{G}\right)\\
			&\quad\,+ (1 - Pr(P_o)) \cdot \left(\alpha \cdot \wp{C_2}{F} + \wp{C_2}{G}\right)\\
			&= \alpha \cdot Pr(P_o) \cdot \wp{C_1}{F} + Pr(P_o) \cdot \wp{C_1}{G}\\
			&\quad\,+ \alpha \cdot (1 - Pr(P_o)) \cdot \wp{C_2}{F} + (1 - Pr(P_o)) \cdot \wp{C_2}{G}\\
			&= \alpha \cdot (Pr(P_o) \cdot \wp{C_1}{F} + (1 - Pr(P_o)) \cdot \wp{C_2}{F}) \\
			&\quad\,+ Pr(P_o) \cdot \wp{C_1}{G} + (1 - Pr(P_o)) \cdot wp(C_2, G)\\
			&= \alpha \cdot \wp{\IFSYMBOL\dots}{F} + \wp{\IFSYMBOL\dots}{G}
		\end{align*}
		\item \WHILEDO{P_o}{C}
		Note that in \ref{wpConvergence} we showed 
		$$\wp{\WHILEDO{P}{C}}{F} = \lim\limits_{n \rightarrow \infty} \Phi_F^n(0)$$
		where $\Phi_F^n$ is the characteristic function of the while loop with regard to the postexpectation $F$. Furthermore, $\Phi_F$ is calculated using only the weakest preexpectation of the body of the while loop, which is structurally smaller than the loop as a whole and thus the inductive hypothesis can be applied to $\Phi_F$.
		\begin{align*}
			&\wp{\WHILEDO{P}{C}}{\alpha\cdot F + G}\\
			&= \lfp\Phi_{\alpha \cdot F + G} \\
			&= \lim\limits_{n \rightarrow \infty} \Phi_{\alpha \cdot F + G}^n(0)\\
			&=_1 \lim\limits_{n \rightarrow \infty} \alpha \cdot \Phi_{F}^n(0) + \Phi_G^n(0)\\
			&= \alpha \cdot \lim\limits_{n \rightarrow \infty} \Phi_{F}^n(0) + \lim\limits_{n \rightarrow \infty}\Phi_G^n(0)\\
			&= \alpha \cdot \lfp\Phi_{F} + \lfp\Phi_{G}\\
			&= \alpha \cdot \wp{\WHILEDO{P}{C}}F + \wp{\WHILEDO{P}{C}} G
		\end{align*}
		Above, $=_1$ follows as for all $n \in \mathbb{N}$ we have 
		$$\Phi_{\alpha \cdot F + G}^n(0) = \alpha \cdot \Phi_{F}^n(0) + \Phi_G^n(0)$$
		which follows by induction on $n$, the case $n=0$ is trivial. The case $n > 0$ follows by:
		\begin{align*}
			&\Phi_{\alpha \cdot F + G}^n(0)\\
			&= Pr(P_o) \cdot \wp{C} {\Phi_{\alpha \cdot F + G}^{n-1}(0)} + (1-Pr(P_o)) \cdot (\alpha \cdot F + G)\\
			&=_{IH} Pr(P_o) \cdot \wp{C} {\alpha \cdot \Phi_{F}^{n-1}(0) + \Phi_{G}^{n-1}(0)} + (1-Pr(P_o)) \cdot (\alpha \cdot F + G)\\
			&=_{IH} Pr(P_o) \cdot \alpha \cdot \wp{C} {\Phi_{F}^{n-1}(0)} \\
			&\quad\,+ \wp{C} { \Phi_{G}^{n-1}(0)} + (1-Pr(P_o)) \cdot (\alpha \cdot F + G)\\
			&= \alpha \cdot (Pr(P_o) \cdot \wp{C} { \Phi_{F}^{n-1}(0)} + (1-Pr(P_o)) \cdot F) \\
			&\quad\,+ Pr(P_o) \cdot \wp{C} {\Phi_{G}^{n-1}(0)} + (1-Pr(P_o)) \cdot G\\
			&= \alpha \cdot \Phi_{F}^n(0) + \Phi_{G}^n(0)
		\end{align*}
		\item $\COMPOSE{C_1}{C_2}$: The case follows by induction and the definition of \wpsymbol
		\begin{align*}
			&\wp{\COMPOSE{C_1}{C_2}}{\alpha\cdot F + G}\\
			&= \wp{C_1} {\wp{C_2}{\alpha\cdot F + G}} \\
			&=_{IH} \wp{C_1}{\alpha \cdot \wp{C_2}{F} + \wp{C_2} G}\\
			&=_{IH} \alpha \cdot \wp{C_1}{\wp{C_2}{F}} + \wp{C_1}{\wp {C_2} G}\\
			&= \alpha\cdot \wp{\COMPOSE{C_1}{C_2}} F + \wp {\COMPOSE{C_1}{C_2}} G
		\end{align*}
		
		\item \INFER{p(P) \in i}{C_1}{C_2}:
		Fully analogous to the if-statement
		
		\item \CHOOSE{x}{f}:
		\begin{align*}
			&\wp {\CHOOSE{x}{f}} {\alpha\cdot F + G}\, (\beta)\\
			&= (\alpha\cdot F + G)[x \mapsto f] (\beta)\\ 
			&= (\alpha\cdot F + G) (\beta[x \mapsto f])\\ 
			&= \alpha\cdot F \,(\beta[x \mapsto f])+ G\,(\beta[x \mapsto f]) \\
			&= \alpha\cdot \wp {\CHOOSE{x}{f}} {F}\, (\beta) + \wp {\CHOOSE{x}{f}} {G}\,(\beta)
		\end{align*}
		\item \OBSERVEASSIGN{y}{x}:
		The observe statement can be split into two parts: A "pure" observe statement \OBSERVE{x} which only conditions the belief state via $\beta_0|_{x=c}$, and an assignment which updates the value of $y$ with the (now deterministic) value of $x$.
		It suffices to consider the first part, i.e. $\OBSERVE{x}$, as the second part can be proven fully analogous to the case for the assignment, because after the observe statement has been executed there is only a single possible value for $x$ in the belief state.
		Afterwards the full observe statement follows from the proofs for $\OBSERVE{x}$, the assignment and sequential execution.
		\begin{align*}
			&\wp{\OBSERVE{x}}{\alpha\cdot F + G}\\
			&= \sum\limits_{c \in \mathbb{N}} Pr(x=c) \cdot (\alpha\cdot F + G)|_{x=c} \\
			&= \sum\limits_{c \in \mathbb{N}} Pr(x=c) \cdot (\alpha\cdot F|_{x=c} + G|_{x=c}) \\
			&= \sum\limits_{c \in \mathbb{N}} \alpha\cdot Pr(x=c)\cdot F|_{x=c} + Pr(x=c) \cdot G|_{x=c} \\
			&=  \alpha \cdot \sum\limits_{c \in \mathbb{N}} Pr(x=c)\cdot F|_{x=c} + \sum\limits_{c \in \mathbb{N}} Pr(x=c) \cdot G|_{x=c} \\
			&= \alpha \cdot \wp{\OBSERVE{x}}{F} + \wp{\OBSERVE{x}}{G}
		\end{align*}
	\end{itemize}
\end{proof}
\subsection{Proof of \Cref{Unaffected}}\label{UnaffectedProof}
\textit{If none of the variables in proposition $P$ are modified by the loop-free program $C$, then we have 
	\begin{enumerate}
		\item $\wp{C}{Pr(P)} = Pr(P)$
		\item $\wp{C}{[Pr(P) \in i]} = [Pr(P) \in i]$
\end{enumerate}}
\begin{proof}
	We will prove both statements at the same time using induction on $C$. The proof for $[Pr(P) \in i]$ is only shown for $\ASSIGN{x}{E}$, because the cases for the observe and sample statement are fully analogous to the case for $\ASSIGN{x}{E}$, and the other cases for $[Pr(P) \in i]$ are analogous to the proof for $Pr(P)$. 
	\begin{itemize}
		\item \SKIP:
		Follows immediately from the definition of $\wpsymbol$
		
		\item \ASSIGN{x}{E}:
		We have:
		\begin{enumerate}
			\item
			Note that in $=_1$ we use that P is unaffected by $\ASSIGN{x}{E}$, and thus changing the value of $x$ does not affect the result of evaluating $P$. Thus follows $\SMODEL{\rho}{P} = \SMODEL{\rho\statesubst{x}{E}}{P}$. Additionally, $=_2$ holds as there is exactly one $\sigma$ with $\rho[x \mapsto E] = \sigma$ for every $\rho$. Thus every $\rho$ is only factored in exactly once over the whole term.
			\begin{align*}
				\wp{\ASSIGN{x}{E}}{Pr(P)}\,(\beta) &= Pr(P)[x \mapsto E] \,(\beta) \\
				&= Pr(P) \,(\beta[x \mapsto E]) \\
				&= \BMODEL{\beta[x \mapsto E]}{P}\\
				&= \sum\limits_{\sigma \in \Sigma} \beta[x \mapsto E](\sigma) \cdot \SMODEL{\sigma}{P}\\
				&= \sum\limits_{\sigma \in \Sigma} \left(\sum\limits_{\rho \in \Sigma, \rho\statesubst{x}{E} = \sigma} \beta(\rho)\right) \cdot \SMODEL{\sigma}{P}\\
				&= \sum\limits_{\sigma \in \Sigma} \left(\sum\limits_{\rho \in \Sigma, \rho\statesubst{x}{E} = \sigma} \beta(\rho) \cdot \SMODEL{\rho\statesubst{x}{E}}{P}\right)\\
				&=_1 \sum\limits_{\sigma \in \Sigma} \left(\sum\limits_{\rho \in \Sigma, \rho\statesubst{x}{E} = \sigma} \beta(\rho) \cdot \SMODEL{\rho}{P}\right)\\
				&=_2 \sum\limits_{\rho \in \Sigma} \beta(\rho) \cdot \SMODEL{\rho}{P}\\
				&= (\BMODEL{\beta}{P}) = Pr(P)(\beta)
			\end{align*}
			\item \begin{align*}
				\wp{\ASSIGN{x}{E}}{[Pr(P) \in i]}&= [Pr(P) \in i]\statesubst{x}{E} \\
				&= [Pr(P)\statesubst{x}{E} \in i] \\
				&= [\wp{\ASSIGN{x}{E}}{Pr(P)} \in i] \\
				&= [Pr(P) \in i]
			\end{align*}
		\end{enumerate}
		
		\item $\ITE{P_o}{C_1}{C_2}$:
		By definition we have $\mathit{mod}(C_1) \subseteq \mathit{mod}(if\dots)$ and $\mathit{mod}(C_2) \subseteq \mathit{mod}(if\dots)$. Thus $P$ is unaffected by both $C_1$ and $C_2$. With the inductive hypothesis now follows:
			\begin{align*}
				&\wp{\IFSYMBOL\dots} {Pr(P)}\\
				&= Pr(P_o) \cdot \wp{C_1}{Pr(P)} + (1 - Pr(P_o)) \cdot \wp{C_2}{Pr(P)}\\
				&=_{IH} Pr(P_o) \cdot Pr(P) + (1 - Pr(P_o) \cdot Pr(P)\\
				&= (Pr(P_o) + (1 - Pr(P_o)) \cdot Pr(P)\\
				&= Pr(P)
			\end{align*}
		
		\item $\COMPOSE{C_1}{C_2}$:
		By definition we have $\mathit{mod}(C_1) \subseteq \mathit{mod}(\COMPOSE{C_1}{C_2})$ and $\mathit{mod}(C_2) \subseteq \mathit{mod}(\COMPOSE{C_1}{C_2})$. Thus $P$ is unaffected by both $C_1$ and $C_2$. With the inductive hypothesis now follows:
			\begin{align*}
				\wp{\COMPOSE{C_1}{C_2}}{Pr(P)} &= \wp{C_1}{\wp{C_2}{Pr(P)}}\\
				&=_{IH} \wp{C_1}{Pr(P)}\\
				&=_{IH} Pr(P)
			\end{align*}
		
		\item $\INFER{p(P_u)}{C_1}{C_2}$:
		By definition of $\mathit{mod}$ we have $\mathit{mod}(C_1) \subseteq \mathit{mod}(infer\dots)$ and $\mathit{mod}(C_2) \subseteq \mathit{mod}(infer\dots)$. Thus $P$ is unaffected by both $C_1$ and $C_2$. With the inductive hypothesis now follows:
			\begin{align*}
				&\wp{\texttt{infer}\dots}{Pr(P)}\\ 
				&= [Pr(P_u) \in i] \cdot \wp{C_1}{Pr(P)} + [Pr(P_u) \notin i] \cdot \wp{C_2}{Pr(P)}\\
				&=_{IH} [Pr(P_u) \in i] \cdot Pr(P) + [Pr(P_u) \notin i] \cdot Pr(P)\\
				&= ([Pr(P_u) \in i] + [Pr(P_u) \notin i]) \cdot Pr(P)\\
				&= Pr(P)
			\end{align*}
		
		\item $\CHOOSE{x}{f}$:
			$=_1$ follows because $P$ is unaffected by $x$, and thus we have $\SMODEL{\rho}{P} = \SMODEL{\sigma}{P}$. Afterwards, $=_2$ holds because the term $\beta(\rho) \cdot f(\rho, c) \cdot \SMODEL{\rho}{P}$ appears exactly once in the previous expression for every pair $\rho$ and $c$, namely in the summand of the variable assignment $\sigma$ with $\rho[x\mapsto c] = \sigma$. Thus we can reorder the summands in the term and quantify over $c \in \mathbb{N}$ instead of $\sigma \in \Sigma$.
			\begin{align*}
				&\wp{\CHOOSE{x}{f}}{Pr(P)}\,(\beta) \\
				&= Pr(P)\statesubst{x}{f} \,(\beta) \\
				&=  Pr(P) \,(\beta\statesubst{x}{f}) \\
				&= \BMODEL{\beta\statesubst{x}{f}}{P} \\
				&= \sum_{\sigma \in \Sigma} \beta\statesubst{x}{f}(\sigma) \cdot \SMODEL{\sigma}{P}\\
				&= \sum_{\sigma \in \Sigma} \left(\sum\limits_{\rho \in \Sigma, \rho[x\mapsto \sigma(x)] = \sigma} \beta(\rho)\cdot f(\rho, \sigma(x))\right) \cdot \SMODEL{\sigma}{P}\\
				&=_1 \sum_{\sigma \in \Sigma} \left(\sum\limits_{\rho \in \Sigma, \rho[x\mapsto \sigma(x)] = \sigma} \beta(\rho) \cdot f(\rho, \sigma(x)) \cdot \SMODEL{\rho}{P}\right)\\
				&=_2 \sum_{c \in \mathbb{N}} \left(\sum\limits_{\rho \in \Sigma} \beta(\rho) \cdot f(\rho, c) \cdot \SMODEL{\rho}{P}\right)\\
				&= \sum\limits_{\rho \in \Sigma} \beta(\rho) \cdot \SMODEL{\rho}{P} \cdot \sum\limits_{c \in \mathbb{N}} f(\rho, c) \\
				&= \sum\limits_{\rho \in \Sigma} \beta(\rho) \cdot \SMODEL{\rho}{P} \cdot 1 \\
				&= Pr(P)\, (\beta)
			\end{align*}
		
		\item $\OBSERVEASSIGN{y}{x}$:
			The observe statement can be split into two parts: A "pure" observe statement \OBSERVE{x} which only conditions the belief state via $\beta_0|_{x=c}$, and an assignment which updates the value of $y$ with the (now deterministic) value of $x$.
			It suffices to consider the first part, i.e. $\OBSERVE{x}$, as the second part can be proven fully analogous to the case for the assignment, because after the observe statement has been executed there is only a single possible value for $x$ in the belief state.
			Afterwards the full observe statement follows from the proofs for $\OBSERVE{x}$, the assignment and sequential execution.
			We begin by merely rewriting using definitions. 
			\begin{align*}
				\wp{\OBSERVE{x}}{Pr(P)}\,(\beta) &= \sum\limits_{c \in \mathbb{N}} Pr(x = c)\,\beta \cdot Pr(P)|_{x=c} \,(\beta) \\
				&= \sum\limits_{c \in \mathbb{N}} Pr(x = c)\,(\beta) \cdot Pr(P) \,(\beta|_{x=c}) \\
				&= \sum\limits_{c \in \mathbb{N}} Pr(x = c)\,(\beta) \cdot \BMODEL{\beta|_{x=c}}{P} \\
				&= \sum\limits_{c \in \mathbb{N}} Pr(x = c)\,(\beta) \cdot \left(\sum_{\sigma \in \Sigma} \beta|_{x=c}(\sigma) \cdot \SMODEL{\sigma}{P}\right)\\
				&= \sum\limits_{c \in \mathbb{N}} \BMODEL{\beta}{x=c} \cdot \left(\sum_{\sigma \in \Sigma} \beta|_{x=c}(\sigma) \cdot \SMODEL{\sigma}{P}\right)\\
			\end{align*}
			How exactly $\beta|_{x=c}(\sigma)$ is resolved depends on whether or not $\BMODEL{\beta}{x=c} > 0$ holds. Note however that in both cases the following terms are equal :
			\begin{enumerate}
				\item $$\BMODEL{\beta}{x=c} \cdot \left(\sum_{\sigma \in \Sigma} \beta|_{x=c}(\sigma) \cdot \SMODEL{\sigma}{P}\right)$$
				\item$$\BMODEL{\beta}{x=c} \cdot \left(\sum_{\sigma \in \Sigma} \frac{\beta(\sigma)\cdot \SMODEL{\sigma}{x=c}}{\BMODEL{\beta}{x=c}} \cdot \SMODEL{\sigma}{P}\right) $$
			\end{enumerate}
			In the case $\BMODEL{\beta}{x=c} > 0$ this follows directly from the definition of $\beta|_{x=c}(\sigma)$, and in the case of $\BMODEL{\beta}{x=c} = 0$ it holds as we have $\frac{a}{0} = \infty$ and $0 \cdot \infty = 0$ by definition. Furthermore, consider that $\OBSERVE{x}$ affects all unobservable variables, thus we can assume that $P$ is an observable proposition and we have either $Pr(P)\,(\beta) = 1$ or $Pr(P)\,(\beta) = 0$ for every belief state $\beta$. W.l.o.g. we assume $Pr(P)\,(\beta) = 1$, it thus follows for every $\sigma$ with $\beta(\sigma) > 0$ we have $\SMODEL{\sigma}{P} = 1$. We will make use of this in $=_1$. Afterwards, $=_2$ follows as we have $\SMODEL{\sigma}{x=c} = 1$ for exactly one $c\in \mathbb{N}$. We now get:
			\begin{align*}
				&\sum\limits_{c \in \mathbb{N}} \BMODEL{\beta}{x=c} \cdot \left(\sum_{\sigma \in \Sigma} \beta|_{x=c}(\sigma) \cdot \SMODEL{\sigma}{P}\right)	\\
				&= \sum\limits_{c \in \mathbb{N}} \BMODEL{\beta}{x=c} \cdot \left(\sum_{\sigma \in \Sigma} \frac{\beta(\sigma)\cdot \SMODEL{\sigma}{x=c}}{\BMODEL{\beta}{x=c}} \cdot \SMODEL{\sigma}{P}\right)\\
				&= \sum\limits_{c \in \mathbb{N}} \sum_{\sigma \in \Sigma} \beta(\sigma)\cdot \SMODEL{\sigma}{x=c} \cdot \SMODEL{\sigma}{P}\\
				&=_1 \sum\limits_{c \in \mathbb{N}} \sum_{\sigma \in \Sigma, \beta(\sigma) > 0} \beta(\sigma)\cdot \SMODEL{\sigma}{x=c} \cdot 1\\
				&=_2 \sum_{\sigma \in \Sigma, \beta(\sigma) > 0} \beta(\sigma) \cdot 1\\
				&= 1 = Pr(P)\, (\beta)
			\end{align*}
	\end{itemize}
\end{proof}

\section{Proofs and Derivations for Syntactic Predicates}\label{GrammarProofs}

\subsection{Proof for Theorem~\ref{GrammarNew}}\label{GrammarProofNew}
We consider the following grammar, as defined in \Cref{GrammarNew}
\begin{align*}
	E &\Coloneqq x \in \Varso \cup \Varsu \,|\, c \in \mathbb{N} \,|\, [P_u], \,|\, E \oplus E & \oplus \in \{+, -, \cdot, /\}\\
	\Phi &\Coloneqq Ex(E) \sim q \cdot Ex(E) & \sim \in \{<,\leq, =, \neq, \geq, >\} \\
	G &\Coloneqq Ex(E) \,|\, G + G \,|\, \alpha \cdot G \,|\, [\Phi] \cdot G 
\end{align*}
$$Ex(E) \Coloneqq \lambda\, \beta \mydot \sum_{\sigma \in \Sigma} \beta(\sigma)\cdot \sigma(E)$$
To show: For any loop-free free program $C$ and predicate $F \in G$ we have $\wp{C}{F} \in G$\\

Proof by structural induction over $C$:
\begin{itemize}
	\item $\SKIP$: We have $\wp{\SKIP}{F} = F \in G$ by assumption\\
	\item $\ASSIGN{x}{E}$: We will prove that $\wp{\ASSIGN{x}{E}} F = F\statesubst{x}{E} \in G$ holds using structural induction on $F$ :
	\begin{itemize}
		\item $F = Ex(E')$: By theorem \Cref{lem:replacement} it suffices to show that $E'[x/E]$ is an extended expression, which follows from the definition of extended expressions
		\item $F = F_1 + F_2$: By induction we have $F_1\statesubst{x}{E} \in G$ and $F_2\statesubst{x}{E} \in G$. It now follows that we have 
		$$F\statesubst{x}{E} = (F_1 + F_2)[x \mapsto E] = F_1\statesubst{x}{E} + F_1\statesubst{x}{E} \in G$$
		\item $F = \alpha \cdot F'$ follows by the inductive hypothesis for $F'$
		\item $F = [Ex(E) \sim q\cdot Ex(E)]\cdot F'$ Note that we have 
		\begin{align*}
			&\Bigl(\bigl(\bigl[Ex(E_1) \sim q\cdot Ex(E_2)\bigr]\cdot F'\bigr)\statesubst{x}{E}\Bigr)(\beta)\\ 
			&= \Bigl(\bigl[Ex(E_1) \sim q\cdot Ex(E_2)\bigr]\cdot F'\Bigr)(\beta\statesubst{x}{E})\\ 
			&= \bigl[Ex(E_1)(\beta\statesubst{x}{E}) \sim q\cdot Ex(E_2)(\beta\statesubst{x}{E})\bigr]\cdot F'(\beta\statesubst{x}{E})\\
			&= \bigl[Ex(E_1)\statesubst{x}{E}(\beta) \sim q\cdot Ex(E_2)\statesubst{x}{E}(\beta)\bigr]\cdot F'\statesubst{x}{E}(\beta)\\
			&= \Bigl(\bigl[Ex(E_1)\statesubst{x}{E} \sim q\cdot Ex(E_2)\statesubst{x}{E}\bigr]\cdot F'\statesubst{x}{E}\Bigr)(\beta)
		\end{align*}
		By the inductive hypothesis we have $F'\statesubst{x}{E} \in G$, it now remains to show that $[Ex(E_1)\statesubst{x}{E} \sim q\cdot Ex(E_2)\statesubst{x}{E}] \cdot \dots$ can be expressed within $G$. This is the case by \Cref{lem:replacement}, as $Ex(E_i)\statesubst{x}{E} = Ex(E_i[x/E])$ holds and $E_i[x/E]$ is still an extended expression.
	\end{itemize}
	
	\item $\ITE{P_o}{C_1}{C_2}$: We have 
	$$\wp{\IFSYMBOL\dots} F = Pr(P_o) \cdot \wp{C_1}{F} + (1-Pr(P_o)) \cdot \wp{C_2}{F}$$
	By the inductive hypothesis, we can conclude that both $\wp{C_1}{F}\in G$ and $\wp{C_2}{F}\in G$ hold. It now suffices to show that $Pr(P_o) \cdot \dots$ can be expressed within $G$.
	
	Note that $P_o$ is an observable proposition, and thus we have either $Pr(P_o) = 1$ or $Pr(P_o) = 0$, which means we have $Pr(P_o) = [Pr(P_o) = 1] = [Ex([P_o]) = 1]$, which is expressible within $G$ and thus $\wp{\IFSYMBOL\dots} F\in G$ holds. Similarly, we have $1- Pr(P_o) = [Ex([P_o]) = 0]$
	
	\item $\COMPOSE{C_1}{C_2}$: By the inductive hypothesis for $C_2$, we have $\wp{C_2}{F} \in G$, and with the inductive hypothesis for $C_1$ we get 
	$$\wp{\COMPOSE{C_1}{C_2}} F = \wp{C_1}{\wp{C_2}{F}} \in G$$
	
	\item $\INFER{p(P_u)\in i}{C_1}{C_2}$: We have
	$$\wp{\texttt{infer}\dots}{F} = [Pr(P_u) \in i] \cdot \wp{C_1}{F} + [Pr(P_u) \notin i] \cdot \wp{C_2}{F}$$
	By the inductive hypothesis, we can conclude that both $\wp{C_1}{F}\in G$ and $\wp{C_2}{F}\in G$ hold. It now suffices to show that $[Pr(P_u) \in i]\cdot \dots$ can be expressed within $G$.
	
	Note that we restricted infer statements to only include conditions which can be expressed with a comparison operator, which means that we have $[Pr(P_u) \in i] = [Pr(P_u) \sim q]$ for some $q$ and $\sim \in \{<,\leq, =, \neq, \geq, >\}$.
	Furthermore, we have $[Pr(P_u) \sim q] = [Ex([P_u]) \sim q] = [Ex([P_u]) \sim q \cdot Ex(1)]$ by the definition of $Ex$.  Thus the Iverson Bracket as generated by the weakest preexpectation of the infer statement can be expressed within $G$, and $\wp{\texttt{infer}\dots}{F}\in G$ holds.
	
	\item $\CHOOSE{x}{f}$: We will prove $\wp{\CHOOSE{x}{f}}{F} = F[x\mapsto f] \in G$ by structural induction on $F$. However, most of the cases and reasoning are fully analogous to the previously shown case for $\ASSIGN{x}{E}$, and thus we will only prove that $Ex(E)[x\mapsto f] \in G$ holds. All other steps of the proof are exactly the same as they were for $\ASSIGN{x}{E}$.

	Note that by assumption we can write $f$ as a finite sum $\sum_{0\leq i < n} p_i |E_i\rangle$ for some fixed expressions $E_i$ and probabilities $p_i$, so that we have $f(\sigma) = \sum_{0\leq i < n} p_i |\sigma(E_i)\rangle$. 
	By \Cref{BeliefElimTheoremNew} we have $Ex(E)\statesubst{x}{f} = Ex(E\leftarrow f)$, and $E\leftarrow f$ is an extended expression:
	$$E \leftarrow f = \sum_{i \in \mathbb{N}} p_i \cdot E[x / E_i]$$
	\item $\OBSERVE{x}$:
	The observe statement can be split into two parts: A "pure" observe statement \OBSERVE{x} which only conditions the belief state via $\beta_0|_{x=c}$, and an assignment which updates the value of $y$ with the (now deterministic) value of $x$.
	It suffices to consider the first part, i.e. $\OBSERVE{x}$, as the second part can be proven fully analogous to the case for the assignment, because after the observe statement has been executed there is only a single possible value for $x$ in the belief state.
	Afterwards the full observe statement follows from the proofs for $\OBSERVE{x}$, the assignment and sequential execution.
	
	We first consider the structure of the predicate $F$. In particular, we will exhaustively apply distributivity and commutativity so that the predicate has the form 
	$$F = \sum\limits_{0\leq i < n} \left(\prod\limits_{0\leq j < m} \alpha_j\right) \cdot \left(\prod\limits_{0\leq k < l} [\Phi_k]\right) \cdot Ex(E_i) $$
	This allows us to apply \Cref{Linearity} to split the predicate up, so that each individual sub-predicate contains only a product of guards in the form of Iverson Brackets, followed by a single $Ex(E)$:
	\begin{align*}
		&\wp{\OBSERVE{x}}{F}\\
		&= \wp{\OBSERVE{x}}{ \sum\limits_{0\leq i < n} \left(\prod\limits_{0\leq j < m} \alpha_j\right) \cdot \left(\prod\limits_{0\leq k < l} [\Phi_k]\right) \cdot Ex(E_i)}\\ 
		&=  \sum\limits_{0\leq i < n} \left(\prod\limits_{0\leq j < m} \alpha_j\right) \cdot \wp{\OBSERVE{x}}{\left(\prod\limits_{0\leq k < l} [\Phi_k]\right) \cdot Ex(E_i)}
	\end{align*}
	It now suffices to show $\wp{\OBSERVE{x}} {(\prod_{0\leq k < l} [\Phi_k]) \cdot Ex(E_i)} \in G$. By the definition of $\wpsymbol$ we get:
	\begin{align*}
		&\wp{\OBSERVE{x}} {\left(\prod_{0\leq k < l} [\Phi_k]\right) \cdot Ex(E_i)} \\
		&= \sum\limits_{c \in \mathbb{N}} Pr(x = c) \cdot \left(\prod_{0\leq k < l} [\Phi_k]|_{x=c}\right) \cdot Ex(E_i)|_{x=c}
	\end{align*}
	We will now consider one term of the sum for a fixed $c$, and consider how the term can be simplified to be expressed in $G$. We will apply \Cref{BeliefConditioningTheoremNew} in order to resolve the term $Ex(E_i)|_{x=c}$, which allows us to do the following:
	\begin{align*}
		&Pr(x = c) \cdot \left(\prod_{0\leq k < l} [\Phi_k]|_{x=c}\right) \cdot Ex(E_i)|_{x=c}\\
		&= Ex([x = c]) \cdot \left(\prod_{0\leq k < l} [\Phi_k]|_{x=c}\right) \cdot \frac{Ex(E_i\cdot [x=c])}{Ex([x=c])}\\
		&= \left(\prod_{0\leq k < l} [\Phi_k]|_{x=c}\right) \cdot Ex(E_i\cdot [x=c])\\
	\end{align*}
	It now only remains to show that $[\Phi_k]|_{x=c}$ is a valid guard within $G$, which is shown as follows by again applying \Cref{BeliefConditioningTheoremNew}:
	\begin{align*}
		[\Phi_k]|_{x=c} &= [Ex(E_1) \sim q \cdot Ex(E_2)]|_{x=c} \\
		&= [Ex(E_1)|_{x=c} \sim q \cdot Ex(E_2)|_{x=c}]\\
		&= \left[\frac{Ex(E_1\cdot [x=c])}{Ex([x=c])} \sim q \cdot \frac{Ex(E_2\cdot [x=c])}{Ex([x=c])}\right]\\
		&= [Ex(E_1\cdot [x=c]) \sim q \cdot Ex(E_2\cdot [x=c])]
	\end{align*}
	Here we made use of the fact that $\Phi_k$ was a valid guard within $G$ beforehand, which allowed us to rewrite the condition. Finally, canceling out the denominators on both sides of $\sim$ is a valid transformation for all the allowed operators in the set $\{<,\leq, =, \neq, \geq, >\}$.
	
\end{itemize}

\subsection{Proof for Theorem~\ref{BeliefConditioningTheoremNew}}\label{ConditioningProofNew}
To show: $Ex(E)|_{x = c} = \frac{Ex(E\cdot [x=c])}{Ex([x = c])}$.
We make a case distinction on $Ex([x=c])$: 
The case $Ex([x=c]) = 0$ follows as $Ex([x=c]) = 0$ implies $Pr(x=c) = 0$, which by definition of $\beta|_{x=c}$ yields $Ex(E)|_{x = c} = 0$.
Afterwards, $\frac{Ex(E\cdot [x=c])}{Ex([x = c])}$ follows by our definition for division by $0$ always yielding $0$.
If we have  for $Ex([x=c]) > 0$, then the theorem can be shown as follows:
\begin{align*}
	Ex(E)|_{x = c} 
	&= \sum_{\sigma \in \Sigma} \beta|_{x=c}(\sigma) \cdot \sigma(E) \\
	&= \sum_{\sigma \in \Sigma} \frac{\beta(\sigma) \cdot \SMODEL{\sigma}{x=c}}{\BMODEL{\beta}{x=c}} \cdot \sigma(E) \\
	&= \frac{1}{\BMODEL{\beta}{x=c}} \cdot \sum_{\sigma \in \Sigma} \beta(\sigma) \cdot \SMODEL{\sigma}{x=c} \cdot \sigma(E) \\
	&= \frac{1}{Pr(x=c)} \cdot \sum_{\sigma \in \Sigma} \beta(\sigma) \cdot \sigma(E \cdot [x=c]) \\
	&= \frac{1}{Ex([x=c])} \cdot Ex(E \cdot[ x=c]) \\
	&= \frac{Ex(E\cdot [x=c])}{Ex([x = c])}
\end{align*}
\subsection{Proof for Theorem~\ref{BeliefElimTheoremNew}}\label{BeliefElimProofNew}
To show: $Ex(E)\statesubst{x}{f} =  
Ex(E\leftarrow f)$ 
with $f= \sum_{0< i \leq n} p_i|E_i\rangle$\\
For intuition, keep in mind that in the proof below $\sigma$ is a variable assignment after assigning a new value to $x$ using the function $f$ and $\rho$ is a variable assignment before applying the function $f$. We now get:
\begin{align*}
	Ex(E)\statesubst{x}{f} &= \sum_{\sigma \in \Sigma} \beta\statesubst{x}{f}(\sigma) \cdot \sigma(E) \\
	&= \sum_{\sigma \in \Sigma} \left(\sum\limits_{\rho \in \Sigma, \rho[x\mapsto \sigma(x)] = \sigma} \beta(\rho) \cdot f(\rho, \sigma(x))\right) \cdot \sigma(E) \\
	&= \sum_{\sigma \in \Sigma} \sum\limits_{\rho \in \Sigma, \rho[x\mapsto \sigma(x)] = \sigma} \left( \beta(\rho) \cdot f(\rho, \sigma(x)) \cdot \sigma(E)\right)\\
	&=_1 \sum_{\rho \in \Sigma} \sum\limits_{\sigma \in \Sigma, \rho[x\mapsto \sigma(x)] = \sigma} \left( \beta(\rho) \cdot f(\rho, \sigma(x)) \cdot \sigma(E)\right)\\
	&= \sum_{\rho \in \Sigma} \beta(\rho) \cdot \sum\limits_{\sigma \in \Sigma, \rho[x\mapsto \sigma(x)] = \sigma} f(\rho, \sigma(x)) \cdot \rho[x\mapsto \sigma(x)](E)\\
	&= \sum_{\rho \in \Sigma} \beta(\rho) \cdot  \sum\limits_{\sigma \in \Sigma, \rho[x\mapsto \sigma(x)] = \sigma} f(\rho, \sigma(x)) \cdot \rho(E[x/\sigma(x)])\\
	&=_2 \sum_{\rho \in \Sigma} \beta(\rho) \cdot \sum_{c \in \mathbb{N}} f(\rho, c) \cdot \rho(E[x / c])\\
	&=_3 \sum_{\rho \in \Sigma} \beta(\rho) \cdot \sum_{0< i \leq n} p_i \cdot \rho(E[x / E_i])\\
	&= \sum_{\rho \in \Sigma} \beta(\rho) \cdot \rho(E\leftarrow f)\\
	&= Ex(E\leftarrow f)
\end{align*}
For $=_1$ we simply switch first summing up over all $\rho$ instead of summing over all $\sigma$, note that this does not affect the total summands in the expression, but merely re-orders them: If we consider two variable assignments $\sigma'$ and $\rho'$, then the term $ \beta(\rho') \cdot f(\rho', \sigma'(x)) \cdot \sigma'(E) $ appears exactly once in both the sum before reordering and in the sum after reordering if $\rho'[x\mapsto \sigma'(x)] = \sigma'$ holds, and does not appear in either sum otherwise.

For $=_2$ one should notice that the first sum essentially only considers variable assignments of the form $\sigma = \rho[x \mapsto c]$, as only those variable assignments fulfill the condition $\rho[x\mapsto \sigma(x)] = \sigma$. It is thus possible to quantify the sum over values instead of variable assignments.

For $=_3$, we use that we have $f= \sum_{0< i \leq n} p_i|E_i\rangle$, thus rather than summing over the probability to sample a particular value $c \in \Nats$ we can also consider the probability to sample an Expression $E_i$.

\subsection{Proof of the Loop Invariant}\label{wpCalc}
We would like to prove that the Predicate 
$$ I = (1 - Pr(\mathit{inCare})) \cdot Pr(d=1) + Pr(\mathit{inCare}) \cdot q $$
is a loop Invariant with regard to the program in \Cref{CaseStudyLoopy} with Post Expectation $F= Pr(d=1)$, thus we need to show:
$$ I \geq \Phi_F(I) = (1 - Pr(\mathit{inCare})) \cdot F + Pr(\mathit{inCare}) \cdot \wp{C}{I}$$ 
The proof is done as follows:
\begin{align*}
	\Phi_F(I) &= (1 - Pr(\mathit{inCare})) \cdot F + Pr(\mathit{inCare}) \cdot \wp{\texttt{infer}\dots}{I}\\
	&= (1 - Pr(\mathit{inCare})) \cdot Pr(d=1) + Pr(\mathit{inCare}) \cdot \wp{\texttt{infer}\dots}{I}\\
	&= (1 - Pr(\mathit{inCare})) \cdot Pr(d=1)\\ 
	&\qquad+ Pr(\mathit{inCare}) \cdot ([Pr(d=1) > q] \cdot \wp{\texttt{sample}\dots}{I}\\ 
	&\qquad \qquad \qquad \qquad+ [Pr(d=1) \leq q]\cdot \wp{\mathit{inCare}=false} I )\\
	&=_a (1 - Pr(\mathit{inCare})) \cdot Pr(d=1)\\ 
	&\qquad+ Pr(\mathit{inCare}) \cdot ([Pr(d=1) > q] \cdot q\\ 
	&\qquad \qquad \qquad \qquad+ [Pr(d=1) \leq q]\cdot \wp{\mathit{inCare}=false} I )\\
	&\leq_b (1 - Pr(\mathit{inCare})) \cdot Pr(d=1) \\
	&\qquad+ Pr(\mathit{inCare}) \cdot ([Pr(d=1) > q] \cdot q + [Pr(d=1) \leq q]\cdot q)\\
	&= (1 - Pr(\mathit{inCare})) \cdot Pr(d=1) + Pr(\mathit{inCare}) \cdot q\\
	&= I
\end{align*}
Sub proof for $=_a$:
\begin{align*}
	&\quad \;\,Pr(\mathit{inCare}) \cdot [Pr(d=1) > q] \cdot \wp{\texttt{sample}\dots}{I}\\ 
	&=_1 Pr(\mathit{inCare}) \cdot [Pr(d=1) > q] \cdot (\wp{l.5-l.7} {1 - Pr(\mathit{inCare}) \cdot Pr(d=1)}\\
	&\qquad\qquad\qquad\qquad\qquad\qquad\quad \;\; + \wp{l.5-l.7} {Pr(\mathit{inCare})} \cdot q)\\
	&=_2 Pr(\mathit{inCare}) \cdot [Pr(d=1) > q] \cdot (\wp{l.5-l.7} {(1 - Pr(\mathit{inCare})) \cdot Pr(d=1)} \\
	&\qquad\qquad\qquad\qquad\qquad\qquad\quad \;\; + Pr(\mathit{inCare}) \cdot q\large)\\
	&=_3 Pr(\mathit{inCare}) \cdot [Pr(d=1) > q] \cdot (0 \cdot \wp{l.5-l.7}  {Pr(d=1)} + 1 \cdot q)\\
	&=\, \,Pr(\mathit{inCare}) \cdot [Pr(d=1) > q] \cdot q
\end{align*}
$=_1$ follows from both \Cref{Linearity}, $=_2$ follows from \Cref{Unaffected}. Note that $\mathit{inCare}$ is an observable variable, and thus $Pr(\mathit{inCare})$ will be $0$ or $1$, if $Pr(\mathit{inCare}) = 0$, then the term as a whole will be $0$. 

For $=_3$ we consider the case with $Pr(\mathit{inCare}) = 1$ and one can show that in $\wp{l.5-l.7} {(1 - Pr(\mathit{inCare}) )\cdot Pr(d=1)}$  the expression $(1 - Pr(\mathit{inCare})$ will be unaffected by $\wpsymbol$ and is always $0$. 
This follows syntactically when we apply the rules of $\wpsymbol$ to determine the value of $\wp{l.5-l.7} {(1 - Pr(\mathit{inCare})) \cdot Pr(d=1)}$, but it can be understood as a more general case of \Cref{Unaffected}.
\\

Sub proof for $\leq_b$:
\begin{align*}
	& \;Pr(\mathit{inCare}) \cdot [Pr(d=1) \leq q] \cdot \wp{\mathit{inCare} = false} I\\
	&= Pr(\mathit{inCare}) \cdot [Pr(d=1) \leq q] \\
	&\qquad \cdot \left((1 - Pr(\mathit{inCare})) \cdot Pr(d=1) + Pr(\mathit{inCare}) \cdot q\right)[\mathit{inCare} \mapsto false]\\ 
	&= Pr(\mathit{inCare}) \cdot [Pr(d=1) \leq q] \cdot ((1 - Pr(\mathit{false}) \cdot Pr(d=1) + Pr(\mathit{false}) \cdot q)\\
	&= Pr(\mathit{inCare}) \cdot [Pr(d=1) \leq q] \cdot Pr(d=1) \\
	&\leq Pr(\mathit{inCare}) \cdot [Pr(d=1) \leq q] \cdot q
\end{align*}
The last inequality holds as we have $[Pr(d=1) \leq q] = 0$ unless we have $Pr(d=1) \leq q$, thus $q$ is an upper bound on $Pr(d=1) $ whenever the expression as a whole is non-zero
\end{document}